\DeclareMathOperator{\sign}{sign}
\DeclareMathOperator{\vol}{vol}
\DeclareMathOperator{\id}{id}
\DeclareMathOperator{\hol}{hol}
\DeclareMathOperator{\DD}{DD}
\DeclareMathOperator{\pt}{pt}
\DeclareMathOperator{\im}{im}
\DeclareMathOperator{\ii}{i}
\theoremstyle{plain}
\newtheorem{theorem}[equation]{Theorem}
\newtheorem{corollary}[equation]{Corollary}
\newtheorem{proposition}[equation]{Proposition}
\theoremstyle{definition}
\newtheorem{definition}[equation]{Definition}
\theoremstyle{remark}
\newtheorem{remark}[equation]{Remark}
\newtheorem{example}[equation]{Example}
\numberwithin{equation}{section}
\numberwithin{figure}{section}
\renewcommand{\cH}{{\mathcal H}}
\newcommand{\cG}{\mathcal{G}}
\newcommand{\cU}{{\mathcal U}}
\newcommand{\cP}{{\mathcal P}}
\newcommand{\To}{{\ \Rightarrow\ }}
\newcommand{\dd}{{\rm d}}
\newcommand{\CC}{{\mathbb C}}
\newcommand{\RR}{{\mathbb R}}
\newcommand{\ZZ}{{\mathbb Z}}
\newcommand{\HH}{{\mathbb H}}
\renewcommand{\a}{\alpha}
\renewcommand{\d}{\delta}
\newcommand{\KR}{K\hspace{-1pt}R}
\newcommand{\KO}{K\hspace{-1pt}O}
\newcommand{\KSp}{K\hspace{-1pt}Sp}
\newcommand{\<}{\langle}
\renewcommand{\>}{\rangle}
\newlength{\@thlabel@width}%
\newcommand{\thmenumhspace}{\settowidth{\@thlabel@width}{(1)}\sbox{\@labels}{\unhbox\@labels\hspace{\dimexpr-\leftmargin+\labelsep+\@thlabel@width-\itemindent}}}
\begin{document}

\begin{flushright}

\end{flushright}

\vskip 1cm

\title[Sign choices for orientifolds]{Sign choices for orientifolds}

  \author[P. Hekmati]{Pedram Hekmati}
  \address[Pedram Hekmati]
  {Department of Mathematics, University of Auckland, 1010 Auckland, New Zealand}
  \email{p.hekmati@auckland.ac.nz}

  \author[M.K. Murray]{Michael K.~Murray}
  \address[Michael K.~Murray]
  {School of Mathematical Sciences\\
  University of Adelaide\\
  Adelaide, SA 5005 \\
  Australia}
  \email{michael.murray@adelaide.edu.au}

  \author[R.J. Szabo]{Richard J.~Szabo}
  \address[Richard J.~Szabo]
  {Department of Mathematics, Maxwell Institute for Mathematical Sciences and The Higgs Centre for Theoretical Physics\\
  Heriot-Watt University\\
  Edinburgh, EH14 4AS \\
  United Kingdom}
  \address{Dipartimento di Scienze e Innovazione
  Tecnologica\\ and INFN Torino, Gruppo collegato di
  Alessandria, Universit\`a del Piemonte Orientale, Alessandria, Italy\\ and
Arnold--Regge Centre, Torino, Italy
  }
  \email{r.j.szabo@hw.ac.uk}
  
  \author[R.F. Vozzo]{Raymond F.~Vozzo}
\address[Raymond F.~Vozzo]
{School of Mathematical Sciences\\
University of Adelaide\\
  Adelaide, SA 5005 \\
  Australia}
\email{raymond.vozzo@adelaide.edu.au} 

\thanks{The authors acknowledge support under the  Australian
Research Council's {\sl Discovery Projects} funding scheme (project numbers DP120100106, DP130102578 and DP180100383), the Consolidated Grant ST/P000363/1 
from the UK Science and Technology Facilities Council, the Marsden Foundation (project number 3713803), and the Action MP1405 QSPACE from the European Cooperation in Science and Technology
(COST).\\
%Report no.: \ EMPG--19--13.\\
We thank David Baraglia and Jonathan Rosenberg for helpful discussions.}

\subjclass[2010]{}

\begin{abstract}  We analyse the problem of assigning sign choices to O-planes in orientifolds of type~II string theory. We show that there exists a sequence of invariant $p$-gerbes with $p\geq-1$, which give rise to sign choices and are related by coboundary maps. We prove that the sign choice homomorphisms stabilise with the dimension of the orientifold and  we derive topological constraints on the possible sign configurations. Concrete calculations for spherical and toroidal orientifolds are carried out, and in particular  we exhibit a four-dimensional orientifold where not every  sign choice is geometrically attainable. We elucidate how the $K$-theory groups associated with invariant $p$-gerbes for $p=-1,0,1$ interact with the coboundary maps. This allows us to interpret a notion of $K$-theory due to Gao and Hori as a special case of twisted $\KR$-theory, which consequently implies the homotopy invariance and Fredholm module description of their construction.  \end{abstract}
\maketitle
\tableofcontents

\bigskip

\section{Introduction\label{sec:Intro}} 

In this paper we study the Real Brauer group and related structures on {orientifolds}, that
is, pairs $(M,\tau)$ consisting of a manifold $M$ equipped with an
involution $\tau\colon M\to M$. The fixed point set of the involution is $M^\tau$ and its connected components will be
called {orientifold planes} or {O-planes} for short. Orientifolds are more commonly refered to as
Real manifolds in the mathematics literature; the present terminology is borrowed
from string theory where these give backgrounds
which are important for model building and understanding
T-duality~\cite{AngSag,DFM}. 

O-planes can carry a positive or negative
Ramond--Ramond charge, which
for consistency of the string background must be cancelled by the
inclusion of suitable D-brane
charges; conversely, string backgrounds with D-branes require
inclusion of suitable orientifold planes. The collection of discrete O-plane charges for a given orientifold
background is specified mathematically by a sign choice, which is an assignment of an element $\pm\,1$ of $\ZZ_2$ to each O-plane,
or equivalently a cocycle in $H^0(M^\tau, \ZZ_2)$. The sign choices
determine the projection of the $U(n)$ gauge groups of the Chan--Paton factors when
$n$ coincident D-branes are placed on top of an O-plane: this is $O(n)$ for sign
choice $+1$ and $Sp(n)$ for sign choice $-1$. 

A general condition for
the allowed distributions of sign choices for a given orientifold
$(M,\tau)$ is not presently known. A related open problem is to define a
suitable variant of $\KR$-theory classifying D-branes on orientifolds
for an arbitrarily given sign choice which is not necessarily
constant over the different O-planes. The $K$-theory and cohomology classification of the spectrum
of O-plane charges for a given orientifold is discussed in~\cite{Witten,Gukov,Hori,OlsSza,deBDHKMMS,BGS}. Real continuous trace
$C^*$-algebras and their sign choices were discussed in
\cite{Ros2013}, where a corresponding version of $\KR$-theory was also proposed. The purpose of this paper is to
provide some detailed answers to these questions from a geometric
perspective.

In orientifolds of type~II string theory, the behaviour of fields in
the NS--NS sector influences the allowed sign choices. At
the lowest level there is the dilaton field, which is a function on
spacetime $M$. Next, a globally defined Neveu--Schwarz $B$-field plays the role of a magnetic
field on a D-brane and so is naturally associated to a line bundle on
spacetime. More generally, a $B$-field with non-vanishing $H$-flux is
associated with a bundle gerbe on spacetime; a non-trivial $2$-torsion
$H$-flux on an O-plane reverses its sign choice, as will also follow
by our geometric considerations below. 

We therefore wish to consider the mathematical problem of assigning a sign choice to certain geometric objects over $(M,\tau)$. 
For instance in the simplest case of a Real function, that is, a
function $f \colon M \to
U(1)$ which satisfies $f \circ \tau = \bar f$, we have $f(m) \in
\ZZ_2$ at a fixed point $m\in M^\tau$ so  
it defines a sign choice. More generally, we show that there is a
sequence of natural geometric objects on $M$ to which sign choices can be allocated. We also address the fundamental question of whether every 
sign choice can arise from one of these geometric objects,
particularly those that encode the data of a string compactification.  

Of particular  interest, from the string theory perspective, will be the first three classes of these geometric 
objects: Real  functions as discussed above, $\ZZ_2$-equivariant line
bundles, and Real bundle gerbes. These have  characteristic classes in
the equivariant sheaf cohomology groups $H^0(M; \ZZ_2, \cU^1)$,
$H^1(M; \ZZ_2, \cU^0)$ and $H^2(M; \ZZ_2, \cU^1)$, respectively, that
we will discuss below.  Each of these groups comes equipped with a {\em sign choice map} which is a homomorphism  $\sigma_i$ mapping into $H^0(M^\tau, \ZZ_2)$. There are connecting homomorphisms $\partial_i$ between these cohomology groups giving rise to the commutative diagram
\begin{equation}\label{eq:connhom}
 \begin{gathered}   \xymatrix{
         H^0(M; \ZZ_2, \cU^1)     \ar@{->}[dr]^{ \sigma_0} \ar@{->}[d]^{ \partial_1}  &      \\
                H^1(M; \ZZ_2, \cU^0)        \ar@{->}[r]^{ \sigma_1}    \ar@{->}[d]^{ \partial_2}      &\ \ \ \  H^0(M^\tau, \ZZ_2)    \\
        H^2(M; \ZZ_2, \cU^1) \ar@{->}[ur]_{\sigma_2}\ar@{->}[d]^{ \partial_3}  &  \\
           \vdots  &  }
        \end{gathered}
\end{equation}
As the vertical arrow $\partial_3$ suggests, this is part of a more general story. 
We denote by ${\mathit\Sigma}_i(M)$ the image of the sign choice homomorphisms $\sigma_i$ in $H^0(M^\tau, \ZZ_2)$ and  call such elements {\em geometric sign choices}. 
  If $M$ has dimension $d$, we find that these groups coincide in
  degrees $d-1$ onwards. As a result, 
$$
{\mathit\Sigma}_0(M)  \subseteq {\mathit\Sigma}_1(M) \subseteq \cdots \subseteq
{\mathit\Sigma}_{d-1}(M) = {\mathit\Sigma}_{d}(M) = \cdots  \subseteq H^0(M^\tau, \ZZ_2)
\ ,
$$
and thus  the geometric sign choices are precisely ${\mathit\Sigma}_{d-1}(M)$.
We show that not all sign choices are geometric sign choices by 
exhibiting a four-manifold $M$ where ${\mathit\Sigma}_3(M) \neq H^0(M^\tau, \ZZ_2)$.  

In Section \ref{sec:constraining} we derive some constraints on the
possible geometric sign choices, that is, on the images ${\mathit\Sigma}_i(M)$
for $i=0,1,2$. For a Real bundle gerbe $(P,Y)$ over a $2$-connected
orientifold $M$ with vanishing complex Dixmier--Douady class $\DD(P)$, we prove that the 
sign choice is generically constant. More generally, we show in Proposition
\ref{prop:constraint} that the sign choice map $\sigma_2$ is 
completely determined by the periods of the
complex Dixmier--Douady class and the sign at a single fixed point  by the formula
 $$
 \sigma_2(P)(m) = (-1)^{\< \mu_{m,m'}^*(\DD(P)), S^3\>} \, \sigma_2(P)(m'\,) \ ,
 $$
 where the map $\mu_{m,m'} \colon S^3 \to M$ is constructed from the
 $\tau$-action and the fixed points $m,m'\in M^\tau$.  This kind of construction 
 depends only on knowing the sign choices for spheres and the formula extends to all invariant $p$-gerbes under appropriate connectedness conditions on $M$.
 In particular for the sign choice homomorphisms $\sigma_0$
 and $\sigma_1$ on connected respectively $1$-connected orientifolds,
 the Dixmier--Douady class is substituted by the winding class
 respectively first Chern class.
  
 We proceed to consider some concrete examples of orientifolds which
 appear in conventional string theory backgrounds. An important class
 of compactification spaces arising in string theory with background fluxes
 are spheres. Recall that $S^{p, q} \subset \RR^{p+q}$ 
denotes the $p+q-1$-dimensional sphere with the involution $\tau$
which changes the sign of the first $p$ coordinates
and leaves alone the remaining $q$ coordinates. We also carry out explicit calculations involving tori,
which play an important role in flat compactifications of string
theory and provide examples of multiply connected orientifolds.
The results are summarised in the following table:\\

\renewcommand{\arraystretch}{1.5}
\begin{center}
\begin{tabular}{|l|c|c|c|c|c|c| c|}
\hline
$(M,\tau)$  & $S^{3, 1}$& $S^{1, 1} $& $S^{1, 1}\times S^{1, 1}$  &$S^{1, 1}\times S^{1, 1}\times S^{1, 1}$   \\
\hline
\hline
$\dim(M)$ &  $3$ &  $1$ &  $2$ &  $3$  \\
\hline
$H^0(M^\tau, \ZZ_2) $  & $\ZZ_2^2$  &$\ZZ_2^2$ & $ \ZZ_2^4 $ & $\ZZ_2^8$ \\
\hline

${\mathit\Sigma}_0(M) $    & $\ZZ_2$ & $\ZZ_2^2$ & $\ZZ_2^3$ & $\ZZ_2^4$ \\
\hline
$H^1(M; \ZZ_2, \cU^0) $  & $\ZZ_2$   & $\ZZ_2^2$   & $\ZZ_2^3\oplus\ZZ$  & $\ZZ_2^4\oplus\ZZ^3$  \\

${\mathit\Sigma}_1(M) $    &   $\ZZ_2$  & $\ZZ_2^2$ & $ \ZZ_2^4 $& $\ZZ_2^7$ \\
\hline
$H^2(M; \ZZ_2, \cU^1)$  & $\ZZ_2 \oplus \ZZ$  &  $\ZZ_2^2$& $ \ZZ_2^3\oplus\ZZ $ &  $\ZZ_2^7 \oplus \ZZ$ \\

 ${\mathit\Sigma}_2(M)$&  $\ZZ_2^2$  & $\ZZ_2^2$ &$ \ZZ_2^4 $ & $\ZZ_2^8$\\

\hline
\end{tabular}\\
\end{center}

\bigskip

Note that ${\mathit\Sigma}_2(M) = H^0(M^\tau, \ZZ_2)$ in these special cases. In particular, the three-dimensional examples show that,
generally, for all sign choices to arise one must consider Real
bundle gerbes which have non-vanishing complex Dixmier--Douady class in the
Real Brauer group $H^2(M; \ZZ_2, \cU^1)$. We also
consider higher-dimensional spheres $S^{n,1}$ with $n\geq4$, where the
sign choice map $\sigma_2$ is not surjective. In the case of tori, our geometric
sign choice groups ${\mathit\Sigma}_1(M)$ reproduce the classification
of O-plane charges from~\cite{deBDHKMMS} using the Borel equivariant
cohomology group $H_{\ZZ_2}^2(M,\ZZ_2)$ to determine the allowed
globally defined $B$-field configurations on $M$.

The possible sign choices for orientifolds were also considered from a
geometric perspective by Gao
and Hori in~\cite{GaoHor} using constructions based directly on type~II
string theory in backgrounds with topologically trivial $H$-flux. 
In Section \ref{sec:KR-theory} we show how to interpret their
geometric data in terms of our constructions, and 
demonstrate that their version of $\KR$-theory can be understood as being twisted by either a Real function or an equivariant line bundle.  We show that in either
case it is equivalent to the $\KR$-theory twisted by a corresponding
Real bundle gerbe constructed by repeated application of the connecting 
homomorphisms
$$
H^{0}(M; \ZZ_2, \cU^1) \xrightarrow{ \ \partial_1 \ } H^{1}(M; \ZZ_2,
\cU^0) \xrightarrow{ \ \partial_2 \ } H^{2}(M; \ZZ_2, \cU^1) \ .
$$
In these cases the $\KR$-theory classes are twisted by a Real bundle
gerbe with trivial complex Dixmier--Douady class, whereby the possible
Real structures on the bundle gerbe are given by equivariant line
bundles (it is a torsor under $H^1(M;\ZZ_2,\cU^0)$). The spectrum
of O-plane charges is then determined by the fact that, on the connected
components of the fixed point set of the orientifold involution, the $\KR$-theory
becomes $\KO$-theory when the restriction of the bundle gerbe has
sign $+1$ and so is Real
trivial,
or $\KSp$-theory when the restriction of the bundle gerbe has
$2$-torsion Dixmier--Douady class with sign $-1$ corresponding to
non-trivial torsion $H$-flux~\cite{HekMurSza}. However, in general there are sign choices
that cannot be achieved in such a way unless the complex
Dixmier--Douady class (or $H$-flux of the string background) is non-trivial,\footnote{This seems to be at odds with the result of ~\cite{Ros2013} who used $C^*$-algebra
methods to show that any sign choice could arise, whereas we have found constraints.
This raises the possibility that the  correspondence
between  continuous trace $C^*$-algebras and groupoids for the Real case
is not  a  straightforward generalisation of the complex case.} in which case the more general geometric realisation of twisted $\KR$-theory from~\cite{HekMurSza} contains the appropriate sign choices.

A  summary of the contents of this paper is as follows. In Section \ref{sec:signs} we introduce invariant $p$-gerbes for $p\geq -1$ and define their associated sign choice homomorphisms at the level of cohomology. We derive some general properties of the sign choice maps and show that they stabilise at degree $\dim(M) -1$. In Section  \ref{sec:realgerbes} we restrict our attention to the case $p=1$, and explain how to incorporate the orientifold $B$-field and its holonomy.  As an application, we obtain sufficient conditions for the existence of Real structures on bundle gerbes when $M$ is 2-connected. We proceed to give geometric constructions of the sign choice maps in Section~\ref{sec:bgsign} for Real functions, equivariant line bundles and Real bundle gerbes. We further derive a general characterisation of the kernel and image of the sign
choice map $\sigma_2$ in terms of the sheaf cohomology of the space of
``physical points'' $M\slash \tau$ of the orientifold, which exhibits obstructions to the possible sign choices. In Sections~\ref{sec:constraints} and~\ref{prop:nonsimplyconn} we carry out detailed calculations for some spherical and toroidal orientifolds. Under certain connectedness conditions on $M$, we obtain a geometric constraint on the sign choices in terms of the periods of the complex characteristic classes of the $p$-gerbes. Finally, in Section~\ref{sec:KR-theory} we recast the orientifold data of Gao and Hori  \cite{GaoHor} in the language of the present paper and  prove that their notion of {\em $K$-theory twisted
  by an equivariant line bundle $L$} is nothing but  $\KR$-theory twisted by the Real bundle gerbe $\partial_2(L)$ under the connecting homomorphism in~\eqref{eq:connhom}.

%%%%%%%%%%%%%%%%%%%%%%%%%%%%%%%%%%%%%%%%%%%%%%%%%%%%%%%%%%%%%%%%%%%%%

\section{Sign choices for invariant $p$-gerbes}
\label{sec:signs}

In this section we develop some general theory of sign choices for {\em invariant $p$-gerbes}.
This is followed in subsequent sections by a discussion of the cases of particular geometric interest. 
Let $(M,\tau)$ be an orientifold and let $M^\tau\subseteq M$ denote the fixed point set of the involution.

\begin{definition} A {\em sign choice}  is an element of the group $H^0(M^\tau, \ZZ_2)$.
\end{definition}
In other words, a sign choice assigns to every O-plane, that is, every
connected component of $M^\tau$, a sign $\pm\,1$ in $\ZZ_2$. 

We are interested in various geometric objects that determine a sign choice on an orientifold. 
For instance, Real functions, equivariant line bundles and Real bundle gerbes
form the first three examples of {invariant $p$-gerbes} corresponding to the cases $p = -1, 0, 1$.
The case $p = 2$ would correspond to equivariant bundle $2$-gerbes
which we briefly consider in Section \ref{2-gerbe}.  For $p > 2$
the geometric theory of $p$-gerbes is not well understood, but we can work 
instead with the corresponding cohomology groups that classify them since
the sign choice maps factorise through these groups.   

There are two natural $\ZZ_2$-sheaves on the orientifold which in \cite{HekMurSza}
we denoted by $\cU(1)$ and $\overline{\cU}{(1)}$.  They are the equivariant sheaves
of smooth functions into $U(1)$, where the first is endowed with a trivial 
$\ZZ_2$-action on $U(1)$ and the second has the $\ZZ_2$-action induced by complex
conjugation.  In the present paper it will be useful 
to change notation and denote these by $\cU^{\epsilon}$ with $\epsilon
\in \ZZ_2 = \{0, 1\}$, where
$\cU^0 = \cU(1)$ and $\cU^1 = \overline{\cU}(1)$. We will continue to
use the symbol $\cU(1)$ for the  sheaf of smooth $U(1)$-valued  functions.
With this notation it follows that $p$-gerbes, up to appropriate equivalence, are classified 
by $H^{p+1}(M, \cU(1))$.  We denote by $H^k(M; \ZZ_2, \cU^\epsilon)$  the Grothendieck equivariant sheaf cohomology \cite{Gro} of the 
equivariant sheaf $\cU^\epsilon$. 

An  {\em invariant $p$-gerbe} is a $p$-gerbe 
on an orientifold that is Real if $p$ is even and $\ZZ_2$-equivariant if $p$
is odd. They are classified up to appropriate equivalence by $H^{p+1}(M; \ZZ_2, \cU^{[p+2]})$, where
 $[k]$ denotes the reduction modulo $2$ of $k \in \ZZ$. 
Thus the first three types of invariant $p$-gerbes are:
\begin{itemize}
\item $p=-1$: Real functions from $M$ to $U(1)$ classified by $H^0(M; \ZZ_2, \cU^1)$.
\item $p=0$: equivariant line bundles on $M$ classified by $H^1(M; \ZZ_2, \cU^0)$.
\item $p=1$: Real bundle gerbes on $M$ classifed by $H^2(M; \ZZ_2, \cU^1)$.
\end{itemize}
Physically, for $p>1$ the invariant $p$-gerbes correspond respectively to backgrounds with Ramond--Ramond
fields on O-planes of dimension $p'\equiv p-2$ and $p'\equiv p-1$
modulo~$4$~\cite{BGS,Keur}. 

In particular when $M$ is a point
we have \cite{HekMurSza}
$$
H^{p+1}(\pt; \ZZ_2, \cU^{[p+2]}) = \ZZ_2
$$
and 
$$
H^{p+1}(\pt; \ZZ_2, \cU^{[p+1]}) = 0 \ .
$$
In other words, over a point there are exactly two equivalence classes of invariant $p$-gerbes. 
These are equalities as the group $\ZZ_2$ has no non-trivial automorphisms. 

If $m \in M^\tau$ is a fixed point, then the 
inclusion $\iota_m \colon \{m \} \hookrightarrow M$ is a morphism of
Real manifolds and we can use the corresponding pullbacks to define
restriction maps. 
If  $\xi \in  H^{p+1}(M; \ZZ_2, \cU^{[p+2]})$ we define 
$$
\sigma_{p+1}(\xi)(m) = \iota_m^*(\xi) \ \in \ H^{p+1}(\{m\}; \ZZ_2,
\cU^{[p+2]}) = \ZZ_2 \ .
$$
 Consider a continuous path $C$ in $M^\tau$ joining $m$ to $m'$.  Then if $p$ is odd we  have
$$
H^{p+1}(\{m\}; \ZZ_2,  \cU^{[p+2]}) = H^{p+1}(C; \ZZ_2,  \cU^{[p+2]})
= H^{p+1}(\{m'\}; \ZZ_2,  \cU^{[p+2]}) \ ,
$$
which shows that $\sigma_{p+1}(\xi)(m) = \sigma_{p+1}(\xi)(m'\,)$ and thus $\sigma_{p+1}(\xi) \in H^0(M^\tau, \ZZ_2)$.  We call this 
the sign choice of the $p$-gerbe $\xi$. 

From this construction it follows that the sign choice map 
\begin{equation}
\label{eq:signchoice}
\sigma_{p+1}:H^{p+1}(M; \ZZ_2, \cU^{[p+2]})\longrightarrow H^0(M^\tau,
\ZZ_2)
\end{equation}
is a homomorphism that commutes with pullbacks.  In fact,
it follows easily that any sign choice 
map that is a homomorphism and is compatible with pullbacks must be this one.  This will be useful 
when considering more geometric constructions of the sign choice map below. 

We showed in \cite{HekMurSza} that there exist long exact sequences relating the equivariant 
sheaf cohomology of $\cU^0$ and $\cU^1$.  For  $\epsilon  \in \ZZ_2$ we have
\begin{equation}
\label{eq:beginlongsequence}
\xymatrix@R=2ex{0 \ 
\ar[r] & \   H^0(M; \ZZ_2, \cU^\epsilon ) \ \ar[r] & \   H^0(M, \cU(1)) \  \ar[r]^{\!\!\!\!\!\!1 \times \tau^*} & \   H^0(M; \ZZ_2, \cU^{[\epsilon+1]} ) \ \ar`r[d]`[lll]`[ddlll] `[ddll][ddll]&\\
&&&&\\
 & \   H^1(M; \ZZ_2, \cU^\epsilon ) \ \ar[r] & \   H^1(M, \cU(1)) \ \ar[r]^{\!\!\!\!\!\!1 \times \tau^*} & \   H^1(M; \ZZ_2, \cU^{[\epsilon+1]} ) \ \ar`r[d]`[lll]`[ddlll] `[ddll][ddll]&\\
&&&&\\
 & \   H^2(M; \ZZ_2, \cU^\epsilon) \ \ar[r] & \   H^2(M, \cU(1)) \ \ar[r]^{\!\!\!\!\!\!1 \times \tau^*} & \   H^2(M; \ZZ_2, \cU^{[\epsilon+1]} ) \ \ar[r]  & \ \cdots \ .
}
\end{equation}
The connecting homomorphisms yield a sequence of maps 
\begin{equation}
\label{eq:connectinghoms}
H^{0}(M; \ZZ_2, \cU^1) \xrightarrow{ \ \partial_1 \ } H^{1}(M; \ZZ_2,
\cU^0) \xrightarrow{ \ \partial_2 \ } H^{2}(M; \ZZ_2, \cU^1)
\xrightarrow{ \ \partial_3 \ } H^{3}(M; \ZZ_2, \cU^0)
\xrightarrow{ \ \partial_4 \ } \cdots \ .
\end{equation}
Note that there is no reason to expect that $\partial_{k+1}
\circ \partial_k = 0$.  Indeed for $M = \pt$, the sequence becomes
\begin{equation}
\label{eq:connectinghomspoint}
\ZZ_2 \xrightarrow{ \ \partial_1 \ } \ZZ_2 \xrightarrow{ \ \partial_2
  \ }
\ZZ_2 \xrightarrow{ \ \partial_3 \ } \ZZ_2 \xrightarrow{ \ \partial_4
  \ } \cdots \
\end{equation}
and by the long exact sequences
\eqref{eq:beginlongsequence} it follows that these are all isomorphisms,
and in fact
equalities.  Since $\iota_m$ is a Real map,  the induced maps 
between the long exact sequences \eqref{eq:connectinghoms} and \eqref{eq:connectinghomspoint}
commute.  But $\sigma_{p+1}(\xi)(m) = \iota_m^*(\xi)$, so it follows that the connecting
homomorphisms commute with the sign choice maps and we have  

\begin{proposition}
\label{prop:sign-connecting}
$ 
\sigma_{p+2} \circ \partial_{p+2} = \sigma_{p+1}
$ \ 
for $p = -1, 0, 1, 2, \dots$.
\end{proposition}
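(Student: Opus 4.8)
The plan is to verify the identity pointwise on $M^\tau$ and reduce everything to the naturality of the connecting homomorphism. Since both $\sigma_{p+2}\circ\partial_{p+2}(\xi)$ and $\sigma_{p+1}(\xi)$ are elements of $H^0(M^\tau,\ZZ_2)$, it suffices to show that they agree as functions on the fixed-point set, i.e.\ that they take the same value at every $m\in M^\tau$. The sign choice maps are defined through the restriction $\iota_m^*$ along the Real inclusion $\iota_m\colon\{m\}\hookrightarrow M$ (see \eqref{eq:signchoice}), so the whole statement will follow once I compare the connecting homomorphism $\partial_{p+2}$ on $M$ with its counterpart over the point $\{m\}$.

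First I would establish the naturality square. The long exact sequences \eqref{eq:beginlongsequence} arise from a short exact sequence of equivariant sheaves, and $\iota_m$ being a morphism of Real manifolds induces a morphism of these short exact sequences and hence of the associated long exact sequences. By functoriality of the connecting homomorphism this yields the commuting relation
\[
\iota_m^*\circ\partial_{p+2}^M \;=\; \partial_{p+2}^{\{m\}}\circ\iota_m^*,
\]
relating the sequence \eqref{eq:connectinghoms} over $M$ to the sequence \eqref{eq:connectinghomspoint} over the point. Next I would invoke the point computation: over $\{m\}$ the sequence \eqref{eq:connectinghomspoint} reads $\ZZ_2\xrightarrow{\partial_1}\ZZ_2\xrightarrow{\partial_2}\cdots$, and by the long exact sequences \eqref{eq:beginlongsequence} every $\partial_k^{\{m\}}$ is an isomorphism $\ZZ_2\to\ZZ_2$, which is necessarily the identity since $\ZZ_2$ has no nontrivial automorphisms. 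Chaining these facts gives, for any class $\xi$ and any $m\in M^\tau$,
\[
\sigma_{p+2}\big(\partial_{p+2}(\xi)\big)(m)=\iota_m^*\big(\partial_{p+2}^M(\xi)\big)=\partial_{p+2}^{\{m\}}\big(\iota_m^*(\xi)\big)=\iota_m^*(\xi)=\sigma_{p+1}(\xi)(m),
\]
and since this holds at every fixed point the two elements of $H^0(M^\tau,\ZZ_2)$ coincide.

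The only step that requires genuine care is the naturality of $\partial$, since everything else is formal once it is in place. Concretely, I would need to check that $\iota_m^*$ really induces a morphism of the short exact sequences of equivariant sheaves underlying \eqref{eq:beginlongsequence}\,---\,that pullback commutes with the sheaf-level maps such as $1\times\tau^*$\,---\,so that the snake-lemma connecting map is transported correctly. This is a standard consequence of the functoriality of Grothendieck equivariant sheaf cohomology under Real maps, but it is the place where the argument genuinely uses that $\iota_m$ is a map of Real manifolds rather than merely of underlying spaces, so I expect it to be the main (if ultimately routine) obstacle.
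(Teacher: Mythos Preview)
Your argument is correct and follows essentially the same route as the paper: both use that $\iota_m$ is a Real map to obtain naturality of the connecting homomorphism with respect to restriction, combined with the fact that over a point each $\partial_k$ is the identity on $\ZZ_2$, and then conclude by evaluating at each fixed point. Your write-up is simply a more detailed unpacking of the paper's one-line justification.
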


In particular, this begins with the commutative diagram from Section~\ref{sec:Intro}:
\begin{equation}
\label{eq:3signs}
 \begin{gathered}   \xymatrix{
         H^0(M; \ZZ_2, \cU^1)     \ar@{->}[dr]^{ \sigma_0} \ar@{->}[d]^{ \partial_1}  &      \\
                H^1(M; \ZZ_2, \cU^0)        \ar@{->}[r]^{ \sigma_1}    \ar@{->}[d]^{ \partial_2}      &\ \ \ \  H^0(M^\tau, \ZZ_2)    \\
        H^2(M; \ZZ_2, \cU^1) \ar@{->}[ur]_{\sigma_2}\ar@{->}[d]^{ \partial_3}  &  \\
           \vdots  &  }
        \end{gathered}
\end{equation}

Let $M$ be a $d$-dimensional manifold.  Then $H^d(M, \cU(1)) =
H^{d+1}(M,\ZZ) = 0$, so  the long exact sequences
\eqref{eq:beginlongsequence} degenerate, and
it follows that $\partial_{d}$ is surjective and $\partial_k$ is an
isomorphism for $k > d$.  Let ${\mathit\Sigma}_p(M) = \im(\sigma_p)$ denote the image of the sign choice homomorphism and call its elements {\em geometric
   sign choices}. We conclude that 
$$
{\mathit\Sigma}_0(M)  \subseteq {\mathit\Sigma}_1(M) \subseteq \cdots \subseteq
{\mathit\Sigma}_{d-1}(M) = {\mathit\Sigma}_{d}(M) = {\mathit\Sigma}_{d+1}(M) = \cdots  \subseteq
H^0(M^\tau, \ZZ_2) \ .
$$

In the sequel we consider geometric constructions of sign choices in the particular 
cases of interest, and the general question of what sign choices on an orientifold can arise as geometric sign choices. 

%%%%%%%%%%%%%%%%%%%%%%%%%%%%%%%%%%%%%%%%%%%%%%%%%%%%%%%%%%%%%%%%%%%%%

\section{Real bundle gerbes and their connections}
\label{sec:realgerbes}

We are particularly interested in the higher
structures provided by invariant $p$-gerbes with $p=1$, as these are the
geometric data encoding the string orientifold backgrounds with NS--NS
$H$-flux. In this section we consider this case in more detail and describe how to properly incorporate the $B$-field into the picture.

\subsection{Real bundle gerbes}

We begin by recalling some basic facts about Real bundle
gerbes, referring the reader to~\cite{HekMurSza} for more details.

Let $M$ be a manifold and $Y \xrightarrow{\pi} M$ a surjective submersion. We denote by $Y^{[p]}$ the
$p$-fold fibre product of $Y$ with itself, that is, $Y^{[p]} = Y \times_M Y
\times_M \cdots \times_M Y$. This is a simplicial space $Y^{[\bullet]}$ whose face
maps are given by the projections $\pi_i \colon Y^{[p]} \to Y^{[p-1]}$
which omit the $i$-th factor. Let $\Omega^q(Y^{[p]})$ denote the space of differential $q$-forms on $Y^{[p]}$ and define 
$$
\delta \colon \Omega^q(Y^{[p-1]}) \longrightarrow \Omega^q(Y^{[p]})
\qquad\text{by}\quad \delta =  \sum_{i=1}^p\, (-1)^{i-1} \, \pi_i^* \ .
$$
These maps form an exact complex for all $q \geq 0$ called the {\em fundamental  complex} \cite{Mur}
$$
0 \longrightarrow \Omega^q(M) \xrightarrow{ \ \pi^* \ } \Omega^q(Y)
\xrightarrow{ \ \delta \ }  \Omega^q(Y^{[2]}) \xrightarrow{ \ \delta \
}
\Omega^q(Y^{[3]}) \xrightarrow{ \ \delta \ } \cdots \ .
$$
For any function $g \colon Y^{[p-1]} \to U(1)$, we define   $\delta(g)
\colon Y^{[p]} \to U(1)$ by $\delta(g) =\sum_{i=1}^p\, (-1)^{i-1}\, g \circ \pi_i$. Similarly, to every $U(1)$-bundle $P \to Y^{[p-1]}$ we associate a $U(1)$-bundle
$\delta(P) \to Y^{[p]}$ by
$$
\delta(P) = \pi_1^{-1}(P) \otimes \big(\pi_2^{-1}(P)\big)^* \otimes \pi_3^{-1}(P)
\otimes \cdots \ .
$$
One easily checks that $\delta(\delta(g)) = 1$ and that $\delta(\delta(P)) $ is canonically trivial.  

\begin{definition} A {\em bundle gerbe} $(P, Y)$ over $M$ is  a principal $U(1)$-bundle  $P \to Y^{[2]}$ together with a bundle gerbe multiplication defined by a bundle isomorphism over $Y^{[3]}$,
$$
 \pi_3^{-1}(P) \otimes \pi_1^{-1}(P) \longrightarrow \pi_2^{-1}(P) \ ,
$$
which is associative over $Y^{[4]}$. If  $P_{(y_1, y_2)}$ denotes the
fibre of $P$ over $(y_1, y_2)\in Y^{[2]}$, then associativity means that the diagram 
\begin{equation*}
\xymatrix{
P_{(y_1, y_2) } \otimes P_{(y_2, y_3) } \otimes P_{(y_3, y_4) } \ar[d]
\ar[r] & P_{(y_1, y_3) } \otimes P_{(y_3, y_4) } \ar[d] \\
P_{(y_1, y_2) } \otimes P_{(y_2, y_4) } \ar[r] &  P_{(y_1, y_4) }
}
\end{equation*}
commutes for all $(y_1, y_2, y_3, y_4) \in Y^{[4]}$.
\end{definition}

Bundle gerbes are classified up to stable isomorphism~\cite{MurSte2} by their
complex Dixmier--Douady class in the degree two sheaf cohomology
$$
\DD(P)\in H^2(M,\cU(1)) \ .
$$

Following \cite{Mou}, if $M$ is endowed with an involution $\tau \colon M \to M$ we define  a {\em Real structure} on a bundle gerbe $(P, Y)$ to be  a pair of maps $(\tau_P, \tau_Y)$, where $\tau_Y \colon Y \to Y$ is an involution 
covering $\tau \colon M \to M$, and $\tau_P \colon P \to P$ is a
 conjugate involution covering $\tau_Y^{[2]} \colon Y^{[2]}
\to Y^{[2]}$ and commuting with the bundle gerbe multiplication. Here
conjugate involution means that $\tau_P(p\, z) = \tau_P(p)\, \bar z$,
for $p\in P$ and $z\in U(1)$, and $\tau_P^2 = \id_P$. In the sequel we will suppress the subscripts on $\tau_P$ and~$\tau_Y$.

\begin{definition} A \emph{Real bundle gerbe} over $M$ is a bundle gerbe $(P, Y)$ over $M$ with a Real structure.
\end{definition}

In \cite{HekMurSza} we proved that Real bundle gerbes are classified up to Real stable isomorphism by their \emph{Real Dixmier--Douady class} in the degree two  equivariant sheaf cohomology group:
$$\DD_R(P)\in H^2(M; \ZZ_2, \cU^1 ) \ .$$ 

\subsection{Connective structures on Real bundle gerbes} 

Recall from \cite{Mur} that  if $(P, Y)$ is a bundle gerbe, a connection $A$ on the $U(1)$-bundle   $P \to Y^{[2]}$ is called  a {\em bundle
gerbe connection} if it respects the bundle gerbe multiplication. If $A$ is a bundle gerbe connection, then the curvature $F_A \in \Omega^2(Y^{[2]})$ satisfies $\delta(F_A) = 0$. Then the exactness of the 
fundamental complex implies that there exists $B \in \Omega^2(Y)$ such
that $F_A = \delta(B)$. This two-form is called a {\em curving} and the
pair $(A,B)$ is a {\em connective structure} on $(P,Y)$; in string
theory parlance, a connective structure is a $B$-field. As $\delta$
commutes with the exterior derivative $\dd$, we have  $\delta(\dd B) = \dd\delta(B) = 
\allowbreak \dd F_A  \allowbreak = 0$.  Hence $\dd B = \pi^*(H)$ for
some $H \in \Omega^3(M)$; in string theory the three-form $H$ is
called an $H$-flux. Moreover,
$\pi^*(\dd H) = \allowbreak \dd\pi^*(H) \allowbreak = \dd^2 B = 0$,
so the 3-curvature $H$ is closed and defines a de~Rham representative
for the complex Dixmier--Douady class of the bundle gerbe:
$$
\DD(P)  = \big[ \tfrac{1}{2\pi \ii}\, H\big] \ \in \ H^3(M, \ZZ) \ .
$$
    
Next let us consider the {\em dual bundle gerbe} $(P^*, Y)$. This is
defined by the same manifold $P$ but with the conjugate $U(1)$-action,
$(p,z) \mapsto  p\, \bar z$. If $A$ is a bundle gerbe connection on
$P$, then $- A$ yields a bundle gerbe connection on $P^*$. Indeed if
$\xi \in \ii\RR$, let $\iota_p(\xi) \in T_p P $ be the vertical vector
generated by the action of $U(1)$, and similarly let $\iota_p^*(\xi)
\in T_p P^* $. Since $\iota^*_p(\xi) =   - \iota_p(\xi)$ and
$A(\iota^*_p(\xi)) = - \xi$, we conclude that $-A$ is a connection on
$P^*$ with curvature $- F_A$. For  any choice of curving $B \in
\Omega^2(Y)$ for $A$, it follows that $- B$ is a  curving for $-A$ and the
3-curvature of the connective structure $(-A, -B)$ for $(P^*, Y)$ is given by $- H$. 

Let $(M, \tau)$ be an orientifold and  $(P, Y)$ a Real bundle gerbe 
over $M$. Then $\tau \colon P^* \to P $
is a bundle gerbe isomorphism.  
\begin{definition} A {\em Real connective structure} for $(P,Y)$ or
  \emph{orientifold $B$-field} is a
  $B$-field $(A,B)$ satisfying $\tau^*(A,B) = (-A,-B)$.
\end{definition}
\begin{proposition} Real connective structures for $(P,Y)$ exist.
\end{proposition}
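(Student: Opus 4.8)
The plan is to build $(A,B)$ by averaging an arbitrary connective structure against the Real structure and then correcting the curving so that it becomes genuinely anti-invariant. Since every bundle gerbe admits a bundle gerbe connection, I would start by choosing one, say $A_0$, on $P\to Y^{[2]}$. The key observation is that because $\tau$ is a \emph{conjugate} involution, $\tau(p\,z)=\tau(p)\,\bar z$, differentiating the action shows that $\tau$ carries the fundamental vector field $\iota_p(\xi)$ to $-\iota_{\tau(p)}(\xi)$. Consequently $-\tau^*(A_0)$ is again a connection on $P$, and since $\tau$ commutes with the bundle gerbe multiplication it is again a bundle gerbe connection. Setting $A=\tfrac12\,\big(A_0-\tau^*(A_0)\big)$ then gives a bundle gerbe connection, the set of such connections being affine, and using $\tau^2=\id$ one checks directly that $\tau^*(A)=-A$.

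Next I would produce the curving. As $A$ is a bundle gerbe connection we have $\delta(F_A)=0$, so exactness of the fundamental complex yields $B_0\in\Omega^2(Y)$ with $F_A=\delta(B_0)$. From $\tau^*(A)=-A$ together with the fact that the face maps $\pi_i$ intertwine $\tau$ on $Y$ with $\tau$ on $Y^{[2]}$ (so that $\delta$ commutes with $\tau^*$) one obtains $\tau^*(F_A)=-F_A$, and hence $\delta\big(\tau^*(B_0)+B_0\big)=0$. Applying exactness once more gives $\beta\in\Omega^2(M)$ with $\tau^*(B_0)+B_0=\pi^*(\beta)$.

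The main obstacle, and the only genuinely delicate point, is that naive symmetrisation of the curving does not land on an anti-invariant form: it produces exactly the obstruction $\pi^*(\beta)$. To cancel it I would set $B=B_0-\tfrac12\,\pi^*(\beta)$; since $\delta\circ\pi^*=0$ this leaves the identity $F_A=\delta(B)$ undisturbed. Verifying $\tau^*(B)=-B$ requires knowing that $\beta$ is $\tau$-invariant, which is where the argument must be handled with care: applying $\tau^*$ to $\tau^*(B_0)+B_0=\pi^*(\beta)$ and using $\pi\circ\tau=\tau\circ\pi$ gives $\pi^*(\beta)=\pi^*(\tau^*\beta)$, whence $\tau^*\beta=\beta$ by injectivity of $\pi^*$ (exactness of the fundamental complex at $\Omega^2(M)$). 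With $\tau^*\beta=\beta$ in hand a direct substitution yields $\tau^*(B)=-B$, so that $(A,B)$ is the desired Real connective structure.
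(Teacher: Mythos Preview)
Your proof is correct and follows essentially the same averaging-then-correcting strategy as the paper: symmetrise an arbitrary bundle gerbe connection to obtain a Real connection, observe that any curving for it fails to be anti-invariant only by the pullback of a $\tau$-invariant $2$-form on $M$, and subtract half of that pullback. You supply a bit more detail than the paper does (the fundamental-vector-field justification for $-\tau^*(A_0)$ being a connection, and the explicit commutation of $\delta$ with $\tau^*$), but the argument is otherwise the same.
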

\begin{proof}
The existence of bundle gerbe connections is shown in \cite{Mur}.
Real bundle
gerbe connections exist because if $a$ is any bundle gerbe connection, then 
$$
A = \tfrac{1}{2}\, a +  \tfrac{1}{2}\, \big( -\tau^*( a)\big)
$$
is a Real bundle gerbe connection. Any convex combination of
connection one-forms is also a connection one-form.  It follows that the curvature $F_A \in \Omega^2(Y^{[2]})$ 
also satisfies $\tau^*(F_A) = - F_A$.  Consider a choice of curving $B \in \Omega^2(Y)$
 satisfying $\delta(B) = F_A$. Hence 
$$
\delta\big(\tau^*( B)\big) = \tau^*(F_A) = - F_A = \delta(- B) \ .
$$
 It follows that $\delta( B + \tau^*( B)) = 0$ and thus there exists
 $\psi \in \Omega^1(M)$ such that $B + \tau^*( B) =  \pi^*(\psi)$.
 Applying the involution we obtain $\tau^*(B) + B =  \tau^*\, \pi^*(\psi)$, and since pullback of forms along $\pi$ is injective, we find $\tau^*( \psi) =  \psi$. Hence
 $$
 B - \tfrac{1}{2}\, \pi^*(\psi) = - \tau^*( B) + \tfrac{1}{2}\,
 \pi^*\big(\tau^*(\psi)\big) = - \tau^*\big(B - \tfrac{1}{2}\,
 \pi^*(\psi) \big) \ ,
 $$
and moreover 
$$
F_A = \delta\big( B - \tfrac{1}{2}\, \pi^*(\psi)\big)
$$
because $\delta \circ \pi^* = 0$. 
We conclude that a curving $B$ for a Real bundle gerbe connection can always be chosen such that $\tau^*(B) = - B$.
\end{proof}  
If $(A, B)$ is an orientifold $B$-field, it follows that the $H$-flux satisfies 
$\tau^*(H) = -  H$, so it gives a de Rham representative for the Real Dixmier--Douady class in the Borel equivariant cohomology with local coefficients $H^3(E\ZZ_2\times_{\ZZ_2}M,\ZZ(1))$~\cite{HekMurSza}:
$$
\DD_R(P) =  \big[ \tfrac{1}{2\pi \ii}\, H\big] \ \in \ H^3_{\ZZ_2}(M, \ZZ(1)) \ .
$$ 

\begin{example}[Real bundle gerbes on $S^1\times S^2$]
If $G$ is a compact simple Lie group with maximal torus $T$ then there
is a map, called the Weyl map, $T \times G/T \to G$ defined
by $(t, g\,T) \mapsto g\,t\,g^{-1}$ for $t\in T$ and $g\in G$. This 
is a finite covering with fibre the Weyl group over the open dense
subset of $G$ of regular elements.  In particular, if $G = SU(2)$ we
get a map
$S^1 \times S^2 \to S^3$ which is a double covering over $S^3$ minus the north and south poles which are the points $I$ and $-I$ in $SU(2)$, the only 
elements of $SU(2)$ whose eigenvalues are not distinct.  
 Under this map the basic bundle gerbe on $S^3$, whose complex Dixmier--Douady
 class generates $H^3(S^3,\ZZ)=\ZZ$, pulls back to 
a bundle gerbe on $S^1 \times S^2$ whose complex Dixmier--Douady class is twice a generator in $H^3(S^1 \times S^2, \ZZ) = \ZZ$.  This is because
its integral over $S^3$ is equal to $1$ but the Weyl map is two-to-one
on a dense open subset, so the integral of the pulled back form is
equal to $2$. 

The generator of $H^3(S^1 \times S^2, \ZZ) = \ZZ$ is given by a bundle
gerbe which is a cup product of a function and a line bundle; we call
it the cup-product bundle gerbe on $S^1\times S^2$. 
Let $L \to S^2$ be the Hopf bundle; the pullback of its Chern class
generates $H^2(S^1\times S^2,\ZZ)=\ZZ$. Let $Y = \RR \times S^2 \to S^1
\times S^2$ be the covering space where $S^1 = \RR / \ZZ$. Then $Y^{[2]}$ can be identified
with the set of points $(x, y, u) \in \RR \times \RR \times U(1)$ where $ x - y \in \ZZ$.  Define $P \to Y^{[2]} $ by $P_{(x, y, u)} = L^{x-y}_u$.  The bundle gerbe multiplication is immediate from 
$$
P_{(x, y, u)} \otimes P_{(y, z, u)} =L^{x-y}_u \otimes L^{y-z}_u
= L^{x-z}_u = P_{(x, z, u)} \ .
$$
It is straightforward to construct a connection and curving whose
$3$-curvature is $H=\dd(2\pi \ii \theta) \wedge \vol_{S^2}$, where $\theta\in [0,1)$ is a
local coordinate on $S^1$ and $\vol_{S^2}$ is the unit area form on $S^2$. 

The Weyl map in this case also has a simple description as 
\begin{align}\label{eq:WeylSU2}
S^1 \times S^2 \longrightarrow S^3 \ , \qquad 
\big((a, b) \,,\, (x, y, z) \big) \longmapsto (b\,x, b\,y, b\,z,a) \ .
\end{align}
There is no line bundle on $S^3$ which pulls back to the Hopf bundle on
$S^1\times S^2$ because all line bundles on $S^3$ are trivial.
We give the three-sphere $S^3$ the Real structure $(x, y, z,w) \mapsto (-x, -y, -z,w)$ making it $S^{3,1}$. 
It is then immediate from \eqref{eq:WeylSU2} that there are  two lifts of this Real structure to $S^1 \times S^2$: either to $\big((a, b) , (x, y, z) \big) \mapsto \big((a, -b) , (x, y, z) \big)$
making it 
$S^{1, 1} \times S^{0,3}$, or to $\big((a, b) , (x, y, z) \big) \mapsto \big((a, b) , (-x, -y, -z) \big)$ making it $S^{0,2} \times S^{3,0}$. 
Here $S^{1, 1}$ is the circle $S^1$ with the conjugation action on
complex numbers of modulus one and $S^{3,0}$ is the two-sphere $S^2$ with  the 
antipodal map.  These are quite different Real structures; on $S^{0,2} \times S^{3,0}$
the involution is free whereas the involution on $S^{1, 1} \times
S^{0,3}$ gives two orientifold planes $\{ \pm\, 1 \} \times S^2
$. By~\cite{HekMurSza} the only line bundes on $S^2$ which admit a
Real structure are those with even Chern class. Then using either the long exact sequence \eqref{eq:beginlongsequence} or
the spectral sequence for the Grothendieck equivariant cohomology one
can characterise the Real bundle gerbes in these two cases. The
calculations are straightforward and so we only summarise the
results:
\begin{itemize}

\item The Real Brauer group of  $S^{0,2} \times S^{3,0}$ equals 
$\ZZ$ and the Real Dixmier--Douady class is given by $n \mapsto
2n$. So the pullback of the basic bundle gerbe on $S^3$
whose complex Dixmier--Douady class is twice the generator of $H^3(S^1 \times S^2,\ZZ) $ admits a Real structure, but the 
cup-product bundle gerbe whose complex Dixmier--Douady class is the generator and whose $H$-flux is $H =  \dd(2\pi \ii\theta) \wedge \vol_{S^2}$,  satisfying the necessary condition $\tau^*(H) = -H$, does not.

\item The Real Brauer group of $S^{1, 1} \times S^{0,3}$ equals 
$\ZZ_2 \oplus \ZZ_2 \oplus \ZZ_2 \oplus \ZZ$ and the Real Dixmier--Douady class is given by $(n,m,k,l) \mapsto l$. So the 
cup-product bundle gerbe whose complex Dixmier--Douady class is the generator and whose $H$-flux is $H = \dd(2\pi \ii \theta) \wedge \vol_{S^2}$,
satisfying $\tau^*(H) = -H$,  admits a Real structure. 

\end{itemize}
\end{example}

\subsection{Holonomy of orientifold $B$-fields} 
\label{sec:Bfieldampl}
Let $(P, Y)$ be a bundle gerbe over  a closed oriented surface $\Sigma$ and choose a connective structure $(A, B)$.  As $H^3(\Sigma, \ZZ)= 0$, the bundle gerbe is trivial.  Fix a trivialisation $R \to Y$ with isomorphism $\delta(R) \simeq P$ and let $a $ be a connection on $R$.  Then $A $ and $\delta(a)$ are both bundle gerbe connections for $(P, Y)$.  Their difference is a one-form $A -\delta(a)  \in \Omega^1(Y^{[2]})$ satisfying $\delta(A - \delta(a)) = 0$. Hence there is a one-form $\a$ on $Y$ 
satisfying  $A = \delta(a) + \delta(\a) = \delta(a + \a)$.  In other words, we may choose a connection $a$ 
on $R$ such that $\delta(a) = A$.  It follows that $\delta(F_a) = F_A = \delta(B)$ and so 
$B - F_a = \pi^*(\mu_a)$ for some $\mu_a \in \Omega^2(\Sigma)$. If we
change the connection $a$ to $b = a + \pi^*(\delta(\rho))$ for some $\rho\in\Omega^1(\Sigma)$, then $B - F_b = B - F_a - \dd\pi^*(\rho)$ so that 
$$
\mu_a = \mu_b + \dd \rho\ .
$$
The {\em holonomy} of $(A, B)$ over $\Sigma$ is defined by 
$$
\hol(\Sigma; A, B) = \exp \Big( \int_\Sigma \, \mu_a  \Big) \ \in \
U(1)\ .
$$
We note that the integral depends on the orientation of $\Sigma$, but is independent of the choice of $a$. 
Any other trivialisation of the bundle gerbe is of the form $R \otimes \pi^*(T)$ for 
some line bundle $T \to \Sigma$.  The connection $a$ tensored by the pullback of a connection 
$\nabla$ on $T$ changes $\mu_a$  to $\mu_a - F_\nabla$, so the holonomy remains unchanged, 
$$
\exp \Big( \int_\Sigma \, \mu_a  \Big) \, \exp \Big(- \int_\Sigma \, F_\nabla \Big)
= \exp \Big( \int_\Sigma \, \mu_a  \Big) \ ,
$$
since the curvature $F_\nabla$ is an integral two-form. Hence the holonomy  depends only on the triple $(\Sigma; A, B)$.

Now let $(A,B)$ be a $B$-field on $M$. For any smooth map $\phi \colon \Sigma \to M$, we define the {\em holonomy of $(A, B)$ along $\phi$} by
$$
\hol(\phi ; A, B) = \hol\big(\Sigma; \phi^*(A), \phi^*(B)\big) \ .
$$
It is straightforward to check that the holonomies of connective structures for a bundle gerbe and its dual are related by 
$$
\hol(\phi; A, B) = \hol( \phi; -A, -B)^{-1} \ .
$$
Hence if $(A, B)$ is a Real connective structure for a Real bundle gerbe $(P,Y)$, we conclude that
$$
\hol(\phi; A, B) = \hol( \tau \circ \phi; -A, -B)^{-1} \ .
$$

In applications to the orientifold constructions of type~II string
theory, the $\ZZ_2$-action on the spacetime $M$ generated by the
involution $\tau$
is combined with orientation-reversal of the string worldsheet. In this case the surface
$\Sigma$ is not oriented and need not even be orientable, and the string
fields $\phi$ are smooth maps from $\Sigma$ to the quotient space
$X=M/\tau$ which represents the ``physical points'' of the orientifold
spacetime. This is achieved in the orientifold sigma-model by regarding $\phi$ as maps to the total
space of the fibration $M\to X$, and gauging the symmetry of the
string theory.
For this, we introduce the orientation double
cover $\widehat{\Sigma}\to\Sigma$ corresponding to the first
Stiefel--Whitney class $w_1(\Sigma)\in H^1(\Sigma,\ZZ_2)$, which is
canonically oriented with a canonical 
orientation-reversing involution
$\Omega:\widehat{\Sigma}\to\widehat{\Sigma}$ permuting the sheets and
preserving the fibres, called worldsheet parity. The string fields are then smooth maps
$\hat\phi:\widehat{\Sigma}\to M$ which are equivariant,
$$
\hat\phi\circ\Omega = \tau\circ\hat\phi \ .
$$
In \cite{SSW, GSW} they show that in this situation it is possible to define
the 
{\em unoriented surface holonomy} of a Jandl gerbe which is a 
square root of $\hol\big(\widehat{\Sigma};\hat\phi{}^*(A),\hat\phi{}^*(B)\big)$.
We may define the  \emph{orientifold $B$-field holonomy} $
\hol(\phi;A,B)$ of a Real bundle gerbe with Real connection by noting that every Real bundle gerbe  
is naturally a Jandl gerbe and applying the construction in \cite{SSW, GSW}. To see why this is true, note that in our language a Jandl structure on a gerbe $(P, Y)$
is a $\ZZ_2$-equivariant trivialisation of the natural $\ZZ_2$-action on $P\otimes \tau^{-1}(P)$.  For a Real bundle gerbe $(P, Y)$ this 
arises because
$P \otimes \tau^{-1}(P) \simeq \d L$
for $L_{(y, x)} = P^*_{(y, \tau( x))}$. The required switch isomorphism of  \cite{SSW}  $\hat s \colon P^*_{(x, \tau(y))} \to P^*_{(y, \tau(x))}$ is induced by $\tau \colon P_{(x, y)} \to P^*_{(\tau(x), \tau(y))}$.  It is a straightforward exercise to check that this satisfies
the required conditions to be a Jandl structure, so we can apply the results of \cite{SSW,GSW} to define $\hol(\phi;A,B)$. 
This defines precisely the $B$-field amplitude which was only
schematically discussed in~\cite{HekMurSza}. By \cite{SSW,GSW} it is
invariant under the combined actions of the involutions $\Omega$ and
$\tau$ which define the string orientifold construction and satisfies 
$$
\hol(\phi;A,B) =
\sqrt{\hol\big(\widehat{\Sigma};\hat\phi{}^*(A),\hat\phi{}^*(B)\big)}
\ .
$$

As a further application, we can turn the construction of Real
holonomy around and use it to provide sufficient conditions for the existence of Real structures on bundle gerbes over $(M,\tau)$. 
  
  \begin{proposition}\label{tautologicalgerbe}
  Let $(M, \tau)$ be an orientifold, where $M$ is $2$-connected, $M^\tau
  \neq \emptyset$ and $H$ is a three-form on $M$ with integral periods
  satisfying $\tau^*(H) = - H$.  Then there is a tautological Real bundle gerbe over $M$ with complex Dixmier--Douady class
  $\big[\frac1{2\pi\ii}\,H\big] \in H^3(M,\ZZ)$. 
  \end{proposition}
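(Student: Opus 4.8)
The plan is to equip the \emph{tautological bundle gerbe} associated with $H$ with a Real structure, the crucial input being that the hypothesis $\tau^*(H)=-H$ is exactly what converts the induced $\tau$-action on the fibres into a \emph{conjugate} involution. First I would recall the tautological construction, which is available precisely because $M$ is $2$-connected. Since $M^\tau\neq\emptyset$, choose a basepoint $m_0\in M^\tau$ and let $Y=\cP_{m_0}M$ be the space of paths $\gamma\colon[0,1]\to M$ with $\gamma(0)=m_0$, with $\pi\colon Y\to M$ the endpoint evaluation $\gamma\mapsto\gamma(1)$; as in \cite{Mur} this is a surjective submersion. A point of $Y^{[2]}$ is a pair $(\gamma_1,\gamma_2)$ of paths with a common endpoint, determining a loop $\ell=\gamma_1\ast\overline{\gamma_2}$. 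Because $\pi_1(M)=0$ the loop bounds a membrane $\Sigma\colon D^2\to M$ with $\partial\Sigma=\ell$, and because $\pi_2(M)=0$ any two membranes $\Sigma,\Sigma'$ with the same boundary glue to a two-sphere that bounds a three-ball $V\colon B^3\to M$; since $H$ has integral periods the quantity $\exp\big(2\pi\ii\int_V H\big)$ depends only on $\Sigma,\Sigma'$. I then define the fibre $P_{(\gamma_1,\gamma_2)}$ as the set of pairs $(\Sigma,z)$ with $z\in U(1)$, modulo the relation $(\Sigma,z)\sim(\Sigma',z')$ whenever $z'=z\,\exp\big(2\pi\ii\int_V H\big)$, with the evident $U(1)$-action. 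Concatenation of membranes furnishes the bundle gerbe multiplication over $Y^{[3]}$, and a standard computation with a curving built from $H$ identifies the complex Dixmier--Douady class as $\big[\tfrac1{2\pi\ii}H\big]$.

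Next I would build the Real structure $(\tau_P,\tau_Y)$. Since $\tau(m_0)=m_0$, the assignment $\tau_Y(\gamma)=\tau\circ\gamma$ is a well-defined involution of $Y$ covering $\tau$, with induced map $\tau_Y^{[2]}$ sending $(\gamma_1,\gamma_2)$ to $(\tau\circ\gamma_1,\tau\circ\gamma_2)$. I define $\tau_P$ on fibres by
$$
\tau_P\big([(\Sigma,z)]\big)=\big[(\tau\circ\Sigma,\bar z)\big]\ ,
$$
observing that $\tau\circ\Sigma$ is a membrane for the loop $(\tau\circ\gamma_1)\ast\overline{(\tau\circ\gamma_2)}$, so $\tau_P$ covers $\tau_Y^{[2]}$.

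Finally I would verify that this is a Real structure. The heart of the matter is well-definedness: if $(\Sigma,z)\sim(\Sigma',z')$ via a three-ball $V$, then $\tau\circ V$ fills the sphere built from $\tau\circ\Sigma$ and $\tau\circ\Sigma'$, and
$$
\exp\Big(2\pi\ii\int_{\tau\circ V}H\Big)=\exp\Big(2\pi\ii\int_{V}\tau^*(H)\Big)=\exp\Big(-2\pi\ii\int_{V}H\Big)=\overline{\exp\Big(2\pi\ii\int_{V}H\Big)}\ ,
$$
which is exactly the relation $(\tau\circ\Sigma,\bar z)\sim(\tau\circ\Sigma',\bar{z'})$; here the hypothesis $\tau^*(H)=-H$ enters in an essential way. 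The same conjugation shows $\tau_P(p\,w)=\tau_P(p)\,\bar w$ for $w\in U(1)$, so $\tau_P$ is a conjugate involution, while $\tau^2=\id$ yields $\tau_P^2=\id_P$. Compatibility with the bundle gerbe multiplication is routine, as applying $\tau$ commutes with the concatenation of membranes and distributes over the tensor products, the conjugations on the two factors combining to the conjugation on the product. Hence $(\tau_P,\tau_Y)$ is a Real structure and the underlying bundle gerbe has complex Dixmier--Douady class $\big[\tfrac1{2\pi\ii}H\big]$, as required.

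The main obstacle is precisely this well-definedness check together with the multiplicativity: one must carefully track the orientation-dependence of the integrals $\int_V H$ and confirm that the sign reversal $\tau^*(H)=-H$ induces complex conjugation (rather than an ordinary $U(1)$-translation) on every fibre in a manner compatible with the multiplication. The ancillary technical point---that $\cP_{m_0}M$ carries a smooth structure making $\pi$ a surjective submersion and the membrane construction smooth---is handled exactly as in the standard treatment of the tautological bundle gerbe \cite{Mur}.
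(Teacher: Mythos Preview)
Your proposal is correct and follows essentially the same route as the paper's own proof: both take $Y=\cP_{m_0}M$ based at a fixed point $m_0\in M^\tau$, build the tautological gerbe via spanning surfaces and the holonomy equivalence relation, define the Real structure by $(\Sigma,z)\mapsto(\tau\circ\Sigma,\bar z)$, and reduce well-definedness to the identity $\exp\big(\int_{\tau(V)}H\big)=\overline{\exp\big(\int_V H\big)}$ coming from $\tau^*(H)=-H$. The only cosmetic difference is normalisation (you write $\exp(2\pi\ii\int_V H)$ with $H$ having integer periods, the paper writes $\exp(\int_{B_S}H)$ with $\tfrac{1}{2\pi\ii}H$ integral), and you spell out the role of $\pi_2(M)=0$ in filling spheres a bit more explicitly than the paper does.
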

  \begin{proof}
First we recall the construction of a bundle gerbe $(P,Y)$ from an
integral three-form  $H$  on a $2$-connected manifold $M$.  Fix  a
basepoint $m \in M$ and let  $Y = \cP M$ be the space
of paths based at $m$  with endpoint evaluation as  projection to $M$.
If $p_1, p_2\in Y$ have the same endpoint, choose an oriented surface
$\Sigma$ in $M$
spanning them, that is, the boundary of $\Sigma$ is $p_1$ concatenated
with the oppositely oriented path $p_2$.  Then the fibre of $P \to
Y^{[2]}$ consists of all triples $\big((p_1, p_2), \Sigma, z\big)$, with $z\in
U(1)$, modulo the equivalence relation 
$\big((p_1, p_2), \Sigma, z\big) \equiv \big((p_1, p_2), \Sigma', z'\big)$
if   $\hol(\Sigma \cup \Sigma'; H)\, z = z'$. Here $\Sigma\cup\Sigma'$
is the closed surface obtained by gluing $\Sigma$ to the oppositely
oriented $\Sigma'$ along their common boundary, while $\hol(S; H)$ for
any closed oriented surface $S\subset M$ is the holonomy expressed as the standard Wess--Zumino--Witten term 
$$
\hol(S; H)  = \exp\Big( \int_{B_S}\, H \Big) \ ,
$$
where $B_S$ is a three-manifold  whose boundary is $S$. This is well-defined since $H$ is an integral form. 

Next we consider a Real version of this construction. There is a natural way to pair elements of $P_{(p_1, p_2)}$ with
elements of $P_{(p_2, p_1)}$.  Given $\big((p_1, p_2), \Sigma,
z\big)$, it can be paired with $\big((p_2, p_1), \Sigma, w\big)$ by
keeping the same surface $\Sigma$ but changing its orientation.  Then
we declare these two to pair to give $z\, w\in U(1)$.  More generally, we can pair $\big((p_1, p_2), \Sigma, z\big)$ and 
$\big((p_2, p_1), \Sigma', w\big)$ by first using the equivalence relation to replace $\Sigma'$ with $\Sigma$.

Now assume that the basepoint $m$ of $M$ is a fixed point of $\tau$, so that $\tau$ acts on everything.  Assume further that $\tau^*(H) = -H$.  We define a Real structure  $\tau$ by showing that 
$$
 \big((p_1, p_2), \Sigma, z\big) \longmapsto \big(( \tau(p_1), \tau(p_2)), \tau(\Sigma), \bar z\big)
 $$
descends through the equivalence relation to give a conjugate bundle gerbe isomorphism $P \to \tau^{-1}(P)$. 

The map $\tau$ satisfies $\tau^2 = 1$. If $\big((p_1, p_2), \Sigma, z\big) \equiv \big((p_1, p_2), \Sigma', z'\big)$, then $\hol(\Sigma \cup \Sigma'; H)\, z = z' $.
Now consider $\big((\tau(p_1), \tau(p_2)), \tau(\Sigma), \bar z\big)$
and $\big((\tau(p_1), \tau(p_2)), \tau(\Sigma'), \overline{z'}\,\big)$.
Then we find 
\begin{align*}
\hol\big( \tau(\Sigma) \cup \tau(\Sigma'); H\big) 
&= \hol\big(\tau(\Sigma) \cup \tau(\Sigma'); -\tau^*(H) \big)\\[4pt]
&= \hol(\Sigma \cup\Sigma'; -H) \\[4pt]
&= \overline{\hol(\Sigma \cup\Sigma'; H )} \ ,
\end{align*} 
so that $\hol\big(\tau(\Sigma) \cup\tau(\Sigma'); H\big)\, \bar z = \overline{z'}$ as required.

Finally, we check that $\tau$ commutes with the bundle gerbe
multiplication. On the one hand, the multiplication is given by
$$
\big((p_1, p_2),\Sigma, z\big) \, \big((p_2, p_3),\Sigma', z'
\big) = \big((p_1, p_3),\Sigma \cup\Sigma', z\, z' \big) \ .
$$
On the other hand, under $\tau$ we obtain
$$ 
\big((\tau(p_1),\tau(p_2)),\tau(\Sigma), \bar z\big) \, \big((\tau(p_2),\tau(p_3)),\tau(\Sigma'), \overline{z'}\, \big) = \big((\tau(p_1),\tau(p_3)), \tau(\Sigma \cup\Sigma'), \overline{z\, z'}\, \big)
$$
as required.
\end{proof}

\begin{remark}
In general, the class of the three-form $H$ of Proposition~\ref{tautologicalgerbe} does not lift to the Real
Dixmier--Douady class of the tautological Real bundle gerbe in Borel
equivariant cohomology with local coefficients. The conditions under
which it does, and hence represents the $H$-flux of an orientifold $B$-field, are discussed in~\cite{HekMurSza}.
\end{remark}

%%%%%%%%%%%%%%%%%%%%%%%%%%%%%%%%%%%%%%%%%%%%%%%%%%%%%%%%%%%%%%%%%%%%%

\section{Geometric constructions of sign choices}
\label{sec:bgsign}

In this section we explain how the sign choice homomorphism
\eqref{eq:signchoice} can be constructed geometrically from the
invariant $p$-gerbes when $p = -1, 0, 1$, that is, for Real functions, $\ZZ_2$-equivariant line bundles and Real bundle gerbes.

\subsection{Sign choice for Real functions}
 
Let $g$ be a Real function on $(M,\tau)$. If $m\in M^\tau$ then $g(m) = \overline{g(m)} = \pm\, 1$, so there is a natural sign choice homomorphism
 $$
\sigma_0 \colon H^0(M; \ZZ_2, \cU^1) \longrightarrow H^0(M^\tau, \ZZ_2)
$$
defined by evaluation $\sigma_0(g)(m) = g(m)$. This map is
multiplicative in the sense that $\sigma_0(g\,h) = \sigma_0(g)\,
\sigma_0(h)$. If  $g = f \, ( \bar f \circ \tau ) $ for an arbitrary
function $f$ on $M$ and $m$ is a fixed point of
 $\tau$, then $g(m) = f(m) \, \overline{ f(m)} = 1$ so that
 $\sigma_0(f\, (\bar f \circ \tau)) = 1$. In other words, the composition 
 $$
 H^0(M, \cU(1)) \xrightarrow{f \mapsto f\, (\bar f \circ \tau) }  H^0(M; \ZZ_2, \cU^1) 
 \xrightarrow{ \ \sigma_0 \ } H^0(M^\tau, \ZZ_2)
 $$
 is equal to $1$. In fact, there is a stronger result in this case
 given by
\begin{proposition}
\label{prop:Hzero}
If $M$ is $1$-connected, then the sequence 
$$
0 \ \longrightarrow \   H^0(M; \ZZ_2, \cU^0) \ \longrightarrow \   H^0( M, \cU(1)) \ \xrightarrow{ f \mapsto f\, (\bar f \circ \tau)} \   H^0(M; \ZZ_2,  \cU^1) \ \xrightarrow{ \ \epsilon \ } \ \ZZ_2 \ \longrightarrow \ 0
$$
is exact.
\end{proposition}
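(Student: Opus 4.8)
The plan is to rewrite all three cohomology groups as concrete groups of functions and then to feed in the hypothesis $\pi_1(M)=0$ through the lifting of $U(1)$-valued functions to $\RR$. Here $H^0(M;\ZZ_2,\cU^0)$ is the group of $\tau$-invariant functions $f\colon M\to U(1)$, that is, those with $f\circ\tau=f$; the middle term $H^0(M,\cU(1))$ is the group of all smooth functions $M\to U(1)$; and $H^0(M;\ZZ_2,\cU^1)$ is the group of Real functions $g$, that is, those with $g\circ\tau=\bar g$. The first map is the obvious inclusion and the second is $\phi(f)=f\,(\bar f\circ\tau)$, which lands in Real functions since $\phi(f)\circ\tau=(f\circ\tau)\,\bar f=\overline{\phi(f)}$.

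I would first dispose of the two left-hand exactness statements, neither of which needs connectedness. Injectivity of the inclusion is immediate, and $\phi(f)=1$ is equivalent to $f\circ\tau=f$, so $\ker\phi$ equals the image of the inclusion. Both facts are in any case the beginning of the long exact sequence \eqref{eq:beginlongsequence} with $\epsilon=0$, whose next map is the connecting homomorphism $\partial\colon H^0(M;\ZZ_2,\cU^1)\to H^1(M;\ZZ_2,\cU^0)$; the remaining task is then to identify $\epsilon$ with $\partial$ followed by an isomorphism of its image onto $\ZZ_2$.

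This is where $1$-connectedness enters. Since $\pi_1(M)=0$ gives $H^1(M,\ZZ)=0$, every smooth $g\colon M\to U(1)$ lifts as $g=e^{2\pi\ii h}$ for some $h\colon M\to\RR$, unique up to an additive integer constant. For a Real function the relation $g\circ\tau=\bar g$ translates into $h\circ\tau+h\in\ZZ$; being continuous and integer-valued on the connected manifold $M$, this is a constant $n\in\ZZ$. I would then set $\epsilon(g)=n\bmod 2$, noting that replacing $h$ by $h+k$ changes $n$ to $n+2k$ so that $\epsilon$ is well defined, and that it is a homomorphism since lifts add. Surjectivity of $\epsilon$ is witnessed by the constant Real function $g=-1$, whose lift $h\equiv\tfrac12$ gives $n=1$. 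When $M^\tau\neq\emptyset$ one checks that $\epsilon(g)=(-1)^n$ is precisely the common value of $g$ at the fixed points, linking $\epsilon$ to the sign choice map $\sigma_0$.

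It remains to prove exactness at $H^0(M;\ZZ_2,\cU^1)$, that is, $\ker\epsilon=\im\phi$, which I would do by exhibiting both inclusions. If $g=\phi(f)$ with $f=e^{2\pi\ii\ell}$, then $h=\ell-\ell\circ\tau$ is a lift of $g$ with $h\circ\tau+h=0$, so $\epsilon(g)=0$ and $\im\phi\subseteq\ker\epsilon$. Conversely, if $\epsilon(g)=0$ then $n=2k$ is even; setting $f=e^{\pi\ii h}$ and using $h\circ\tau=-h+2k$ gives $\bar f\circ\tau=e^{\pi\ii h}$, whence $f\,(\bar f\circ\tau)=e^{2\pi\ii h}=g$, so $g\in\im\phi$. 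The only step with genuine content is this reverse inclusion: one must manufacture an honest preimage, and the natural half-lift $f=e^{\pi\ii h}$ works precisely because the parity obstruction $n$ is even, which is exactly the hypothesis $\epsilon(g)=0$. Everything else is formal once the lift $h$ is available, and that lift is supplied by $H^1(M,\ZZ)=0$.
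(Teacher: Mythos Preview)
Your proposal is correct and follows essentially the same approach as the paper: lift a Real function $g$ to $h\colon M\to\RR$ using $H^1(M,\ZZ)=0$, define $\epsilon(g)$ as the mod-$2$ class of the constant integer $h+h\circ\tau$, and obtain a preimage when $\epsilon(g)=0$ by taking the half-lift $f=e^{\pi\ii h}$ (the paper writes this as $\hat f=-\tfrac12\hat g$). Your write-up is simply more thorough than the paper's, which records only the definition of $\epsilon$ and the non-formal reverse inclusion, leaving surjectivity of $\epsilon$ and the left-hand exactness implicit.
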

\begin{proof}
If $g \colon M \to U(1)$ is a Real function, consider a lift $\hat g
\in H^0( M, \cU(1))$. Then $\hat g \circ \tau + \hat g = k$ for some
$k\in\ZZ$, and the image $[k]\in\ZZ_2$ is well-defined independently
of the lift $\hat g$ of $g$. We call it $\epsilon(g)$. 
If $\epsilon(g) = 0$, then we can define $\hat f = -\frac{1}2\,\hat g$ with 
$$
\hat f \circ \tau   -  \hat f  =  \tfrac{1 }2\,\hat g - \tfrac{1}2\,(\hat g \circ \tau) = \hat g
$$ 
so that $(f \circ \tau)\, \bar f = g $. 
\end{proof}

\begin{corollary} If $M$ is $1$-connected, then ${\mathit\Sigma}_0(M)\subseteq\ZZ_2$.  
\end{corollary}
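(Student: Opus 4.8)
The plan is to deduce the statement directly from Proposition~\ref{prop:Hzero}, since the exact sequence there has already done the substantive work; the corollary is essentially a factoring argument for $\sigma_0$.

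First I would recall the computation made just before Proposition~\ref{prop:Hzero}: for any $f\in H^0(M,\cU(1))$ and any fixed point $m\in M^\tau$ one has $\big(f\,(\bar f\circ\tau)\big)(m)=f(m)\,\overline{f(m)}=1$, so that $\sigma_0\big(f\,(\bar f\circ\tau)\big)=1$. In other words $\sigma_0$ annihilates the image of the norm-type homomorphism $f\mapsto f\,(\bar f\circ\tau)$.

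Next I would invoke Proposition~\ref{prop:Hzero} to identify this image. Since $M$ is $1$-connected, the displayed sequence is exact, so the image of $f\mapsto f\,(\bar f\circ\tau)$ coincides with $\ker\epsilon$, and $\epsilon\colon H^0(M;\ZZ_2,\cU^1)\to\ZZ_2$ is surjective. Combining with the previous step, $\ker\epsilon\subseteq\ker\sigma_0$, so $\sigma_0$ descends to a homomorphism $\overline{\sigma}_0\colon H^0(M;\ZZ_2,\cU^1)/\ker\epsilon\to H^0(M^\tau,\ZZ_2)$. Because $\epsilon$ is surjective with kernel $\ker\epsilon$, the quotient is isomorphic to $\ZZ_2$, and $\overline{\sigma}_0$ has the same image as $\sigma_0$. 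Hence ${\mathit\Sigma}_0(M)=\im(\sigma_0)=\im(\overline{\sigma}_0)$ is a homomorphic image of $\ZZ_2$, so it is a group of order at most two; this is the assertion ${\mathit\Sigma}_0(M)\subseteq\ZZ_2$.

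There is no serious obstacle here, as the exactness in Proposition~\ref{prop:Hzero} supplies the essential input. If one wishes to see explicitly that the image lands in the constant (diagonal) copy of $\ZZ_2$ inside $H^0(M^\tau,\ZZ_2)$, I would use $1$-connectedness to lift a Real function $g$ to $\hat g$ with $\hat g\circ\tau+\hat g=k$ for a single integer $k$ (constant since $M$ is connected); evaluating at a fixed point gives $\hat g(m)=k/2$ and hence $\sigma_0(g)(m)=g(m)=(-1)^{k}=(-1)^{\epsilon(g)}$, a value independent of $m\in M^\tau$. This confirms that every geometric sign choice arising from a Real function is constant across the O-planes.
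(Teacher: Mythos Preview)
Your proposal is correct and follows essentially the same approach the paper intends: the corollary is immediate from Proposition~\ref{prop:Hzero} together with the observation (made just before it) that $\sigma_0$ kills the image of $f\mapsto f\,(\bar f\circ\tau)$, so $\sigma_0$ factors through the quotient $\ZZ_2$. Your final paragraph giving the explicit computation $\sigma_0(g)(m)=(-1)^{\epsilon(g)}$ is a nice supplement, making explicit that the image lies in the diagonal copy of $\ZZ_2$.
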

Hence if $M$ is $1$-connected and $M^\tau$ contains more than one O-plane, then $\sigma_0$ is not surjective. We discuss further constraints on geometric sign choices in Section \ref{sec:constraining}.

\subsection{Sign choice for equivariant bundles}
\label{sec:signbundle}

Let us first revisit a  basic result from~\cite{HekMurSza}, namely the
problem of classifying, up to isomorphism, the $\ZZ_2$-structures on
a $U(1)$-bundle $L$ which admits a $\ZZ_2$-structure.

If $\tau$ is a $\ZZ_2$-structure on $L$, then any other
$\ZZ_2$-structure $\hat\tau $ is given by $\hat\tau = g \, \tau$ for
some function $g \colon M \to U(1)$. 
By $\hat\tau{}^2 = 1$ it follows that $g\, ( g \circ \tau )  = 1$ or $g = \bar g \circ \tau$.
We say that $\hat\tau $ and $\tau$ are isomorphic
if there is a smooth function $f \colon M \to U(1)$ such that
$\hat\tau\, f = f\, \tau$; this is equivalent to $(f\circ \tau)\, g\,
\tau = f \, \tau$ or $g = f \, (\bar f \circ \tau ) $.  
\begin{proposition}
\label{prop:Z2-structures}
The space of non-isomorphic $\ZZ_2$-structures on a bundle $L \to M$ admitting a $\ZZ_2$-structure is the quotient of $H^0(M; \ZZ_2, \cU^1)$ by the image of the homomorphism
$$
H^0(M;  \cU(1)) \xrightarrow{f \mapsto f\, (\bar f \circ \tau) }
H^0(M; \ZZ_2, \cU^1) \ .
$$
\end{proposition}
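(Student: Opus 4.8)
The plan is to organise the observations preceding the statement into a clean torsor-and-quotient description, after fixing once and for all a reference $\ZZ_2$-structure $\tau$ on $L$ (which exists by hypothesis).

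First I would identify the set of all $\ZZ_2$-structures on $L$ with the group $H^0(M;\ZZ_2,\cU^1)$ of Real functions. Since any other $\ZZ_2$-structure $\hat\tau$ covers the same involution on $M$ and is $U(1)$-equivariant, it differs from $\tau$ by a gauge transformation, so $\hat\tau = g\,\tau$ for a unique $g\colon M\to U(1)$; this gives a bijection between $U(1)$-valued functions and lifts of $\tau$ to $L$. The involution condition $\hat\tau{}^2 = \id$ is then equivalent, as computed above, to $g\,(g\circ\tau) = 1$, that is, to $g = \bar g\circ\tau$, which is exactly the condition that $g$ be a Real function, $g\in H^0(M;\ZZ_2,\cU^1)$. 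Thus $g\mapsto g\,\tau$ is a bijection from $H^0(M;\ZZ_2,\cU^1)$ onto the set of $\ZZ_2$-structures on $L$.

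Next I would translate the isomorphism relation into a condition on the parameters $g$. Writing $m_f$ for the automorphism of $L$ given by fibrewise multiplication by $f\colon M\to U(1)$, and using the intertwining relation $\tau\circ m_f = m_{f\circ\tau}\circ\tau$, a short computation shows that $g_1\,\tau$ and $g_2\,\tau$ are isomorphic via $m_f$ precisely when
$$
g_1\,(f\circ\tau) = f\,g_2 \ , \qquad\text{equivalently}\qquad g_1\,g_2^{-1} = f\,(\bar f\circ\tau) \ .
$$
Specialising to $g_2 = 1$ recovers the stated criterion that $\hat\tau = g\,\tau$ is isomorphic to the reference $\tau$ if and only if $g = f\,(\bar f\circ\tau)$. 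Hence two $\ZZ_2$-structures are isomorphic if and only if their parameters differ by an element of the image
$$
B \ = \ \im\big(H^0(M;\cU(1)) \xrightarrow{\,f\mapsto f\,(\bar f\circ\tau)\,} H^0(M;\ZZ_2,\cU^1)\big) \ .
$$

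Finally, I would record that $f\mapsto f\,(\bar f\circ\tau)$ is a group homomorphism whose image indeed lands in the Real functions, so that $B$ is a subgroup of $H^0(M;\ZZ_2,\cU^1)$; the isomorphism classes of $\ZZ_2$-structures are then exactly the cosets of $B$, which yields the asserted identification of the space of non-isomorphic $\ZZ_2$-structures with the quotient $H^0(M;\ZZ_2,\cU^1)/B$. The only genuinely delicate points are bookkeeping ones: checking that the intertwining relation $\tau\circ m_f = m_{f\circ\tau}\circ\tau$ is applied with the correct base-point dependence in the computation above, and noting that while the identification with $H^0(M;\ZZ_2,\cU^1)$ depends on the chosen reference $\tau$, the subgroup $B$ — and therefore the resulting quotient — does not. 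I expect no serious obstacle beyond these.
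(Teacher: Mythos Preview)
Your proposal is correct and follows essentially the same approach as the paper: the paper's argument is the short paragraph immediately preceding the proposition, which fixes a reference $\ZZ_2$-structure $\tau$, writes any other as $g\,\tau$ with $g$ Real, and reads off the isomorphism condition $g = f\,(\bar f\circ\tau)$. You have simply spelled out the torsor picture and the general-coset version more carefully, but there is no substantive difference in method.
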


From  this description it follows that the space of $\ZZ_2$-structures on $L$ is independent of the choice of $L$ as long 
as $L$ admits at least one $\ZZ_2$-structure. 

If $L$ is a $\ZZ_2$-equivariant line bundle over $(M,\tau)$, then its
fibre over a fixed point $m\in M^\tau$ is simply a copy of $\CC$
with a $\ZZ_2$-action by multiplication with $\pm\, 1$ which we label $\sign(L_m)$. This allows us to define a homomorphism
 $$
 \sigma_1 \colon H^1(M; \ZZ_2, \cU^0) \longrightarrow H^0(M^\tau, \ZZ_2)
$$
by $\sigma_1(L)(m) = \sign(L_m)$. Again it is clear that this map is multiplicative in the sense that $\sigma_1(L \otimes K) = \sigma_1(L)\, \sigma_1(K)$. If $L$ is an arbitrary line bundle on $M$,
then $L \otimes \tau^{-1}(L)$ has the canonical $\ZZ_2$-action induced by swapping factors
of the tensor product 
$$
L_x \otimes L_{\tau(x)} \longrightarrow L_{\tau(x)} \otimes L_x \ .
$$
At a fixed point this is the identity map, so it has sign choice equal to $1$. It follows that the composition
 $$
 H^1(M, \cU(1)) \xrightarrow{ L \mapsto L \otimes \tau^{-1}(L) }  H^1(M; \ZZ_2, \cU^0) 
 \xrightarrow{ \ \sigma_1 \ } H^0(M^\tau, \ZZ_2)
 $$
 is equal to $1$.

 \subsection{Sign choice for Real bundle gerbes}
 \label{sec:general}

 If $(P, Y)$ is a bundle gerbe and $y \in Y$,  then the fibre $P_{(y, y)} = U(1)$. 
 This follows from the fact that the  multiplication map $ P_{(y, y)}  \otimes P_{(y, y)} \to P_{(y, y)} $ commutes
 with the $U(1)$-action on $P_{(y, y)}$. 
So if $p \in P_{(y, y)} $ and $q \in P_{(y, y)}$, then $p\,q \in
P_{(y, y)} $ and hence $p\, q = p \, \< q \>$ for some $\< q \> \in U(1)$. 
If we change $p $ to $p\,z$ for $z \in U(1)$, then 
$$
(p\,z)\,q = (p\,q)\,z = p\,\< q \>\, z = (p\,z)\,\< q \>
$$ 
so the map $q \mapsto \< q \>$ is independent
of the choice of $p$.  We need two results about this map.  Firstly, if $r \in P_{(y', y)}$ then 
$$
(r\, p\, r^{-1})\, (r \, q\, r^{-1}) = r\, (p\, q)\, r^{-1} = r\,
p\,\< q \>\, r^{-1} = (r\, p\, r^{-1})\, \< q \>
$$ 
so that $\< r\, q\, r^{-1} \> = \< q \>$.
Secondly if $\tau$ is a Real structure, then $\tau(p\,q) =
\tau(p)\,\tau(q)$ and $\tau(p\,q) = \tau(p\, \< q \>) = \tau(p)
\, \overline{ \< q \> }$,
so that $\< \tau(q) \> = \overline{\< q \>}$.

Consider now a Real bundle gerbe $\cG = (P, Y)$ over $M$, and let $m
\in M^\tau$ be a fixed point of $\tau$.  Choose $y \in Y_m$, then $(y, \tau(y)) \in 
Y^{[2]}$.  Let $f \in P_{(y, \tau(y))} $ so that $f\, \tau(f) \in P_{(y, y)}$, and define
\begin{equation}
\label{eq:sigma}
\sigma_2(\cG)(m) = \sigma_2(P, Y)(m) = \< f \, \tau(f) \> \ .
\end{equation}

 \begin{proposition}
 \label{signchoice}
\begin{enumerate}[(a)]
\item $\sigma_2(\cG)(m)$ is independent of the choice of $f$ and $y$.
\item $\sigma_2(\cG)(m) \in \ZZ_2$.
\end{enumerate}
\end{proposition}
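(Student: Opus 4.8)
The plan is to deduce both parts directly from the two properties of the map $q\mapsto \<q\>$ established just above the statement: its invariance under conjugation, $\<r\,q\,r^{-1}\>=\<q\>$ for $r\in P_{(y',y)}$, and its interaction with the Real structure, $\<\tau(q)\>=\overline{\<q\>}$. I treat the independence claim in (a) by varying $f$ and $y$ separately. Replacing $f$ by $f\,z$ with $z\in U(1)$ and using that $\tau$ is a conjugate involution, $\tau(f\,z)=\tau(f)\,\bar z$, the bilinearity of the bundle gerbe multiplication over $U(1)$ gives $(f\,z)\,\tau(f\,z)=(f\,\tau(f))\,z\,\bar z=f\,\tau(f)$, so the value is unchanged.

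To handle the choice of $y$, I would pick another $y'\in Y_m$, choose $r\in P_{(y,y')}$, and set $f'=r^{-1}\,f\,\tau(r)\in P_{(y',\tau(y'))}$, where $\tau(r)\in P_{(\tau(y),\tau(y'))}$ makes the fibre indices compose correctly. Applying $\tau$, and using $\tau^2=\id$ together with the fact that $\tau$ respects inverses and the identity, yields $\tau(f')=\tau(r)^{-1}\,\tau(f)\,r$. Multiplying out and cancelling the interior pair $\tau(r)\,\tau(r)^{-1}=e_{\tau(y)}$ against the identity collapses the expression to
\[
f'\,\tau(f') = r^{-1}\,\big(f\,\tau(f)\big)\,r \ .
\]
Since $f\,\tau(f)\in P_{(y,y)}$ and $r^{-1}\in P_{(y',y)}$, the conjugation-invariance property gives $\<f'\,\tau(f')\>=\<f\,\tau(f)\>$, establishing independence of $y$ and completing (a).

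For (b), I would set $q=f\,\tau(f)\in P_{(y,y)}$ and compute $\tau(q)=\tau(f)\,\tau(\tau(f))=\tau(f)\,f\in P_{(\tau(y),\tau(y))}$ using $\tau^2=\id$. Writing $g=\tau(f)\in P_{(\tau(y),y)}$, one has $\tau(g)=f$ and $g\,\tau(g)=\tau(q)$, so $\tau(q)$ is precisely the element computing $\sigma_2(\cG)(m)$ with respect to the base point $\tau(y)$. By the independence proved in (a), this agrees with the value computed at $y$, hence $\<\tau(q)\>=\<q\>$. On the other hand, the Real-structure property gives $\<\tau(q)\>=\overline{\<q\>}$. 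Combining the two forces $\<q\>=\overline{\<q\>}$, so $\<q\>$ is a real element of $U(1)$, that is $\sigma_2(\cG)(m)\in\{\pm\,1\}=\ZZ_2$.

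The routine part is the fibre-index bookkeeping; the main obstacle is verifying that the conjugating element assembles correctly from $r$ and $\tau(r)$ and that $\tau$ respects inverses and the identity, so that the interior cancellation producing $r^{-1}\,(f\,\tau(f))\,r$ is legitimate. Once this is in place, (b) reduces to the conceptual observation that $\tau(f)\,f$ computes the same sign at the conjugate base point $\tau(y)$, which is what links the reality of $\<q\>$ back to the independence established in (a).
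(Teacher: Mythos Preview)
Your proof is correct and follows essentially the same route as the paper's. The only cosmetic difference is the direction of your conjugating element: the paper picks $g\in P_{(\tilde y,y)}$ and forms $g\,f\,\tau(g^{-1})$, whereas you pick $r\in P_{(y,y')}$ and form $r^{-1}\,f\,\tau(r)$; your observation in (b) that $\tau(f)\,f$ is the sign computed at the base point $\tau(y)$ is exactly how the paper justifies the line ``From (a) we know that $\<\tau(f)\,f\> = \<f\,\tau(f)\>$.''
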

 \begin{proof}
(a) If we change $f $ to $f \, z $ for $z \in U(1)$, then 
 $$
\<(f\,z)\, \tau(f\,z) \>  = z\, \bar z\, \<f\, \tau(f)\> = \<f\,
\tau(f)\> \ ,
$$ 
 so  $\< f \, \tau(f)\> \in U(1)$ is independent of the choice of $f \in P_{(y, \tau(y))}$.
 Changing $y$ to $\tilde{y} \in Y_m$ and letting $g \in P_{(\tilde{y},
   y)}$, then $g \, f\, \tau(g^{-1})  \in P_{(\tilde{y}, \tau(\tilde{y}))}$ and 
 $$
\big\< \big( g\, f\, \tau(g^{-1})\big)\, \tau\big( g\, f\,
\tau(g^{-1}) \big)\big\> = \big\<g\, f\, \tau(g^{-1})\, \tau(g)\,
\tau(f)\, g^{-1} \big\> = \big\< g\, \big( f\, \tau(f) \big)\, g^{-1}\big\>
 =\<f\, \tau(f) \> \ .
 $$

(b) From (a) we know that $\<\tau(f)\, f\> = \<f\, \tau(f)\>$ so  
 $$
\overline{\<f\, \tau(f)\>}  
 = \big\<\tau\big( f\, \tau(f)\big) \big\> = \<\tau(f)\, f\> = \< f\, \tau(f)\> \ .
$$
Hence $\sigma_2(\cG)(m) \in \ZZ_2$.
 \end{proof} 
 
 This shows that letting $m$ vary determines an element $\sigma_2(\cG) \in
 H^0(M^\tau, \ZZ_2)$.  The next result implies that the sign choice of a Real bundle
 gerbe depends only on its Real stable isomorphism class.
  
  \begin{proposition} 
  \label{prop:signstable}
  \begin{enumerate}[(a)]
  \item If $\cG$ is a Real bundle gerbe, then $\sigma_2(\cG^*) =  \sigma_2(\cG)$. 
  \item If $\cG_1$ and $\cG_2$ are Real bundle gerbes, then $\sigma_2( \cG_1 \otimes \cG_2 ) = \sigma_2( \cG_1 )\, \sigma_2(  \cG_2 )$.
  \item If $\cG$ is a trivial Real bundle gerbe, then $\sigma_2(\cG) = 1$.
  \item If $\cG_1$ and $\cG_2$ are Real stably isomorphic Real bundle gerbes, then $\sigma_2(\cG_1) = \sigma_2(\cG_2)$.
   \end{enumerate}
  \end{proposition}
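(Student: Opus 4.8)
The strategy is to prove parts (b) and (c) by direct computation with the map $\<\cdot\>$, and then to obtain (a) and (d) as formal consequences. The key structural fact I would exploit throughout is that, by Proposition~\ref{signchoice}(b), $\sigma_2$ takes values in $\ZZ_2$, so every value equals its own inverse and its own complex conjugate; I would also use that $\sigma_2$ is manifestly invariant under Real bundle gerbe isomorphisms, since \eqref{eq:sigma} is built only from the multiplication and the Real structure.

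For (b), I would realise $\cG_1\otimes\cG_2$ on the fibre product $Y = Y_1\times_M Y_2$, so that over $\big((y_1,y_2),(\tau(y_1),\tau(y_2))\big)$ the fibre of the tensor bundle gerbe is $(P_1)_{(y_1,\tau(y_1))}\otimes(P_2)_{(y_2,\tau(y_2))}$ and the Real structure is $\tau_1\otimes\tau_2$. Choosing $f = f_1\otimes f_2$ gives $\tau(f) = \tau_1(f_1)\otimes\tau_2(f_2)$ and $f\,\tau(f) = \big(f_1\,\tau_1(f_1)\big)\otimes\big(f_2\,\tau_2(f_2)\big)$, so the only thing to check is that $\<\cdot\>$ is multiplicative across the tensor product, $\<a\otimes b\> = \<a\>\,\<b\>$. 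This is immediate from $(p_1\otimes p_2)(q_1\otimes q_2) = (p_1 q_1)\otimes(p_2 q_2)$, and it yields $\sigma_2(\cG_1\otimes\cG_2)(m) = \sigma_2(\cG_1)(m)\,\sigma_2(\cG_2)(m)$.

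Part (c) is where the real work lies. A trivial Real bundle gerbe is $\delta(R)$ for a line bundle $R\to Y$ equipped with a conjugate-linear involution $J$ covering $\tau_Y$, and the induced Real structure on $\delta(R)$ acts on $\delta(R)_{(y_1,y_2)} = R_{y_2}\otimes R_{y_1}^*$ as $J$ on the first factor and as the conjugate-linear dual map $\tilde J$ on the second. At a fixed point $m$ I would choose $y\in Y_m$, a nonzero $r\in R_y$ with dual $r^*$, and set $f = J(r)\otimes r^*\in\delta(R)_{(y,\tau(y))}$. A short computation using $J^2 = \id$ gives $\tau(f) = r\otimes\tilde J(r^*)$, and then the bundle gerbe multiplication contracts the $R_{\tau(y)}$-factors to produce $f\,\tau(f) = \overline{r^*(J^2 r)}\,(r\otimes r^*) = r\otimes r^*$ in $\delta(R)_{(y,y)}$. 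Since $\<\cdot\>$ on the diagonal is the canonical evaluation $r\otimes r^*\mapsto r^*(r)$, we conclude $\sigma_2(\cG)(m) = 1$. I expect the main obstacle here to be purely organisational: pinning down the induced Real structure $\tilde J$ on the dual factor and the contraction conventions correctly, after which the involution identity $J^2=\id$ makes everything collapse to $1$.

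Finally, (a) and (d) follow formally. The bundle gerbe $\cG\otimes\cG^*$ is canonically trivial, and the canonical trivialisation intertwines $\tau_P\otimes\tau_{P^*}$ with standard conjugation, so $\cG\otimes\cG^*$ is a trivial Real bundle gerbe; by (b) and (c) we get $\sigma_2(\cG)\,\sigma_2(\cG^*) = \sigma_2(\cG\otimes\cG^*) = 1$, whence $\sigma_2(\cG^*) = \sigma_2(\cG)$ because the values lie in $\ZZ_2$. For (d), if $\cG_1$ and $\cG_2$ are Real stably isomorphic then $\cG_1\otimes\cG_2^*$ is Real isomorphic to a trivial Real bundle gerbe, so isomorphism-invariance together with (b) and (c) gives $\sigma_2(\cG_1)\,\sigma_2(\cG_2^*) = 1$; combining with (a) yields $\sigma_2(\cG_1) = \sigma_2(\cG_2)^{-1} = \sigma_2(\cG_2)$, again using that $\sigma_2$ is $\ZZ_2$-valued.
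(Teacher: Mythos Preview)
Your proof is correct and follows essentially the same approach as the paper: the direct computations for (b) and (c) match the paper's argument almost verbatim (the paper's proof of (c) uses the same element $f=\tau(r)\otimes r^*$ and the same contraction identity $\<r\otimes s^*\>=s^*(r)$), and the derivation of (d) from (a)--(c) is identical. The only minor difference is that the paper asserts (a) follows directly from the definition \eqref{eq:sigma} (essentially because $\<\cdot\>_{P^*}=\overline{\<\cdot\>_P}$ and the sign lies in $\ZZ_2$), whereas you deduce it from (b), (c) and the Real triviality of $\cG\otimes\cG^*$; both routes are short and valid.
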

  \begin{proof}
 It is straightforward from the definition \eqref{eq:sigma} to prove (a) and (b). Part (d) follows from (c)  because
  $\cG_1$ and $\cG_2$ are Real stably isomorphic if and only if $\cG_1 \otimes \cG_2^*$ is Real trivial.  For~(c),
 take a Real bundle $R \to Y$ and for $(y_1,y_2)\in Y^{[2]}$ define $P_{(y_1, y_2)} = R_{y_2} \otimes R_{y_1}^*$ with the Real 
  structure induced by that of $R$.  Then the bundle gerbe multiplication 
  $$
  P_{(y_1, y_2)} \otimes P_{(y_2, y_3)} \longrightarrow P_{(y_1, y_3)}
  $$
  comes from the obvious contractions of 
  $$
  R_{y_2} \otimes R_{y_1}^* \otimes R_{y_3} \otimes R_{y_2}^*
  \longrightarrow R_{y_3} \otimes R_{y_1}^* \ .
  $$
It is straightforward to show that $\< r \otimes s^* \> = s^*(r)$ for  $r\otimes s^*\in P_{(y, y)} = R_y \otimes R_y^*$.
  
Let  $r \in R_y$ and define 
  $f = \tau(r)\otimes r^* \in  P_{(y, \tau(y))}=R_{\tau(y)}  \otimes R^*_{y}$, where $r^*$ is the element that satisfies $r^*(r) = 1$.  Then
  \begin{align*}
 \< f \, \tau(f) \>  & = \big\< \big(\tau(r) \otimes r^*\big)\, \big(r \otimes
                       \tau(r^*)\big)\big\> = \<\tau(r)\otimes\tau(r)^*\> = 1
                 \end{align*}
 as required.
  \end{proof} 

Thus we have a well-defined homomorphism
 $$
 \sigma_2 \colon  H^2(M; \ZZ_2, \cU^1 )  \longrightarrow H^0(M^\tau,
 \ZZ_2) \ .
 $$

\begin{example}[Real bundle gerbes over a point]
\label{ex:bg-point}
If $M = \pt$, then there are two Real stable isomorphism classes of
Real bundle gerbes (cf.\ Section~\ref{sec:signs}). 
Their sign choices are $+1$ for the trivial one and $-1$ for the
non-trivial one.  By Proposition~\ref{prop:signstable}~(c) it is clear that the trivial 
bundle gerbe has sign choice $+1$. We construct the non-trivial sign choice as follows. Let $M = \{m\}$ and $Y = M \times \ZZ_2$ 
with the Real structure $\tau(m, 0) = (m, 1)$ and $\tau(m, 1) = (m, 0)$. 
Then $Y^{[2]} = M \times \ZZ_2 \times \ZZ_2$ and we take the trivial bundle $P \to Y^{[2]}$ equipped with the trivial
multiplication, so that $p \mapsto \< p \>$ is just the identity map.
We define the Real structure  to be conjugation  from $P_{((m, 0),(m,0))} \to P_{((m, 1),(m,1))}$
and $P_{((m, 1),(m,1))} \to P_{((m, 0),(m,0))}$, and $-1$ times conjugation from $P_{((m, 0),(m,1))} \to P_{((m, 1),(m,0))}$
and $P_{((m, 1),(m,0))} \to P_{((m, 0),(m,1))}$. 

A straightforward calculation shows that this commutes with the bundle gerbe
multiplication and squares to the identity map.   To calculate the sign, let $y = (m, 0)$ so that $\tau(y) = (m, 1)$ and let $f \in P_{((m, 0),(m,1))}$ be $
1$.   Then $\< f\, \tau(f) \>  = -1$ as required. This bundle gerbe cannot be a trivial Real bundle gerbe
 because it has a non-trivial sign.  The fact that $H^2(\{m \}; \ZZ_2, \cU^1 ) = \ZZ_2$ shows that these are the 
 only two Real stable isomorphism classes of Real bundle gerbes over a point, and the calculation demonstrates that 
$$
 \sigma_2 \colon  H^2(\{m \}; \ZZ_2, \cU^1 )  \longrightarrow H^0(\{m \}^\tau, \ZZ_2)
 $$ 
is an isomorphism. In Section~\ref{sec:GaoHori} we will generalise this construction and show that it agrees with the connecting homomorphism  in \eqref{eq:connectinghoms}.
\end{example}

\begin{example}[Tautological Real bundle gerbes]
\label{ex:tautgerbe}
Under the conditions of Proposition~\ref{tautologicalgerbe}, we show
that the sign choice of the tautological Real bundle gerbe $(P,Y)$ is
given by 
$$
\sigma_2(P) = \hol\big(\Sigma \cup\tau(\Sigma); H\big) \ . 
$$
Let $m\in M^\tau$ be the basepoint in the proof of
Proposition~\ref{tautologicalgerbe} and let $x \in M^\tau$ be another
fixed point.  Choose $p_1 = p$ to be a path from $m$ to $x$  and $p_2
= \tau(p)$ as per the way one computes the sign choice, see
Proposition \ref{signchoice}. Picking a spanning surface $\Sigma$, we
then have
$[p,\tau(p),\Sigma, 1] \in P_{(p,\tau(p))}$ and 
$$
\tau\big([p,\tau(p),\Sigma, 1]\big) = \big[\tau(p), p,\tau(\Sigma), 1
\big] = \big[\tau(p), p,\Sigma, \hol\big(\Sigma \cup\tau(\Sigma);
H\big) \big] \ . 
$$
These pair to give the sign choice  $\hol\big(\Sigma \cup\tau(\Sigma);
H\big). $ This construction will be used later to exhibit a constraint
imposed on the sign choices when $M$ is $2$-connected.
\end{example}

Let $\phi \colon N \to M$ be a Real map between orientifolds. Then there is an induced homomorphism $\phi^* \colon H^0(M^\tau, \ZZ_2) 
\to H^0(N^\tau, \ZZ_2)$.  If $(P, Y)$ is a Real bundle gerbe on $M$, then 
$$
\phi^*\big(\sigma_2(P)\big) = \sigma_2\big(\phi^{-1}(P)\big) \ .
$$
In particular, if $m\in M^\tau$ is a fixed point, we can take $N = \{ m \}$ and
$\phi$ the inclusion to find $\sigma_2(P)(m) =  
\phi^*(\sigma_2(P)) = \sigma_2(\phi^{-1}(P))$.  In other words, we can
define the sign choice map by restricting $\sigma_2(P)$
to any fixed point $m \in M^\tau$, and declaring $\sigma_2(P)(m)$ to be $+1$ if the restricted Real bundle gerbe 
is Real trivial and $-1$ otherwise.  Thus we conclude

\begin{proposition}
The sign choice map \eqref{eq:sigma} for Real bundle gerbes $\cG$ coincides with the sign
choice of its Real Dixmier--Douady class $\DD_R(\cG)$ as defined in \eqref{eq:signchoice}.
\end{proposition}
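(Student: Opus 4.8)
The plan is to reduce everything to a single fixed point and then invoke Example~\ref{ex:bg-point}, exactly along the lines sketched in the paragraph preceding the statement. First I would record that the geometric map \eqref{eq:sigma} already descends to the Real Brauer group: by Proposition~\ref{prop:signstable}(b),(d) it is multiplicative and depends only on the Real stable isomorphism class of $\cG$, hence factors through a homomorphism on $H^2(M;\ZZ_2,\cU^1)$ that sends $\cG$ to a function of $\DD_R(\cG)$. It therefore suffices to show that this homomorphism agrees with the cohomological $\sigma_2$ of \eqref{eq:signchoice}, namely that $\sigma_2(\cG)(m)=\iota_m^*\big(\DD_R(\cG)\big)$ for every $m\in M^\tau$.

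The key step is the naturality statement $\phi^*\big(\sigma_2(P)\big)=\sigma_2\big(\phi^{-1}(P)\big)$ for a Real map $\phi\colon N\to M$, which I would verify fibrewise. If $n\in N^\tau$ then $\phi(n)=m\in M^\tau$, and the pulled-back Real bundle gerbe $\phi^{-1}(P)$ is built on $N\times_M Y$; its fibres over $N\times_M Y^{[2]}$, together with the bundle gerbe multiplication and the Real structure, are canonically identified with the corresponding data of $(P,Y)$. Choosing $\tilde y=(n,y)$ with $y\in Y_m$ and $\tilde f$ lying over $f\in P_{(y,\tau(y))}$, the quantity $\<\tilde f\,\tau(\tilde f)\>$ computed in $\phi^{-1}(P)$ is literally the quantity $\<f\,\tau(f)\>$ computed in $P$, because the intrinsic map $q\mapsto\<q\>$ on each fibre $P_{(y,y)}$ from Section~\ref{sec:general} commutes with these identifications. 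Hence $\sigma_2(\phi^{-1}(P))(n)=\sigma_2(P)(m)=\phi^*\big(\sigma_2(P)\big)(n)$.

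Applying this to the inclusion $\iota_m\colon\{m\}\hookrightarrow M$ of a fixed point gives $\sigma_2(P)(m)=\sigma_2\big(\iota_m^{-1}(P)\big)$, the geometric sign of the restricted Real bundle gerbe over the point $m$. By Example~\ref{ex:bg-point} the geometric sign-choice map over a point is an \emph{isomorphism} $H^2(\{m\};\ZZ_2,\cU^1)=\ZZ_2\xrightarrow{\ \sim\ }H^0(\{m\}^\tau,\ZZ_2)=\ZZ_2$, carrying the trivial class to $+1$ and the non-trivial class to $-1$; since $\ZZ_2$ has no non-trivial automorphisms this identification is canonical. Therefore $\sigma_2\big(\iota_m^{-1}(P)\big)=\iota_m^*\big(\DD_R(P)\big)$, which is precisely the cohomological value $\sigma_2(\DD_R(P))(m)$ of \eqref{eq:signchoice}. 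As $m\in M^\tau$ was arbitrary, the two maps coincide. Equivalently, one may simply note that the geometric $\sigma_2$ is a pullback-compatible homomorphism which is nonzero over a point, and appeal to the uniqueness remark following \eqref{eq:signchoice}. I expect the only genuine work to be the fibrewise naturality of the second paragraph; it is routine precisely because the defining expression $\<f\,\tau(f)\>$ is local at the single fixed point $m$ and the pullback identifies the relevant fibres together with their multiplication and conjugation.
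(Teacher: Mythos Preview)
Your proposal is correct and follows essentially the same approach as the paper: establish naturality of the geometric sign choice under Real pullbacks, specialise to the inclusion $\iota_m$ of a fixed point, and invoke Example~\ref{ex:bg-point} (together with the uniqueness remark after \eqref{eq:signchoice}) to identify the value with $\iota_m^*\big(\DD_R(\cG)\big)$. You supply more detail than the paper does on the fibrewise verification of naturality and on the factorisation through the Real Brauer group via Proposition~\ref{prop:signstable}, but the logical route is the same.
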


Similarly to the cases of Real functions and equivariant line bundles, we  have

\begin{proposition} The composition 
 $$
 H^2(M, \cU(1)) \xrightarrow{Q \mapsto  Q \otimes \tau^{-1}(Q^*) } H^2(M; \ZZ_2, \cU^1) 
 \xrightarrow{ \ \sigma_2 \ } H^0(M^\tau, \ZZ_2)
 $$
 is equal to $1$.
\end{proposition}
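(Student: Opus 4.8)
The plan is to compute $\sigma_2$ directly from the defining formula \eqref{eq:sigma}, in the same spirit as the $p=-1$ and $p=0$ cases treated above. Given a bundle gerbe $Q=(P,Y)$, the Real bundle gerbe $\cG = Q\otimes\tau^{-1}(Q^*)$ carries the canonical Real structure obtained by swapping the two tensor factors and dualising, namely $p\otimes q^* \mapsto q\otimes p^*$; because the $\tau^{-1}(Q^*)$-factor already carries the conjugate $U(1)$-action, this swap is automatically a conjugate involution covering the factor-exchange involution $(y_1,y_2)\mapsto(y_2,y_1)$ on the fibre product defining $\cG$, and one checks as in the line bundle case that it commutes with the bundle gerbe multiplication. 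First I would fix a point $m\in M^\tau$ and choose $y=(y_1,y_2)$ in the fibre of $\cG$ over $m$, so that $\tau(y)=(y_2,y_1)$ and the relevant fibre is
$$
(P_\cG)_{(y,\tau(y))} = P_{(y_1,y_2)}\otimes P^*_{(y_2,y_1)} \ .
$$

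Next I would pick $f = p\otimes q^*$ with $p\in P_{(y_1,y_2)}$ and $q\in P_{(y_2,y_1)}$, whence $\tau(f)=q\otimes p^*$. Carrying out the bundle gerbe multiplication of $\cG$ factorwise, using the multiplication of $Q$ on the first factor and its dual on the second, gives
$$
f\,\tau(f) = (p\,q)\otimes(q\,p)^* \ \in\ P_{(y_1,y_1)}\otimes P^*_{(y_2,y_2)} \ .
$$
Since $\langle\,\cdot\,\rangle$ is multiplicative across tensor products and satisfies $\langle c^*\rangle_{Q^*}=\overline{\langle c\rangle_Q}$ on the dual gerbe (consistently with the canonical triviality of $Q\otimes Q^*$), I obtain
$$
\sigma_2(\cG)(m) = \langle f\,\tau(f)\rangle = \langle p\,q\rangle\,\overline{\langle q\,p\rangle} \ .
$$

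The crux, and the step I expect to be the main obstacle, is then the ``trace identity'' $\langle p\,q\rangle = \langle q\,p\rangle$ relating the two elements $p\,q\in P_{(y_1,y_1)}$ and $q\,p\in P_{(y_2,y_2)}$, which live in \emph{different} fibres and so cannot be compared naively. This I would deduce from the conjugation-invariance $\langle r\,s\,r^{-1}\rangle=\langle s\rangle$ already established above: taking $r=p$ and $s=q\,p$ and using associativity together with $\langle p\,p^{-1}\rangle=1$ yields $\langle p\,(q\,p)\,p^{-1}\rangle=\langle p\,q\rangle$, while the invariance property gives $\langle p\,(q\,p)\,p^{-1}\rangle=\langle q\,p\rangle$. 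Hence $\langle p\,q\rangle=\langle q\,p\rangle\in U(1)$, and therefore $\sigma_2(\cG)(m)=\langle q\,p\rangle\,\overline{\langle q\,p\rangle}=1$, which is the claim. As a conceptual cross-check one may instead invoke the preceding proposition: $\sigma_2(\cG)(m)=+1$ precisely when $\cG|_{\{m\}}$ is Real trivial, and since $\tau(m)=m$ this restriction is $\cH\otimes\cH^*$ with $\cH=Q|_{\{m\}}$ under the swap Real structure, which is the canonical trivialisation; the computation above is exactly the verification that this trivialisation carries sign $+1$.
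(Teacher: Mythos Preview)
Your proof is correct. The paper, however, takes a shorter route by exploiting the freedom in the choice of fibre point established in Proposition~\ref{signchoice}(a): at a fixed point $m\in M^\tau$ one has $Y_m = X_m \times X_{\tau(m)} = X_m \times X_m$, so one may choose the \emph{diagonal} point $y=(x,x)$, which is itself fixed by $\tau$. Then $(y,\tau(y))=(y,y)$ and the relevant fibre is simply $Q_{(x,x)}\otimes Q^*_{(x,x)}$; picking $f=q\otimes q^*$ with $q^*$ the canonical dual of $q$ gives $f\,\tau(f)=1$ directly, with no need for the trace identity $\langle p\,q\rangle=\langle q\,p\rangle$.

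Your argument works with a generic $y=(y_1,y_2)$ and therefore requires this extra identity, which you correctly derive from the conjugation invariance $\langle r\,s\,r^{-1}\rangle=\langle s\rangle$ already established in the text. That derivation is a pleasant observation in its own right, and your approach has the minor conceptual advantage of not relying on a special choice of $y$; but the paper's diagonal trick is the cleaner way through.
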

\begin{proof}
If $(Q, X)$ is a bundle gerbe on $M$, define a Real bundle gerbe $(P, Y)$ by $Y = X \times_M \tau^{-1}(X)$
and $P  = Q \otimes \tau^{-1}(Q^*)$. 
A fibre point $y = (x, x')\in Y_m = X_m \times X_{\tau(m)}$ satisfies $\tau(x, x') = (x', x)$.  The fibre of $P = Q \otimes \tau^{-1}(Q^*)$ 
at $(y_1, y_2) = ((x_1, x'_1), (x_2, x'_2))$ is $Q_{(x_1, x_2)} \otimes Q^*_{(x'_1, x_2')}$.  If $\tau(m) = 
m$, we can pick $y=(x,x) \in Y_m$.  Any $f =
q\otimes q^* \in P_{(y, \tau(y))} = Q_{(x, x)} \otimes Q^*_{(x, x)}$ satisfies $\tau(f) = q^* \otimes q$ and $f\, \tau(f) = q( q^*) \otimes q(q^*) = 1 \in P_{(y, y)}$,
which maps to $1 \in U(1)$ as required.
 \end{proof}

Let $X=M/\tau$ denote the quotient space as in the orientifold
constructions of type~II string theory. By applying the Grothendieck
spectral sequence to the fibration $\pi\colon M \to X$, it is possible
to give a general characterisation of the kernel and image of the sign
choice map $\sigma_2$ in terms of the sheaf cohomology of the space of
``physical points'' $X$ of the orientifold $(M,\tau)$.  

\begin{theorem} 
\label{fiveterm} There exists a five-term exact sequence
$$
0\longrightarrow H^2(X, \pi^\tau_*\, \cU^1) \longrightarrow H^2(M;
\ZZ_2, \cU^1) \xrightarrow{ \ \sigma_2 \ } H^0(M^\tau,\ZZ_2)
\xrightarrow{\ \dd_3 \ } H^3(X,  \pi^\tau_*\, \cU^1)$$
where $\pi^\tau_*\, \cU^1$ is the invariant direct image sheaf.
\end{theorem}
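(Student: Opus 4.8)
The plan is to realise the equivariant sheaf cohomology $H^*(M;\ZZ_2,\cU^1)$ as the abutment of a Grothendieck spectral sequence for the quotient map $\pi\colon M\to X=M/\tau$, and then to read off the claimed sequence from its edge and transgression maps. By definition the Grothendieck equivariant cohomology is the derived functor of the invariant global sections functor $\mathcal{F}\mapsto \Gamma(M,\mathcal{F})^{\ZZ_2}$, and this functor factorises as $\Gamma(X,-)\circ\pi^\tau_*$, where $\pi^\tau_*$ is the invariant direct image. The hypotheses for the associated Grothendieck spectral sequence are verified as usual: $\pi^\tau_*$ admits the exact left adjoint $\pi^{-1}$ (sheaf pullback with its canonical equivariant structure), hence preserves injectives, and injective sheaves on $X$ are acyclic for $\Gamma(X,-)$. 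This gives
$$
E_2^{p,q}=H^p\big(X,\,R^q\pi^\tau_*\,\cU^1\big)\ \Longrightarrow\ H^{p+q}(M;\ZZ_2,\cU^1)\ .
$$

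Next I would compute the higher invariant direct images $R^q\pi^\tau_*\,\cU^1$ stalkwise. Over a point of $X$ whose $\pi$-fibre is a free $\ZZ_2$-orbit, the stalk of $R^q$ vanishes for $q>0$; over $x=\pi(m)$ with $m\in M^\tau$ the stabiliser is all of $\ZZ_2$ and the stalk computes the group cohomology $H^q\big(\ZZ_2,(\cU^1)_m\big)=H^q\big(\ZZ_2,U(1)\big)$ with $\ZZ_2$ acting by complex conjugation. A direct Tate-cohomology calculation gives $H^0=\ZZ_2$ (the elements $\pm\,1$), $H^1=0$ and $H^2=\ZZ_2$. The decisive input is the vanishing $R^1\pi^\tau_*\,\cU^1=0$. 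Moreover $R^2\pi^\tau_*\,\cU^1$ is the constant sheaf $\ZZ_2$ supported on $M^\tau\subset X$ (no twisting occurs since $\ZZ_2$ has no nontrivial automorphisms), so that $E_2^{0,2}=H^0\big(X,R^2\pi^\tau_*\,\cU^1\big)=H^0(M^\tau,\ZZ_2)$.

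I would then extract the low-degree sequence. Since the whole row $q=1$ vanishes, the terms $E_2^{0,1}$ and $E_2^{1,1}$ are zero, so the only contributions to $H^2$ come from $E_\infty^{2,0}$ and $E_\infty^{0,2}$. One checks that no differentials touch $E^{2,0}$, whence $E_\infty^{2,0}=E_2^{2,0}=H^2(X,\pi^\tau_*\,\cU^1)$, while the first possibly nonzero differential out of $E^{0,2}$ is the transgression $d_3\colon E_3^{0,2}\to E_3^{3,0}$, which I identify with a map $\dd_3\colon H^0(M^\tau,\ZZ_2)\to H^3(X,\pi^\tau_*\,\cU^1)$, giving $E_\infty^{0,2}=\ker(\dd_3)$. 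The filtration of $H^2$ then supplies the short exact sequence $0\to E_\infty^{2,0}\to H^2\to E_\infty^{0,2}\to 0$, which splices with $E_\infty^{0,2}\hookrightarrow E_2^{0,2}\xrightarrow{\dd_3}E_2^{3,0}$ to produce exactly the asserted four-arrow sequence.

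The main obstacle, and the step needing the most care, is identifying the edge homomorphism $H^2(M;\ZZ_2,\cU^1)\to E_2^{0,2}=H^0(M^\tau,\ZZ_2)$ with the sign choice map $\sigma_2$. For this I would invoke naturality of the entire construction with respect to the inclusion $\iota_m\colon\{m\}\hookrightarrow M$ of a fixed point, which is a Real map. Pulling the spectral sequence back along $\iota_m$ collapses it to the point computation $H^2(\{m\};\ZZ_2,\cU^1)=\ZZ_2$ of Example~\ref{ex:bg-point}, under which the edge map becomes the canonical identification of this group with the stalk of $R^2\pi^\tau_*\,\cU^1$ at $m$. Since $\sigma_2$ was defined precisely by the fixed-point restrictions $\xi\mapsto\big(\iota_m^*(\xi)\big)_{m\in M^\tau}$, the edge homomorphism agrees with $\sigma_2$ stalk by stalk over $M^\tau$, completing the identification and hence the proof.
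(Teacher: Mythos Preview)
Your argument is correct and follows essentially the same route as the paper: factor the invariant global sections functor as $\Gamma(X,-)\circ\pi^\tau_*$, run the Grothendieck spectral sequence, show the row $q=1$ vanishes, and read off the low-degree exact sequence. The paper computes the stalks of $R^q\pi^\tau_*\,\cU^1$ at fixed points by equivariantly retracting an invariant neighbourhood onto the O-plane and invoking the known values of $H^q(\pt;\ZZ_2,\cU^1)$, whereas you compute them as group cohomology $H^q\big(\ZZ_2,(\cU^1)_m\big)$; both are standard and equivalent, though your identification $(\cU^1)_m=U(1)$ is a slight abuse (the stalk is germs of smooth $U(1)$-valued functions, and one should note that the kernel of evaluation is a uniquely $2$-divisible group so contributes nothing in positive degree). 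Your proof is in fact more complete than the paper's in two respects: you verify the hypotheses needed to form the spectral sequence, and you supply the naturality argument identifying the edge homomorphism with $\sigma_2$, which the paper leaves implicit.
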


\begin{proof} The  equivariant sheaf cohomology groups are the derived
  functors of the right exact functor $H^0(M; \ZZ_2, \cU^1) =
  {\mathit{\Gamma}}^\tau(\cU^1)$ of global invariant sections of the $\ZZ_2$-sheaf $\cU^1$:
$$
H^q(M; \ZZ_2, \cU^1) = (R^q\,{\mathit{\Gamma}}^\tau)(\cU^1) \ .
$$
Let ${\mathit{\Gamma}}$ denote the global sections functor on the quotient space $X$ and $\pi^\tau_*$ the invariant direct image functor. Then ${\mathit{\Gamma}}^\tau = {\mathit{\Gamma}} \circ \pi^\tau_*$ is the composition of two right exact functors, so the Grothendieck spectral sequence gives
$$
E^{p,q}_2 = (R^p\,{\mathit{\Gamma}})\,(R^q \, \pi^\tau_*)(\cU^1) =
H^p\big(X,(R^q\pi^\tau_*)\,(\cU^1) \big)
$$ 
 converging to 
$$
(R^{p+q}\,{\mathit{\Gamma}}^\tau)(\cU^1)= H^{p+q}(M; \ZZ_2, \cU^1) \ .
$$

The sheaf $(R^q\, \pi^\tau_*)(\cU^1)$ is trivial for
$q$ odd and equals the sheaf of constant $\ZZ_2$-valued functions
supported on $M^\tau$ for $q$ even. Namely if  $m\in M$ is not a fixed
point, then we can find a contractible open neighbourhood $m\in U$
such that $\tau(U)\cap U = \emptyset$, so $\pi(U)$ is an open
neighbourhood of $x=\pi(m) \in X$. The sheaf $(R^q\,
\pi^\tau_*)(\cU^1)$ is associated to the pre-sheaf $\big(V \subset
X\big) \mapsto H^q\big(\pi^{-1}(V);\ZZ_2, \cU^1|_{\pi^{-1}(V)} \big)$, and for $q>0$ we get  
$$ 
H^q\big(\pi^{-1}(\pi(U));\ZZ_2,  \cU^1 \big) = H^q\big(U \cup
\tau(U);\ZZ_2,  \cU^1 \big) =0  
$$
since $U$ is contractible. It follows that for $q>0$, the  sheaf $(R^q
\, \pi^\tau_*)(\cU^1)$ is supported on the fixed point set
$M^\tau$. Moreover, by choosing an invariant open neighbourhood around
each connected component of $M^\tau$ which equivariantly retracts onto
that O-plane, we conclude that the stalk of $(R^q\,
\pi^\tau_*)(\cU^1)$ at any point $x\in M^\tau$ is simply
$H^q(\pt;\ZZ_2,  \cU^1)$, which equals $0$ for $q$ odd and $\ZZ_2$ for
$q$ even. Since ${\rm Aut}(\ZZ_2) = 1$, there is also no monodromy.

Inserted into the Grothendieck spectral sequence, the result follows by the exact sequence of low degree terms since $E^{p,1}_2=0$, $E^{2,0}_2= H^2(X, \pi^\tau_*
\, \cU^1)$,  and the third differential $\dd_3$ maps $E^{0,2}_2=
H^0(M^\tau,\ZZ_2)$ to $E^{3,0}_2= H^3(X,  \pi^\tau_*\, \cU^1)$. 
\end{proof}

\begin{remark}
Theorem \ref{fiveterm} gives a genuine obstruction to possible sign choices: the
obstruction to surjectivity of $\sigma_2$ is the third differential
$\dd_3$ and the obstruction class is realised by an `orbifold
$2$-gerbe' on $X$ with band in the bundle of groups $\pi^\tau_*\,
\cU^1$, which lifts to a trivialisable Real bundle $2$-gerbe on
$M$ that corresponds physically to the flux of a flat three-form
Ramond--Ramond field; this is an obstruction only on orientifold
spacetimes which have O-planes of dimension a multiple of $4$~\cite{BGS,Keur}. If the obstruction map $\dd_3$ is trivial, then there is a short
exact sequence 
\begin{equation}
\label{eq:exactseq}
0\longrightarrow H^2(X, \pi^\tau_*\, \cU^1) \longrightarrow H^2(M;
\ZZ_2, \cU^1) \xrightarrow{ \ \sigma_2 \ } H^0(M^\tau,\ZZ_2)
\longrightarrow 0
\end{equation}
but it is not necessarily split.
We shall encounter some concrete examples of these facts in the following sections.
\end{remark}

%%%%%%%%%%%%%%%%%%%%%%%%%%%%%%%%%%%%%%%%%%%%%%%%%%%%%%%%%%%%%%%%%%%%%

\section{Sign choices for spherical orientifolds}
   \label{sec:constraints}     

Spheres provide important examples of orientifold compactifications of
string theory with background fluxes. In this section we will consider
the $n$-spheres $S^{n,1}$ for $n>1$ equipped with an involution with  fixed points, and
 apply the techniques from the previous sections to derive a general formula for the
geometric sign choices for a broad class of orientifolds.

  \subsection{The three-sphere $S^{3, 1}$}
  \label{sec:S13}

One of the most common examples of a string orientifold involves the three-sphere $S^{3, 1} \subset \RR^4$ equipped with the involution $(x, y, z, w) 
\mapsto (-x, -y, -z, w)$. This is equivalent to the Lie group $SU(2)$
equipped with the involution $\tau\colon g \mapsto g^{-1}$, so it has
two fixed points $\pm\, I$ corresponding to the north and south poles
of the three-sphere. The long exact sequence \eqref{eq:beginlongsequence}  yields 
\begin{equation*}
\xymatrix@R=2ex{ 
& 0 =  H^1(S^{3,1}, \cU(1)) \ \ar@{->}[r] & \ \ar@{->}[r] H^1(S^{3,1}; \ZZ_2, \cU^0) \ \ar@{->}[r] & \  H^2(S^{3,1}; \ZZ_2, \cU^1 ) \ \ar@{->}`r[d]`[lll]`[ddlll] `[ddll][ddll]&\\
&&&&\\
 & \   H^2(S^{3,1}, \cU(1)) \ \ar@{->}[r] & \  H^2(S^{3,1}; \ZZ_2,  \cU(1)) = 0 \  &    
}
\end{equation*}
where the vanishing of the last group follows from the spectral sequence for  
  Grothendieck's equivariant sheaf cohomology. Since $H^2(S^{3,1},
  \cU(1)) = H^3(S^3,\ZZ) =\ZZ $, we obtain 
   $$
 0  \longrightarrow  H^1(S^{3,1}; \ZZ_2, \cU^0) \xrightarrow{
   \ \partial_2 \ } H^2(S^{3,1}; \ZZ_2, \cU^1 ) \longrightarrow  \ZZ \longrightarrow  0 \ .
 $$ 
  The first group equals $\ZZ_2$ since it is the group
  of isomorphism classes of $\ZZ_2$-equivariant line bundles on $  S^{3,1}$. As any line
  bundle on $S^3$ is trivial, the only possible $\ZZ_2$-equivariant
  extensions are given by multiplication with $\tau = +1$ or $\tau =
  -1$ on the fibre $\CC$. This gives the
  Real Brauer group
 $$
  H^2(S^{3,1}; \ZZ_2, \cU^1 ) = \ZZ_2 \oplus \ZZ \ ,
 $$
where the complex Dixmier--Douady class of $(a,b)$ is $b$. This can
also be verified by using the Grothendieck spectral sequence.   
 
The group of sign choices is $H^0\big((S^{3,1})^\tau,\ZZ_2\big) =
\ZZ_2 \oplus \ZZ_2$. We proceed to show that the sign choice map 
   $$
 \sigma_2 \colon \ZZ_2 \oplus \ZZ \longrightarrow \ZZ_2 \oplus \ZZ_2
 $$
 sends $\ZZ_2$ to the diagonal, and that it maps the
 basic bundle gerbe generating $\ZZ$ to~$(-1, 1)$. 
  
For this, we will apply the construction of the sign choice map from
Example~\ref{ex:tautgerbe}, with $H$ given by $2\pi\ii$ times the normalized volume form on
$S^3$. Consider the group $SU(2)$ and take the basepoint $I$. We
regard $S^3$ as $\RR^3$ with the identity $I$ as $(0, 0, 0)$ and
$\tau(x, y, z) = (-x, -y, -z)$. Then the fixed point $-I$ is at
infinity. Let $p$ be the path from the origin along the positive $z$-axis to
infinity.  Applying $\tau$ then gives the path from the origin along the negative
$z$-axis to infinity. The surface $\Sigma\subset S^3$ is taken to be the set of
all $(x, 0, z)$ with $ x \geq 0$, and then $\tau(\Sigma) $ is the set
of all $(x, 0, z)$ with $x \leq 0$.  The union
$S=\Sigma\cup\tau(\Sigma)$ is the $(x,z)$-plane, which we take to bound
the ball $B_S$ given by the half-space $(x,y,z)$ with $y\geq0$. The
holonomy $\hol(\Sigma\cup\tau(\Sigma);H)$ is then computed by
integrating $H$ over half of $\RR^3$, or in other words over half of
$S^3$. The integral of $H$ over all of $S^3$ equals $2\pi\ii\,$, and hence the
sought integral
is $\pi\ii$ and the holonomy is $-1$. Thus the sign choice of
the basic bundle gerbe at the point $-I \in SU(2)$ is $-1$. At the
other pole of the sphere we can take all paths and $\Sigma$ constant,
giving the sign choice $+1$. 

We conclude that in the group $\ZZ_2\oplus\ZZ$ of Real bundle gerbes on
$SU(2)$, the basic bundle gerbe is $(0,1)$, while the Real bundle gerbe
coming from the connecting homomorphism, with sign choices $-1$ at
both fixed points, is $(-1,0)$. The geometric sign choices are
${\mathit\Sigma}_2(S^{3,1}) = \ZZ_2  \oplus \ZZ_2$, with the basic bundle
gerbe mapping to $(-1, 1)$ and the coboundary bundle gerbe mapping to
$(-1, -1)$.  Hence the map from $\ZZ_2 \oplus \ZZ$ to $\ZZ_2 \oplus
\ZZ_2$ is $(a, b) \mapsto  (a + [b], a - [b])$, where $b \mapsto [b]$
is the reduction map $\ZZ \to \ZZ_2$.  Thus the sign choice map is
surjective, but $\ZZ_2 \oplus \ZZ_2$ is not a direct summand in the Real
Brauer group. In particular, for all sign choices to arise one needs
to consider Real bundle gerbes with non-vanishing complex Dixmier--Douady
class.

Another way of seeing this is to use the Mayer--Vietoris
sequence. Write $S^{3,1} = U_+ \cup U_-$ as  the union of the upper
and lower hemispheres of $S^3$. These are equivariantly contractible and each
contribute $H^2(\pt; \ZZ_2, \cU^1 )=\ZZ_2$. The involution on the
equator $U_+\cap U_-$ acts freely and the quotient is the real projective plane $\RR
P^2$. Then 
$$
H^1(\RR P^2; \ZZ_2, \cU^1 )\simeq  H^2(\RR P^2, \ZZ_{-1}) \simeq H^0(\RR P^2, \ZZ) = \ZZ$$ 
where  $ \ZZ_{-1}$ is the orientation bundle on $\RR P^2$ and the
second isomorphism is Poincar\'e duality. Since $H^2(\RR P^2; \ZZ_2,
\cU^1 )= H^3(\RR P^2,\ZZ_{-1})=0$ for dimensional reasons, the Mayer--Vietoris sequence gives
$$
0\longrightarrow \ZZ \longrightarrow \ZZ_2\oplus \ZZ \xrightarrow{ \
  \sigma_2 \ }  \ZZ_2\oplus \ZZ_2\longrightarrow 0
$$
where the restriction map $\sigma_2$ sends  $(0,1)$  to $(-1,1)$. This is an instance of the exact sequence \eqref{eq:exactseq} where $\sigma_2$ is a non-split surjection.

\subsection{The $n$-sphere $S^{n,1}$}
\label{sec:n-1-sphere}

Let $S^{n,1}\subset \RR^{n+1}$ denote the $n$-sphere with $n\geq4$, equipped with the Real structure induced by the involution $\tau\colon (x_1, x_2,\dots, x_n,  x_{n+1}) 
\mapsto (-x_1, -x_2,\dots, -x_n, x_{n+1}) $.  The long exact sequence \eqref{eq:beginlongsequence}  yields
\[
0= H^1(S^{n,1}, \cU(1)) \longrightarrow  H^1(S^{n,1}; \ZZ_2, \cU^0) \longrightarrow
H^2(S^{n,1}; \ZZ_2, \cU^1 ) \longrightarrow  H^2(S^{n,1}, \cU(1)) = 0
\ .
\]
We have  $H^1(S^{n,1}; \ZZ_2, \cU^0)=\ZZ_2$ as the trivial line bundle
on $S^{n,1}$ admits two inequivalent Real structures. It follows that the Real Brauer group of $S^{n,1}$ is 
$H^2(S^{n,1}; \ZZ_2, \cU^1 )=\ZZ_2$. On the other hand, the involution
on $S^{n,1}$ has two fixed points, one at each pole of the sphere, so the group of sign choices is $H^0\big((S^{n,1})^\tau,\ZZ_2\big) = \ZZ_2^2 $. We conclude that $\sigma_2$ is not surjective in this case. 

More generally, consider a product of $k$ such Real spheres
$M=S^{n_1,1}\times \dots \times S^{n_k,1}$ with $n_i\geq 4$. Then
$H^2(M; \ZZ_2, \cU^1 )=\ZZ_2$ while $H^0(M^\tau,\ZZ_2) =
\ZZ_2^{ 2^k}$, so the sign choice map $\sigma_2$ is far from
being onto. However, in these cases it is more
natural to ask whether the sign choice map
$\sigma_{p+1}$ associated to $p$-gerbes with $p=n_1+\dots+n_k-2$ is
surjective. An explicit check is
hampered by our inability to geometrically define $p$-gerbes in
general. We consider below some explicit cases for $p=2$ 
where this can be addressed. The discussion would extend to a suitable
geometric realisation of $p$-gerbes for all $p>2$, wherein the
approach sketched below would work by 
inductively defining $\sigma_{p+1}$ in terms of $\sigma_{p}$.

\subsection{Sign choice for equivariant bundle 2-gerbes}\label{2-gerbe}

We briefly discuss the behaviour of the sign choice map $\sigma_3$. Write  $ \cG \To M$  for a bundle gerbe over a manifold $M$.
A {\em bundle $2$-gerbe}
$\mathscr{G}$ over $M$
consists of a surjective submersion $Y \to M$ together with a bundle
gerbe $\cG \To Y^{[2]}$ subject to some additional conditions detailed in \cite{Ste}.  
If $(M,\tau)$ is an orientifold, then we assume that $\mathscr{G}$ is equivariant in the strong sense that $\tau$ lifts to act on all the spaces by involutions.  The equivariant sheaf cohomology group $H^3(M; \ZZ_2, \cU^0) $ classifies $\ZZ_2$-equivariant bundle $2$-gerbes, up to  appropriate equivalence. 

 The connecting homomorphism $\partial_3$
sends Real bundle gerbes to $\ZZ_2$-equivariant bundle $2$-gerbes, and
$\sigma_3 \circ \partial_3 = \sigma_2$ by Proposition \ref{prop:sign-connecting}. Let $m \in M^\tau$, and let
$y \in Y_m$ be the corresponding fibre of $Y$ over $M$, then $\tau \colon \cG_{(y, \tau(y))} \to \cG_{(\tau(y), y)}$.  From the structure
of the bundle $2$-gerbe there is also an inversion map $\cG_{(\tau(y),
  y)} \to \cG_{(y, \tau(y))}$ which acts by conjugation. 
Thus composing the equivariant structure and the inverse gives a Real
structure $\cG_{(y, \tau(y))} \to \cG_{(y, \tau(y))} $, which has a
sign choice.  We declare this 
to be the sign choice $\sigma_3(\mathscr{G})(m)$ of the equivariant bundle
$2$-gerbe $\mathscr{G}$ over $M$ at the fixed point $m$. To justify
this definition, note that in the case of bundle gerbes the same construction would
define an equivariant action $P_{(y, \tau(y))} \to P_{(y, \tau(y))}$ which in the notation of Section \ref{sec:general} would be 
$f \mapsto \sigma(\cG)(m) \, \tau(f)^{-1} $, and we would take $\sigma(\cG)(m)$ as
our sign choice.  But then $\sigma(\cG)(m) = \< f\, \tau(f) \> =
\sigma_2(\cG)(m)$. 

\begin{example}[$\sigma_3$ is  onto]  We show that for $S^{4,1}$ the sign choice homomorphism
$$
\sigma_3 \colon H^3(S^{4,1}; \ZZ_2, \cU^0) \longrightarrow \ZZ_2 \oplus \ZZ_2
$$
 is surjective.  Since $S^{4,1}$ is $2$-connected and the
complex Dixmier--Douady class of any
Real bundle gerbe on $S^{4,1}$ vanishes, it follows from Proposition \ref{prop:constraint} below that $\sigma_2$ on such Real bundle gerbes is either $(1, 1)$ or $(-1, -1)$. Hence $\sigma_2$ on the corresponding 
equivariant bundle $2$-gerbes is the same.  To obtain other sign choices it  suffices to find an equivariant bundle $2$-gerbe
whose sign choice is $(1, -1)$ or $(-1, 1)$, as we can  get the
other sign choice by tensoring with the $(-1, -1)$ bundle $2$-gerbe.

We  
show that the bundle $2$-gerbe is the Chern--Simons bundle $2$-gerbe \cite{CarJohMurWan}
for the standard $SU(2)$-bundle $P\to S^4$. The total space of this bundle is given by 
$$
P = \big\{ (w^0, w^1) \in \HH^2 \ \big| \ w^0\, \bar w^0 + w^1\, \bar
w^1 = 1 \big\} = S^7 \ \subset \ \HH^2 = \RR^8 \ ,
$$
and the cosets of the right action by the unit quaternions $Sp(1) = SU(2) $ is $P_1\HH \simeq S^4$.
Define the orientifold action on $P$ by $\tau(w^0, w^1) = (-w^0, w^1)$, which
descends to the action on $S^{4,1}$. 
Over $[1, 0]$ the involution $\tau$ acts freely and over $[0, 1]$ it acts
trivially. The Chern--Simons bundle $2$-gerbe is defined as follows.
There is a map $\rho \colon P^{[2]} \to SU(2)$ defined by $p_1 = p_2\, \rho(p_1, p_2)$. 
Then the bundle gerbe over $P^{[2]}$ is the pullback of the basic
bundle gerbe $\cG \To SU(2)$, which we denote  by $\cH$. 
 The action of $\tau$ on $P$ satisfies $\rho \circ \tau = \rho$,
so the equivariant structure can be taken to be the identity 
$$
\cH_{(y_1, y_2)} = \cG_{\rho(y_1, y_2)} = \cG_{\rho(\tau(y_1),
  \tau(y_2))} =  \cH_{(\tau(y_1), \tau(y_2))} \ .
$$

There are two cases: 

\begin{itemize}

\item Let $y = (0, 1)$, then $\tau(y) = (0,1)$ so that $\rho(y,
\tau(y)) = I \in SU(2)$. Thus $\cH_{(y, \tau(y))} = \cG_I$.  The sign choice we want is that of this Real structure on $ \cG_{I}$, which 
we know from Section~\ref{sec:S13} equals $+1$.  

\item Let $y = (1, 0)$, then $\tau(y) = (-1,0)$ so that $\rho(y,
\tau(y)) = -I \in SU(2)$. The inversion map on the bundle gerbe is the
usual Real structure when identifying $SU(2) =
S^{3,1}$, so  the sign choice we want is that of this Real structure on $\cG_{-I}$, which 
we know from Section~\ref{sec:S13} equals $-1$.  

\end{itemize}

\end{example}

\begin{example}[$\sigma_3$ is not onto] Consider 
 the quadric $M = S^{2, 1} \times S^{2, 1}$. There
are four fixed points in this case, so the group of sign choices is $H^0(M^\tau,\ZZ_2)=\ZZ_2^{4}$.
An explicit calculation shows that all the geometric sign choices, including equivariant
bundle $2$-gerbes, are contained in the subgroup of sign choices which 
have an even number of $-1$ signs. Hence ${\mathit\Sigma}_3(M) \neq
H^0(M^\tau, \ZZ_2)$. 
\end{example}

\subsection{Constraints on geometric sign choices}
\label{sec:constraining}

We shall now derive a criterion that can often be used to constrain the possible
geometric sign choices on an orientifold $(M,\tau)$. 

\begin{definition} 
Let $m \neq m'$ be fixed points in $M^\tau$. A {\em Real $d$-span} of
$m$ and $m'$ is a Real map $\mu_{m,m'} \colon S^{d, 1} \to  M$ with 
$\mu_{m,m'}(1, 0, \dots, 0) = m$ and $\mu_{m,m'}(-1, 0, \dots, 0)=m'$. 
\end{definition}

The problem of finding a Real $d$-span may be formulated in terms of
certain constraints on $(M,\tau)$ as follows. Suppose that $M$ is $d-1$-connected with
$m$ and $m'$ isolated fixed points of $\tau$. Assume that it is
possible to choose a path $\gamma$
from $m$ to $m'$ which avoids the fixed points.  Then
$\tau(\gamma)$ is another path from $m$ to $m'$ that does not intersect $\gamma$
except at $m$ and $m'$, and $\gamma\cup\tau(\gamma)$ defines a map of $S^{1, 1}$
into $M$, which is a Real $1$-span since it commutes with the
involutions on $M$ and on $S^1$. Next if $d > 1$, then $M$ is
$1$-connected and suppose that a spanning surface $\Sigma$ for
$\gamma$ and $\tau(\gamma)$ can be chosen such that it avoids the fixed point set. Then $\tau(\Sigma)$ also has boundary
$\gamma\cup\tau(\gamma)$ and the union $\Sigma\cup \tau(\Sigma)$ determines a Real 
$2$-span. Continuing in this way, and as long as  the fixed
point set can be avoided, we obtain a Real $d$-span. 
 
\begin{proposition}
 \label{prop:constraint}
Let $\xi$ be an invariant $p$-gerbe over an orientifold $(M,\tau)$ with complex characteristic
class ${\rm C}_{p+2}(\xi)\in H^{p+2}(M,\ZZ)$ and sign 
 choice $\sigma_{p+1}(\xi) \in H^0( M^\tau , \ZZ_2)$. If $\mu_{m,m'}$
 is a Real $(p+2)$-span of $m,m' \in M^\tau$, then
\begin{equation}
\label{eq:signconstraint}
 \sigma_{p+1}(\xi)(m) = (-1)^{\< \mu_{m,m'}^*({\rm C}_{p+2}(\xi)),
   S^{p+2}\>} \ \sigma_{p+1}(\xi)(m'\,) \ .
 \end{equation}
 \end{proposition}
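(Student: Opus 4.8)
The plan is to reduce the assertion to a universal computation on the standard sphere $S^{p+2,1}$ and then to establish it there by induction on $p$. The reduction is pure naturality: the sign choice map $\sigma_{p+1}$ is a homomorphism that commutes with pullbacks along Real maps (as recorded after \eqref{eq:signchoice}), and the complex characteristic class is natural, ${\rm C}_{p+2}(\mu_{m,m'}^*\xi) = \mu_{m,m'}^*\,{\rm C}_{p+2}(\xi)$. Since $(S^{p+2,1})^\tau$ consists precisely of the two poles, which $\mu_{m,m'}$ sends to $m$ and $m'$, pulling $\xi$ back along $\mu_{m,m'}$ identifies $\sigma_{p+1}(\xi)(m)$ and $\sigma_{p+1}(\xi)(m'\,)$ with the two pole signs of the invariant $p$-gerbe $\mu_{m,m'}^*\xi$ on $S^{p+2,1}$, while the exponent $\langle \mu_{m,m'}^*\,{\rm C}_{p+2}(\xi), S^{p+2}\rangle$ becomes the value $n\in H^{p+2}(S^{p+2,1},\ZZ)=\ZZ$ of its complex characteristic class on the fundamental class. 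It therefore suffices to prove that, for an arbitrary invariant $p$-gerbe $\zeta$ on $S^{p+2,1}$, the product of its two pole signs equals $(-1)^n$.

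For this sphere computation I would induct on $p$, the base case $p=-1$ being a direct winding-number calculation for Real functions on the circle $S^{1,1}$, and the case $p=1$ being the explicit calculation for $S^{3,1}$ carried out in Section~\ref{sec:S13}. Writing $S^{p+2,1}=U_+\cup U_-$ as the union of its two polar caps, which are equivariantly contractible onto the poles and meet along an equatorial $S^{p+2,0}$ carrying a free involution with quotient $\RP^{p+1}$, the Mayer--Vietoris sequence in equivariant sheaf cohomology (generalising the second computation in Section~\ref{sec:S13}) presents the pole signs as the two restriction maps to $U_\pm$. The product of the pole signs is thus a homomorphism $H^{p+1}(S^{p+2,1};\ZZ_2,\cU^{[p+2]})\to\ZZ_2$, and I would show it factors through the complex characteristic class.

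The factorisation is where the inductive hypothesis enters. The long exact sequence \eqref{eq:beginlongsequence} identifies the kernel of ${\rm C}$ with the image of $\partial_{p+1}$, since $H^p(S^{p+2,1},\cU(1))=H^{p+1}(S^{p+2},\ZZ)=0$. Hence every class of vanishing complex characteristic class is a coboundary $\partial_{p+1}(\eta)$ of an invariant $(p-1)$-gerbe $\eta$; by Proposition~\ref{prop:sign-connecting} its pole signs coincide with those of $\eta$, whose own complex characteristic class lies in $H^{p+1}(S^{p+2,1},\ZZ)=0$. Applying the inductive hypothesis to $\eta$ with the equatorial Real $(p+1)$-span $S^{p+1,1}\hookrightarrow S^{p+2,1}$ (through both poles, on which the vanishing class pairs to zero) forces the pole signs of $\eta$ to agree. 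Thus the product-of-signs homomorphism annihilates $\ker{\rm C}$ and descends to a homomorphism $\im({\rm C})\subseteq\ZZ\to\ZZ_2$.

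The main obstacle is the remaining normalisation: showing that this induced homomorphism is reduction modulo $2$ for every $p$, equivalently that a class of odd complex characteristic class genuinely has opposite pole signs. This is exactly the factor of $-1$ which in Section~\ref{sec:S13} arose from integrating the generator of $H^3(S^3,\ZZ)$ over a fundamental domain of $\tau$, i.e. over half the sphere. The difficulty is that for $p>2$ there is no geometric model of a $p$-gerbe with which to perform such an integration, so this half-sphere input must be extracted cohomologically: from the action of $\tau$ by $(-1)^{p+2}$ on $H^{p+2}(S^{p+2},\ZZ)$ in \eqref{eq:beginlongsequence}, which constrains $\im({\rm C})$, together with the Mayer--Vietoris connecting map and the twisted Poincar\'e duality governing $H^\ast(S^{p+2,0};\ZZ_2,\cU^{[p+2]})\cong H^\ast(\RP^{p+1};\text{tw})$, which pins down the value of the product map on the generator.
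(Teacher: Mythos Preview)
Your reduction to the sphere via naturality of the sign choice map and of the complex characteristic class is exactly what the paper does; the paper's proof is essentially those two sentences. Where you diverge is in how the sphere case itself is established. The paper does not argue inductively: it simply invokes the explicit computations already carried out in Sections~\ref{sec:S13} and~\ref{sec:n-1-sphere}, where the Real Brauer group of $S^{3,1}$ is shown to be $\ZZ_2\oplus\ZZ$ and the sign choice map is computed to be $(a,b)\mapsto(a+[b],a-[b])$ by constructing the tautological Real bundle gerbe and evaluating its holonomy over half the sphere (Example~\ref{ex:tautgerbe}). That direct geometric calculation \emph{is} the normalisation step you flag as the main obstacle.

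Your inductive scheme is sound as far as it goes: the argument that the product-of-pole-signs kills $\ker({\rm C})$ via Proposition~\ref{prop:sign-connecting} and the equatorial $(p+1)$-span is correct and clean. But the normalisation, showing that a class of odd complex degree genuinely has opposite pole signs, is not actually proved in your proposal; you outline a Mayer--Vietoris and twisted Poincar\'e duality strategy but stop short of executing it. The paper avoids this cohomological computation entirely by doing the explicit holonomy calculation for $p=1$ (and, for $p=2$, the explicit Chern--Simons bundle $2$-gerbe in Section~\ref{2-gerbe}); this is less systematic than what you are attempting, but it is complete for the values of $p$ the paper actually uses. Your approach would, if completed, supply a uniform argument for all $p$ that the paper does not give, but as it stands the crucial step remains a sketch.
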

 \begin{proof}
Since the span is a Real map, it pulls back the invariant $p$-gerbe on $M$ to the sphere $S^{p+2, 1}$.
The sign of the pulled back $p$-gerbe at $(1,0,  \dots, 0)$ is $  \sigma_{p+1}(\xi)(m)$ 
and at $(-1,0,  \dots, 0)$ it equals $  \sigma_{p+1}(\xi)(m')$.
The complex characteristic class of the pulled back $p$-gerbe
is $\mu_{m,m'}^*({\rm C}_{p+2}(\xi))$.   The result follows by the calculation in Sections~\ref{sec:S13} and~\ref{sec:n-1-sphere}, which shows that an invariant $p$-gerbe on  $S^{p+2, 1}$ has 
signs that differ precisely by the periods of its complex characteristic class.
\end{proof}

The formula \eqref{eq:signconstraint} constrains the possible sign choices because once a sign
choice is made at one fixed point, all the other sign choices are
determined by the periods of the class of the $p$-gerbe. This is
independent of the Real or equivariant structure on the $p$-gerbe and
depends solely on the periods of ${\rm C}_{p+2}(\xi)$. In
particular, if the $p$-gerbe $\xi$ has vanishing complex characteristic
class ${\rm C}_{p+2}(\xi)=0$, then the sign choice map
$\sigma_{p+1}(\xi)$ is constant and all sign choices are the same.
 
%%%%%%%%%%%%%%%%%%%%%%%%%%%%%%%%%%%%%%%%%%%%%%%%%%%%%%%%%%%%%%%%%%%%%
 
\section{Sign choices for toroidal orientifolds}
  \label{prop:nonsimplyconn}

Tori constitute important examples of flat orientifold
compactifications of string theory.
 In this section we consider examples of multiply connected
 orientifolds involving tori. The allowed sign choices in these cases have been studied from different perspectives in~\cite{deBDHKMMS,GaoHor,Ros2013,DMR2}.

\subsection{The circle $S^{1,1}$}

We denote by $S^{1, 1}$  the circle $S^1=U(1)$ with the Real structure
$\tau(u) = \bar u$ and fixed points $\pm\, 1$. We 
 can identify $H^0\big( (S^{1,1})^\tau, \ZZ_2\big) $ with $\ZZ_2
 \oplus \ZZ_2$ by the map $f \mapsto (f(1), f(-1))$. There are two
types of string theories to which this pertains to: the orientifold type~IIA string theory
compactified on $S^{1,1}$ is the type~IA theory which under T-duality
maps to type~I string theory on a circle, or equivalently the orientifold
type~IIB 
theory on $S^{0,2}$, whereas the type~$\widetilde{\rm
  IA}$ theory is the T-dual of the type~IIB orientifold on
$S^{2,0}$~\cite{Hori,OlsSza}. The two theories are
distinguished by the fact that in the former case the two O-planes
have the same sign while in the latter case they have opposite
signs. Both theories arise within our present formalism if all
sign choices are possible, which is guaranteed by
\begin{proposition}
The sign choice map 
 \begin{equation}
 \label{eq:circle}
 \sigma_2 \colon H^2(S^{1,1}; \ZZ_2, \cU^1) \longrightarrow
 H^0\big((S^{1,1})^\tau, \ZZ_2 \big) = \ZZ_2^2 
 \end{equation}
 is an isomorphism. 
  \end{proposition}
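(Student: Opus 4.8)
The plan is to mimic the Mayer--Vietoris computation carried out for $S^{3,1}$ in Section~\ref{sec:S13}, now treating the two fixed points $\pm\,1$ as the analogues of the poles. First I would cover $S^{1,1}$ by two $\tau$-invariant open arcs $U_N\ni +1$ and $U_S\ni -1$, each equivariantly deformation retracting onto its fixed point. Since the inclusion of a fixed point realises the sign choice there, this gives $H^2(U_N;\ZZ_2,\cU^1)\cong H^2(\pt;\ZZ_2,\cU^1)=\ZZ_2$ and likewise for $U_S$, and it exhibits the restriction of a class to $U_N$ (respectively $U_S$), followed by the contraction, as the evaluation $\iota_{+1}^*$ (respectively $\iota_{-1}^*$) computing $\sigma_2$ at that O-plane.

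The next step is to control the overlap. The intersection $U_N\cap U_S$ consists of two disjoint arcs interchanged freely by $\tau$, that is, it has the form $U\sqcup\tau(U)$ with $U$ a contractible arc and $\tau(U)\cap U=\emptyset$. Exactly as in the proof of Theorem~\ref{fiveterm}, where the invariant direct image over such a free piece reduces the computation to the contractible quotient, this forces $H^q(U_N\cap U_S;\ZZ_2,\cU^1)=0$ for all $q\geq1$.

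Feeding these inputs into the Mayer--Vietoris sequence for Grothendieck equivariant sheaf cohomology associated with $S^{1,1}=U_N\cup U_S$, the relevant segment
$$
H^1(U_N\cap U_S;\ZZ_2,\cU^1)\longrightarrow H^2(S^{1,1};\ZZ_2,\cU^1)\longrightarrow H^2(U_N;\ZZ_2,\cU^1)\oplus H^2(U_S;\ZZ_2,\cU^1)\longrightarrow H^2(U_N\cap U_S;\ZZ_2,\cU^1)
$$
has vanishing outer terms, so the middle restriction map is an isomorphism. Under the contraction identifications $H^2(U_N;\ZZ_2,\cU^1)\cong\ZZ_2$ and $H^2(U_S;\ZZ_2,\cU^1)\cong\ZZ_2$ this restriction is precisely $\xi\mapsto\big(\sigma_2(\xi)(+1),\sigma_2(\xi)(-1)\big)=\sigma_2(\xi)$, whence $\sigma_2$ is an isomorphism onto $\ZZ_2\oplus\ZZ_2=H^0\big((S^{1,1})^\tau,\ZZ_2\big)$.

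I expect the main obstacle to be bookkeeping rather than conceptual: one must check that $U_N,U_S$ can be chosen $\tau$-invariant with equivariant retractions onto the fixed points, and that the Mayer--Vietoris restriction is genuinely identified with $\sigma_2$ (immediate from $\sigma_{p+1}(\xi)(m)=\iota_m^*(\xi)$). A parallel route, which connects to the stabilisation story, is to note that $\dim S^{1,1}=1$ makes \eqref{eq:beginlongsequence} degenerate so that $\partial_2$ is an isomorphism; then $\sigma_2=\sigma_1\circ\partial_2^{-1}$ by Proposition~\ref{prop:sign-connecting}, and it remains to compute $H^1(S^{1,1};\ZZ_2,\cU^0)\cong\ZZ_2^2$ as a cokernel from \eqref{eq:beginlongsequence} and to evaluate $\sigma_1$ on generators. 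The delicate point in that alternative is the winding-number bookkeeping: a Real function with $g(1)=1$ necessarily has $g(-1)=(-1)^{w}$, where $w$ is its winding number, and this relation is what pins down the two independent sign configurations.
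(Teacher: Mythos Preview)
Your Mayer--Vietoris argument is correct and gives a clean proof. The paper, however, takes precisely your alternative route: it uses the fact that $\dim S^{1,1}=1$ forces the connecting map $\partial_2\colon H^1(S^{1,1};\ZZ_2,\cU^0)\to H^2(S^{1,1};\ZZ_2,\cU^1)$ from \eqref{eq:beginlongsequence} to be an isomorphism, and then shows $\sigma_1$ is an isomorphism by establishing exactness of
\[
H^0(S^{1},\cU(1))\xrightarrow{\ f\mapsto f\,(\bar f\circ\tau)\ } H^0(S^{1,1};\ZZ_2,\cU^1)\xrightarrow{\ \sigma_0\ } H^0\big((S^{1,1})^\tau,\ZZ_2\big)\longrightarrow 0\ .
\]
Since every line bundle on $S^1$ is trivial, Proposition~\ref{prop:Z2-structures} identifies $H^1(S^{1,1};\ZZ_2,\cU^0)$ with this cokernel, and $\sigma_1$ with the map induced by $\sigma_0$. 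The exactness proof is exactly the winding-number bookkeeping you anticipate: one lifts a Real $g$ with $g(\pm 1)=1$ to $\hat g\colon\RR\to\RR$, finds the winding number is even, and sets $\hat f=\tfrac12\,\hat g$. Your Mayer--Vietoris approach is arguably more economical for $S^{1,1}$ because the overlap contributes nothing; the paper's approach is more constructive, giving an explicit description of all $\ZZ_2$-equivariant line bundles on the circle in terms of Real functions, and this concrete picture is reused in the subsequent torus computations in Section~\ref{prop:nonsimplyconn}.
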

  \begin{proof}
 The long exact 
 sequence  \eqref{eq:beginlongsequence} in this case degenerates to 
 \begin{equation}
 \label{eq:circle-connecting}
 0 \longrightarrow H^1(S^{1,1}; \ZZ_2, \cU^0)  \longrightarrow
 H^2(S^{1,1}; \ZZ_2, \cU^1 ) \longrightarrow 0 \ ,
 \end{equation}
 so we need to understand the space of equivariant line
 bundles over $S^1$. Every  line bundle over $S^1$ is trivial and the
 generic $\ZZ_2$-action on the trivial bundle  is $g\, \tau$, where
 $g:S^{1}\to U(1)$ and $\tau$ is the trivial action on 
 the trivial bundle.
  To compute the number of inequivalent $\ZZ_2$-structures on the trivial bundle, we need to calculate the quotient
 in Proposition \ref{prop:Z2-structures}.  We proceed by showing that the sequence
 \begin{equation}
 \label{eq:circle-exact}
 H^0(S^{1}, \cU(1)) \xrightarrow{f \mapsto f\, (\bar f \circ \tau) }  H^0(S^{1,1}; \ZZ_2, \cU^1) 
 \xrightarrow{ \ \sigma_0 \ } H^0\big((S^{1,1})^\tau, \ZZ_2\big) \longrightarrow 0
 \end{equation}
 is exact, from which it follows that 
 $$
 \sigma_1 \colon H^1(S^{1,1}; \ZZ_2, \cU^0) \longrightarrow H^0\big((S^{1,1})^\tau, \ZZ_2\big)
 $$
 is an isomorphism. The result then follows by  the connecting isomorphism \eqref{eq:circle-connecting}.
   
To establish the exactness of \eqref{eq:circle-exact},  we show that
for any $g \in H^0(S^{1,1}; \ZZ_2, \cU^1) $ with $g(1) = 1$ and $g(-1)
= 1$, there exists a function $f \colon S^1 \to U(1)$ with 
$ g(u)  = f(u) \, \overline{ f(\bar u) }$.
Identify $S^1 = \RR / \ZZ$ and lift $g$ to $\hat g \colon \RR \to
\RR$. Then $\hat g(t + 1) = \hat g(t) + n$ for some $n \in \ZZ$ which is the winding number of $g$. 
  Because $g(1) = 1$   we can assume that $\hat g (0 ) = 0$, and then by $g(-1) = 1$ it follows that  $\hat g(\frac{1}{2}) = k$ for some $k \in \ZZ$.  Also  because  $\overline{g (\bar u) } = g(u) $ for all $u \in S^1$, we have 
$$
- \hat g( 1 - t) = \hat g(t) + m 
$$
for some $m \in \ZZ$. Taking $t=\frac12$ and $t=0$ then shows that
$m=-2k$ and $n=2k$.
Define $\hat f(t ) = \frac12\, \hat g(t)$, then 
$$
\hat f(t +1) = \hat f(t)  + k 
$$
and 
\begin{align*}
\hat f (t) - \hat f(1 - t) = \tfrac12\,\hat g(t) - \tfrac12\,\hat g(1
  - t) =  \hat g(t) -  k \ . 
\end{align*} 
We conclude that $\hat f$ descends to  a well-defined function $f
\colon S^1 \to U(1)$   satisfying $ f(u) \, \overline{ f(\bar u) } = g(u)$ as required.
  \end{proof}
 
\subsection{The two-torus $S^{1,1} \times S^{1,1}$}
\label{sec:2torus}

The case of a two-torus is analogous  but more complicated since now there are non-trivial line bundles on $T^2=S^{1,1} \times S^{1,1}$.  These are classified by $H^2(T^2, \ZZ) = \ZZ$.  A line bundle whose Chern class is the generator can be described as follows \cite{GaoHor}. 
 If $\boldsymbol{ x},\boldsymbol{ y} \in \RR^2$ let 
 $$
 c({\boldsymbol x, \boldsymbol y}) = \exp\big( 2 \pi \ii x^1\, ( x^2 -
 y^2) \big) \ .
 $$
   Define an equivalence relation  on $\RR^2 \times U(1)$ by $(\boldsymbol{ x}, u) \sim (\boldsymbol{ y}, v)$ if $\boldsymbol{ x} - \boldsymbol{  y} \in \ZZ^2$ and $v =  c(\boldsymbol{ x} , \boldsymbol{ y} )\, u$. The sought $U(1)$-bundle $P$ is defined as the quotient by this equivalence relation, with the projection $\pi([\boldsymbol{ x} , u]) = [\boldsymbol{ x} ]$  where $[\boldsymbol{ x}]$ denotes the orbit in $S^1 \times S^1 = \RR^2 / \ZZ^2$.  We lift the Real structure on $S^{1,1} \times S^{1,1}$ to $P$ by defining $\tau([\boldsymbol{ x}, u]) = [-\boldsymbol{ x}, u]$. Checking that this is well-defined  reduces to observing that $c(-\boldsymbol{ x}, -\boldsymbol{ y}) = c(\boldsymbol{ x}, \boldsymbol{ y})$.  
 The fixed points of $\tau$ on $T^2$ are $(0, 0)$, $(\frac12, 0)$, $(0, \frac12)$ and $(\frac12, \frac12)$.  On the corresponding fibres the Real 
 structure acts by 
 $$
 \tau\big([\boldsymbol{ x}, u]\big) = [-\boldsymbol{ x}, u] =
 \big[\boldsymbol{ x}, \exp( 4 \pi \ii x^1\, x^2)\, u \big] \ ,
 $$
so  $\tau$ acts by the identity on the four fixed points. Hence $\sigma_1(P) = (1, 1, 1, -1)$
 where we denote a sign choice by its values on the four fixed points as listed above. 
 In other words, every odd power of $P$ admits a $\ZZ_2$-structure with this same sign choice and even powers have 
 trivial sign choice.
 
Next we need to study the quotient in Proposition \ref{prop:Z2-structures} and consider the sequence 
 $$
  H^0(T^2, \cU(1)) \xrightarrow{f \mapsto f\, (\bar f \circ \tau) }  H^0(T^2; \ZZ_2, \cU^1) 
 \xrightarrow{ \ \sigma_0 \ } H^0\big((T^2)^\tau, \ZZ_2\big) = \ZZ_2^4 \ .
 $$
 Let $E \subset H^0\big((T^2)^\tau, \ZZ_2\big) $ denote the subgroup
 of sign choices whose product is $+1$, that is,
 there are an even number of $-1$'s.  Then $E \simeq \ZZ_2^{3}$ as any element is determined by its value
 on just three of the four fixed points. We  have
 
 \begin{proposition}
\label{prop:imageE}
 The image of $\sigma_0$ is $E$.
 \end{proposition}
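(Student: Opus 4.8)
The plan is to prove the two inclusions $\im(\sigma_0)\subseteq E$ and $E\subseteq\im(\sigma_0)$ separately; since $\sigma_0$ is a homomorphism and $E$ is a subgroup this suffices. Throughout I identify $T^2=\RR^2/\ZZ^2$, represent a Real function $g\in H^0(T^2;\ZZ_2,\cU^1)$ by a lift $\hat g\colon\RR^2\to\RR$ with $g=\exp(2\pi\ii\,\hat g)$, and record the sign at a fixed point $\boldsymbol{x}_0$ as $\sigma_0(g)(\boldsymbol{x}_0)=(-1)^{2\hat g(\boldsymbol{x}_0)}$, noting that $2\hat g(\boldsymbol{x}_0)\in\ZZ$ because $g(\boldsymbol{x}_0)=\pm\,1$.

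For the inclusion $\im(\sigma_0)\subseteq E$ the key point is a parity computation parallel to the circle computation above. Periodicity of $\hat g$ gives integers $n_1,n_2$ with $\hat g(\boldsymbol{x}+\boldsymbol{e}_j)=\hat g(\boldsymbol{x})+n_j$, and the Real condition $g(-\boldsymbol{x})=\overline{g(\boldsymbol{x})}$ lifts to $\hat g(-\boldsymbol{x})+\hat g(\boldsymbol{x})=c$ for a constant $c\in\ZZ$. Evaluating this relation at each of the four fixed points and using periodicity to move $-\boldsymbol{x}_0$ back to the chosen orbit representative, I would obtain
$$
2\hat g(0,0)=c,\quad 2\hat g(\tfrac12,0)=c+n_1,\quad 2\hat g(0,\tfrac12)=c+n_2,\quad 2\hat g(\tfrac12,\tfrac12)=c+n_1+n_2.
$$
Summing, the total exponent is $4c+2n_1+2n_2$, which is even, so the product of the four signs is $+1$; equivalently $\sigma_0(g)\in E$. (Alternatively one may restrict $g$ to the two $\tau$-invariant circles $\{x^2=0\}$ and $\{x^2=\tfrac12\}$ and combine the circle analysis with homotopy invariance of the winding number in the $x^1$-direction.)

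For the reverse inclusion $E\subseteq\im(\sigma_0)$, I would exhibit three explicit Real functions whose sign choices are independent elements of $E$. The constant function $g\equiv-1$, the first coordinate $g(u,v)=u$, and the second coordinate $g(u,v)=v$ are all Real, with sign choices (listed on $(0,0),(\tfrac12,0),(0,\tfrac12),(\tfrac12,\tfrac12)$)
$$
\sigma_0(-1)=(-1,-1,-1,-1),\quad \sigma_0(u)=(1,-1,1,-1),\quad \sigma_0(v)=(1,1,-1,-1).
$$
Each lies in $E$, and writing $-1\mapsto 1$ they correspond to the vectors $(1,1,1,1)$, $(0,1,0,1)$, $(0,0,1,1)$ in $\ZZ_2^4$, which are linearly independent over $\ZZ_2$. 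Since $E\simeq\ZZ_2^3$, these three elements span $E$, giving $E\subseteq\im(\sigma_0)$ and hence equality.

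I expect the forward inclusion to be the main obstacle: one must carefully track how the Real constraint interacts with the two winding numbers to see that the product of signs over the fixed locus is forced to be trivial. The reverse inclusion is comparatively routine once one realises that the sign flip at the basepoint $(0,0)$—which no monomial in $u,v$ can produce—is supplied by the constant Real function $-1$.
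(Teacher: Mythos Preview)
Your proof is correct and follows essentially the same approach as the paper: lifting to $\hat g\colon\RR^2\to\RR$, extracting the integers from periodicity and the Real condition, evaluating at the four fixed points, and exhibiting the explicit monomials $\pm 1,\pm u,\pm v$ for surjectivity onto $E$. Your forward inclusion is in fact organised a bit more cleanly than the paper's, which instead rules out the single pattern $(1,-1,-1,-1)$ by the same lifting computation and then appeals to the group structure of the image; your direct parity sum $4c+2n_1+2n_2\equiv 0\pmod 2$ handles all cases at once.
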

 \begin{proof}
Consider the functions $g\colon T^2 \to U(1)$, $(u,v)\mapsto g(u, v)$ defined by the values $\pm\, 1$, $\pm\, u$, $\pm\, v$ and $\pm\, u\,v$.  These give all the sign choices with even numbers of negative signs. 
 Assume now that there is a function $g$ with $g(1, 1) = 1$ and $g(1, -1) = g(-1, 1) = -1$. We show that it must have $g(-1, -1) = 1$. 
 We may lift $g$ to $\hat g \colon \RR^2 \to \RR$ in such a way that $\hat g(0, 0) = 0$. Moreover we must have $\hat g(x+1, y) = \hat g(x, y) + m$, 
 $\hat g(x, y+1) = \hat g(x, y) + n$ and $-\hat g(-x, -y) = \hat g(x,
 y) + k$ for some integers $m$, $n$ and $k$. Substituting $\hat g(0, 0) = 0$ gives $k = 0$.
 In addition we  have $\hat g(\frac12, 0) = p + \frac12$ and $\hat
 g(0, \frac12) = q + \frac12$ for some $p, q \in \ZZ$, so $\hat g(\frac12, \frac12) = - \hat g(-\frac12, -\frac12)$ and $\hat g(\frac12, \frac12) = \hat g(-\frac12, -\frac12) + m + n$.  Thus $\hat g(\frac12, \frac12) = \frac12\,(m+n)$. 
 Applying a similar argument
 gives $\hat g(\frac12, 0) = \frac m2 = p+\frac12$ and $\hat g(0,
 \frac12) = \frac n2 = q+\frac12$. Hence $\hat g(\frac12, \frac12) = p+q + 1 \in \ZZ$ giving the required result. 
Now any sign choice with an odd number of negative signs is a sign choice with an even number of negative signs multiplied by $(1, -1, -1, 1)$, so none
 of them can occur in the image of $\sigma_0$.
  \end{proof}

 \begin{proposition} 
 If $\sigma_0(g) = (1, 1, 1, 1) $, then there exists a smooth function $f \colon T^2 \to U(1)$ such that $g(u, v) = f(u, v)\, \overline{ f(\bar u, \bar v)}$.
 \end{proposition}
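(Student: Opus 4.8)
The plan is to mirror the lifting argument used in the circle case and in the proof of Proposition~\ref{prop:imageE}. First I would identify $T^2 = \RR^2/\ZZ^2$ and lift $g$ to a smooth function $\hat g \colon \RR^2 \to \RR$, normalised so that $\hat g(0,0)=0$; this normalisation is available because $g(1,1)=1$. The double periodicity of $g$ then produces winding numbers $m,n\in\ZZ$ with $\hat g(x+1,y)=\hat g(x,y)+m$ and $\hat g(x,y+1)=\hat g(x,y)+n$, while the Real condition $g(\bar u,\bar v)=\overline{g(u,v)}$ lifts to $\hat g(-x,-y)=-\hat g(x,y)+k$ for some $k\in\ZZ$. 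Evaluating the latter at the origin gives $2\,\hat g(0,0)=k$, so $k=0$.

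Next I would extract the arithmetic consequences of the hypothesis $\sigma_0(g)=(1,1,1,1)$. Exactly as in Proposition~\ref{prop:imageE}, the three non-trivial fixed points $(-1,1)$, $(1,-1)$, $(-1,-1)$ correspond to $(\tfrac12,0)$, $(0,\tfrac12)$, $(\tfrac12,\tfrac12)$, and one computes $\hat g(\tfrac12,0)=\tfrac{m}{2}$ and $\hat g(0,\tfrac12)=\tfrac{n}{2}$. The requirement that $g$ take the value $+1$ at these two points means precisely that $\tfrac{m}{2}$ and $\tfrac{n}{2}$ lie in $\ZZ$, so both $m$ and $n$ are even.

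I would then set $\hat f=\tfrac12\,\hat g$ and verify two things. Because $m$ and $n$ are even, $\hat f(x+1,y)-\hat f(x,y)=\tfrac{m}{2}\in\ZZ$ and $\hat f(x,y+1)-\hat f(x,y)=\tfrac{n}{2}\in\ZZ$, so $\hat f$ descends to a well-defined smooth function $f\colon T^2\to U(1)$. Using $k=0$ in the form $\hat g(-x,-y)=-\hat g(x,y)$, I obtain $\hat f(x,y)-\hat f(-x,-y)=\tfrac12\,\hat g(x,y)+\tfrac12\,\hat g(x,y)=\hat g(x,y)$, which is exactly the identity $f(u,v)\,\overline{f(\bar u,\bar v)}=g(u,v)$.

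The main obstacle is the descent of $\hat f$ to the torus: halving $\hat g$ yields a genuine function on $T^2$ only when both winding numbers are even, and it is precisely the all-$+1$ sign choice that guarantees this evenness. Once $m$ and $n$ are known to be even, the remaining verification is a direct computation entirely parallel to the circle case.
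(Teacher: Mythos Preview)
Your proof is correct and follows essentially the same route as the paper: lift $g$ to $\hat g$ on $\RR^2$ with $\hat g(0,0)=0$, deduce from the Real condition that $\hat g(-x,-y)=-\hat g(x,y)$, use the sign hypothesis at the half-integer points to force the winding numbers $m,n$ to be even, and then set $\hat f=\tfrac12\,\hat g$. The only cosmetic difference is that the paper phrases the evenness as $m=2p$, $n=2q$ with $p=\hat g(\tfrac12,0)\in\ZZ$, whereas you write $\hat g(\tfrac12,0)=\tfrac m2\in\ZZ$; the content is identical.
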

 \begin{proof}
 As in the proof of Proposition~\ref{prop:imageE} we lift $g$ to $\hat g \colon \RR^2 \to \RR$ such that $\hat g(x+1, y) = \hat g(x, y) + m$
 and $\hat g(x, y+1) = \hat g(x, y) + n$. We may assume that $\hat g(0, 0) = 0$ and thus $- \hat g(-x, -y) = \hat g(x, y)$. 
 Moreover $\hat g(\frac12, 0) = p$, $\hat g(0, \frac12) = q$ and $\hat g(\frac12, \frac12) = r$ where $p, q, r \in \ZZ$.  Hence $\hat g(\frac12, 0) = \hat g(-\frac12, 0) + m 
 $ and $ p = \hat g(\frac12, 0) = - \hat g(-\frac12, 0)$ so that $m = 2p$. Similarly $n = 2q$. Define
 $$
\hat f(x, y) = \tfrac12\, \hat g(x, y) 
 $$
 and notice that $\hat f(x+1, y) = \hat f(x, y) + p$ and $\hat f(x, y+1) = \hat f(x, y) + q$, so that $\hat f$ descends to a well-defined function $f \colon T^2 \to U(1)$. 
 Moreover
 $$
 \hat f(x, y) - \hat f(-x, -y) = \tfrac12\, \hat g(x, y) - \tfrac12\,
 \hat g(-x, -y) = \hat g(x, y) \ ,
 $$
 so that $g(u, v) = f(u, v)\, \overline{ f(\bar u, \bar v)}$ as required.
 \end{proof}
 
It follows that the sequence
  \begin{equation}
 \label{eq:torus-long}
  H^1(T^2, \cU(1)) \xrightarrow{P\mapsto P  \otimes \tau^{-1}(P) }
  H^1(T^2; \ZZ_2, \cU^0) \xrightarrow{ \ \partial_2 \ } H^2(T^2; \ZZ_2, \cU^1)  \longrightarrow 0
 \end{equation}
 is exact since $T^2$ is two-dimensional.   Moreover we have
 $$
 H^1(T^2; \ZZ_2, \cU^0) = \ZZ \oplus \ZZ_2^{3}
 $$
 as line bundles on $T^2$ with arbitrary Chern class admit any choice of $\ZZ_2$-action.  From the calculation above
 of the sign choice for the generator in $H^2(T^2, \ZZ)$, we conclude that  $\sigma_1$ maps 
 $\ZZ \oplus \ZZ_2^{3}$ to  $\ZZ_2 \oplus \ZZ_2^{3} = \ZZ_2^{4}$ surjectively. This agrees with the results of~\cite{GaoHor}, which
 we will explain in more generality in Section~\ref{sec:KR-theory}.
  
  The image of the map $P\mapsto P  \otimes \tau^{-1}(P)$ 
consists of the line bundles with even Chern class, so the Real Brauer group is 
 $$
 H^2(T^2; \ZZ_2, \cU^1 )  = \ZZ \oplus \ZZ_2^{3}
 $$
 and the action of $\sigma_2$ is again projection onto
 $\ZZ_2^{4}$. In summary, we have
 
  \begin{proposition}
The Real Brauer group of the  two-torus $T^2=S^{1, 1} \times S^{1, 1}$ is
  $$
 H^2(T^2; \ZZ_2, \cU^1 )  = \ZZ \oplus \ZZ_2^{3}
 $$
 and  the sign choice map  $\sigma_2$ is the projection onto
 $H^0\big((T^2)^\tau, \ZZ_2\big) =
 \ZZ_2^{4}$ induced by  reduction modulo 2 
 on the first factor. 
 \end{proposition}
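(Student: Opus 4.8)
The statement collects the conclusions of the computations already carried out in this subsection, so my plan is to assemble those pieces rather than start afresh. First I would record the structural input: since $T^2$ is two-dimensional, the long exact sequence \eqref{eq:beginlongsequence} collapses to the right-exact sequence \eqref{eq:torus-long}, so $\partial_2$ is surjective and the Real Brauer group is presented as the cokernel of the norm homomorphism $P\mapsto P\otimes\tau^{-1}(P)$. Feeding in the group $H^1(T^2;\ZZ_2,\cU^0)=\ZZ\oplus\ZZ_2^3$ computed just above (the free factor recording the Chern class, the torsion factor the inequivalent $\ZZ_2$-structures of Proposition \ref{prop:Z2-structures}), together with the identification of the image of the norm homomorphism as the subgroup of even-Chern-class classes, I would read off $H^2(T^2;\ZZ_2,\cU^1)=\ZZ\oplus\ZZ_2^3$.

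To pin down $\sigma_2$ I would invoke Proposition \ref{prop:sign-connecting}, which gives $\sigma_2\circ\partial_2=\sigma_1$. Because $\partial_2$ is surjective, $\sigma_2$ is completely determined by $\sigma_1$ on $H^1(T^2;\ZZ_2,\cU^0)$, so it suffices to describe $\sigma_1$. The two calculational inputs for this are already in hand: the generator $P$ of $H^2(T^2,\ZZ)$ carries sign choice $\sigma_1(P)=(1,1,1,-1)$, and by Proposition \ref{prop:imageE} the $\ZZ_2$-structures on the trivial bundle realize exactly the even sign classes $E\cong\ZZ_2^3$. Hence $\sigma_1$ is the homomorphism $\ZZ\oplus\ZZ_2^3\to\ZZ_2^4$ sending the free generator to the odd sign class $(1,1,1,-1)$ and the torsion factor isomorphically onto $E$; equivalently it is the projection onto $H^0\big((T^2)^\tau,\ZZ_2\big)=\ZZ_2^4$ induced by reduction modulo $2$ on the free factor. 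Transporting this along $\partial_2$ yields the stated description of $\sigma_2$ and the equality ${\mathit\Sigma}_2(T^2)=\ZZ_2^4$.

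The step I would treat most carefully is the compatibility of the two contributions to $H^0\big((T^2)^\tau,\ZZ_2\big)$: the free factor contributes only its parity (through the odd sign $\sigma_1(P)$), while the torsion factor contributes the even classes $E$, and one must check that these together surject onto $\ZZ_2^4$ while no even power of $P$ produces an odd sign. This is precisely what the explicit cocycle computation giving $\sigma_1(P)=(1,1,1,-1)$ and Proposition \ref{prop:imageE} establish, so the verification reduces to bookkeeping with these established facts rather than to any new analysis; the remaining routine point is that, as always, the norm homomorphism lands in the trivial-sign classes, which is what makes the reduction-mod-$2$ description of $\sigma_2$ consistent.
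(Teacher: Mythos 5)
Your proposal retraces the paper's own route step for step---exactness of \eqref{eq:torus-long}, the computation $H^1(T^2;\ZZ_2,\cU^0)=\ZZ\oplus\ZZ_2^3$, the sign choice $(1,1,1,-1)$ of the generator, Proposition~\ref{prop:imageE}, and transport of $\sigma_1$ along $\partial_2$ via Proposition~\ref{prop:sign-connecting}---but the decisive step, where you ``read off'' the cokernel as $\ZZ\oplus\ZZ_2^3$, is a non sequitur, and in fact it fails. The norm homomorphism $Q\mapsto Q\otimes\tau^{-1}(Q)$ is injective: $\tau$ preserves the orientation of $T^2$, so $\tau^*=\id$ on $H^2(T^2,\ZZ)$ and the norm doubles the Chern class. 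Its image is therefore an infinite cyclic subgroup of index $16$ in $\ZZ\oplus\ZZ_2^3$; concretely it is $2\ZZ\oplus 0$, since $Q\otimes\tau^{-1}(Q)$ carries the swap structure with trivial sign choice, and by the exactness proposition following Proposition~\ref{prop:imageE} two $\ZZ_2$-structures on the same bundle with equal sign choices are isomorphic. By exactness of \eqref{eq:torus-long}, $H^2(T^2;\ZZ_2,\cU^1)$ is the quotient of $\ZZ\oplus\ZZ_2^3$ by this subgroup, hence a \emph{finite} group of order $16$: quotienting a group of rank one by a subgroup of rank one can never return $\ZZ\oplus\ZZ_2^3$. (Under the other reading of your phrase, where the image is all even-Chern-class classes $2\ZZ\oplus\ZZ_2^3$, the cokernel would be $\ZZ_2$, equally incompatible with the claim and with surjectivity of $\sigma_2$.)

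The internally consistent conclusion of your own ingredients is that $H^2(T^2;\ZZ_2,\cU^1)\cong\ZZ_2\oplus\ZZ_2^3\cong\ZZ_2^4$ and $\sigma_2$ is an \emph{isomorphism}: since $\sigma_2\circ\partial_2=\sigma_1$ surjects onto $H^0\big((T^2)^\tau,\ZZ_2\big)=\ZZ_2^4$ and the cokernel has order $16$, the quotient must be $\ZZ_2^4$ (ruling out the only other candidate $\ZZ_4\oplus\ZZ_2^2$). The description ``reduction modulo $2$ on the first factor'' is correct only for the composite $\sigma_2\circ\partial_2=\sigma_1$ on the presentation $\ZZ\oplus\ZZ_2^3$, not for $\sigma_2$ on a group containing a $\ZZ$ summand; no such summand can exist here, because in contrast to $T^3$, where the $\ZZ$ factor is detected by the complex Dixmier--Douady class in $H^3(T^3,\ZZ)=\ZZ$, one has $H^3(T^2,\ZZ)=0$. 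To be fair, the same slip appears in the paper's own prose, in the printed Proposition, and in the table of Section~\ref{sec:Intro}, so your write-up mirrors the source faithfully; nevertheless the argument you give does not establish the statement as printed---what it actually proves is that the Real Brauer group of $T^2$ is $\ZZ_2^4$, that $\sigma_2$ is an isomorphism, and hence that every sign choice is geometric, ${\mathit\Sigma}_2(T^2)=\ZZ_2^4$.
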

 
\subsection{The three-torus $S^{1,1} \times S^{1,1} \times S^{1,1}$}

 With the conjugation action on each circle factor, the involution on the
 three-dimensional torus $T^3$ has eight fixed points and hence there are 
 $2^8$ possible sign choices. Since $T^3 = \RR^3 / \ZZ^3 $, the fixed points are the image of the eight points
 $(\rho_1, \rho_2, \rho_2)$ where $\rho_i\in\{0,\frac12\}$.   We can regard the fixed points on the three-torus as forming a cube.  Let $E\subset H^0\big( (T^3)^\tau, \ZZ_2\big)$ be the subgroup of all sign choices with the property that their product around each square of the cube equals $1$. 
Then it is straightforward to see that $E$ contains $16$ elements. We
now show that the sequence
  $$
  H^0(T^3, \cU(1)) \xrightarrow{f \mapsto f\, (\bar f \circ \tau) }  H^0(T^3; \ZZ_2, \cU^1) 
 \xrightarrow{ \ \sigma_0 \ } E \longrightarrow 0
 $$
is well-defined, that is, $\sigma_0$ takes values in $E$, and that it
is exact.   Consider $f \colon T^3 \to U(1)$ satisfying $f(z) \, \overline{f(\bar z)} = 1$ for all  $z \in T^3$, and lift $f$ to the 
 universal cover $\hat f \colon \RR^3 \to \RR$.  If $\boldsymbol{ e}_1, \boldsymbol{ e}_2, \boldsymbol{ e}_3$ is the standard basis of $\RR^3$, then 
 $\hat f(\boldsymbol{ x} + \boldsymbol{ e}_i) = \hat f(\boldsymbol{
   x}) + n_i$ for $n_i \in \ZZ$ and $\hat f(\boldsymbol{ x}) + \hat
 f(-\boldsymbol{ x}) = k \in \ZZ$, for all
 $\boldsymbol{x}\in\RR^3$. If $\epsilon_i
 \in \{ 0, 1 \}$ for $i = 1, 2, 3$, then it follows that
 \begin{equation}
 \label{eq:3torus-sign}
 \hat f\Big( \frac{1}{2} \, \mbox{$\sum\limits_{i=1}^3$}\, \epsilon_i\,
 \boldsymbol{ e}_i  \Big) = \frac{1}{2}\, \bigg(k +
 \sum\limits_{i=1}^3 \, n_i\, \epsilon_i\bigg) \ .
 \end{equation}
Let $i, j, k$ be distinct and $\epsilon \in \{ 0, 1\}$, then 
 adding up around a square gives
$$
 \hat f\big(\tfrac{1}{2}\, \epsilon\, \boldsymbol{ e}_k\big) {+} \hat f\big( \tfrac{1}{2}\, \epsilon\, \boldsymbol{ e}_k {+} \tfrac{1}{2}\, \boldsymbol{ e}_i \big) {+} \hat f\big( \tfrac{1}{2}\, \epsilon\, \boldsymbol{ e}_k + \tfrac{1}{2}\, e_j\big) {+}
  \hat f\big( \tfrac{1}{2}\, \epsilon \, \boldsymbol{ e}_k + \tfrac{1}{2}\, \boldsymbol{ e}_i + \tfrac{1}{2}\, \boldsymbol{ e}_j\big) 
  = 2\, k {+} 2\, \epsilon\, n_k {+} n_i {+} n_j \ \in \ \ZZ
$$
and  so multiplying the values of $f$ around the corresponding square of fixed points gives $1$. 
  Hence the image of $\sigma_0$ is contained in $E$.
  
 Next we show that $\sigma_0$ is surjective.  For each $S^1$ factor there are two Real functions $1$ and $z$, combining these makes
 eight Real functions and applying $\pm\, 1$ makes $16$.  They are all distinct so $\sigma_0$ is surjective. 
 
 Finally we prove exactness in the centre.  That is, if $\sigma_0(f) =
 1$ we  show that there is a function $g : T^3 \to U(1)$ such that
 $f(z) = g(z)\, \overline{g(\bar z)}$.  Lift $f$ to $\hat f \colon \RR^3 \to \RR$ as above.  Then from \eqref{eq:3torus-sign}
 we deduce that 
 $$
 \frac{1}{2}\,\bigg(k + \sum_{i=1}^3\, n_i\, \epsilon_i\bigg)
 $$
 is always an integer and hence that $k, n_1, n_2, n_3 \in 2\, \ZZ$.
 The function
 $$
 \hat g(\boldsymbol{ x}) = \tfrac12\, \hat f(\boldsymbol{ x})
 $$
 satisfies $\hat g(\boldsymbol{ x} + \boldsymbol{ e}_i) - \hat
 g(\boldsymbol{ x}) \in \ZZ$ and so descends to a 
 well-defined function $g \colon T^3 \to U(1)$. Moreover we obtain 
 \begin{align*}
\hat g(\boldsymbol{ x}) - \hat g(-\boldsymbol{ x}) = \tfrac{1}{2}\,
   \hat f(\boldsymbol{ x}) - \tfrac{1}{2}\, \hat f(-\boldsymbol{ x}) =
   \hat f(\boldsymbol{ x}) - \tfrac{1}{2} \, k
              \end{align*}
              as required.  This gives the claimed exactness.   This
              also shows that $E \simeq \ZZ_2^4$.
  
  Consider now the exact sequence
  \begin{equation*}
 H^1(T^3, \cU(1)) \xrightarrow{P\mapsto P  \otimes \tau^{-1}(P) }
  H^1(T^3; \ZZ_2, \cU^0) \xrightarrow{ \ \partial_2 \ }  H^2(T^3;
  \ZZ_2, \cU^1 ) \longrightarrow H^2(T^3, \cU(1)) \ .
 \end{equation*}
It follows from the K\"unneth theorem that
$
  H^1(T^3, \cU(1)) = \ZZ^3,
$
  and each of the generating line bundles is a pullback via a projection $T^3 \to T^2$ of the standard bundle
  constructed in Section~\ref{sec:2torus}, so each admits a $\ZZ_2$-action.  Hence 
  $$
   H^1(T^3; \ZZ_2, \cU^0) \simeq \ZZ^3 \oplus E \simeq \ZZ^3 \oplus
   \ZZ_2^4 \ ,
   $$
with the sign choice map $\sigma_1$ being the projection of $\ZZ^3 \oplus \ZZ_2^4
$ onto $\ZZ_2^7$.  The image consists of all sign choices whose
product over the eight fixed points is equal to $1$. 
The map $P\mapsto P  \otimes \tau^{-1}(P)$ is  multiplication by $2$ on the $\ZZ^3$ factor. 

Finally we consider Real bundle gerbes on $T^3$.  Let $Y = \RR^3 \to
T^3$ be the universal cover and define $g \colon Y^{[3]} \to U(1)$
by 
$$
g(\boldsymbol{ x}, \boldsymbol{ y}, \boldsymbol{ z}) = \exp\big(2 \pi
\ii x^1\, (x^2 - y^2)\, (y^3-z^3)  \big) \ .
$$
  It is straightforward to show that $\delta(g) \colon Y^{[4]} \to U(1) $
 equals $1$.  This gives a bundle gerbe $P = Y^{[2]} \times U(1)$ with
 multiplication 
$$
\big((\boldsymbol{ x}, \boldsymbol{ y}), u \big) \,
\big((\boldsymbol{ y}, \boldsymbol{ z}), v \big) = \big((\boldsymbol{
  x}, \boldsymbol{ z}) \,,\, g(\boldsymbol{ x}, \boldsymbol{ y}, \boldsymbol{ z})\, u\, v \big) \ .
$$
There is a lift of the Real structure on $T^3$ to $Y$ by
$\tau(\boldsymbol{ x}) = -\boldsymbol{ x}$. Since $g(-\boldsymbol{
  x}, -\boldsymbol{ y}, -\boldsymbol{ z}) = \overline{ g(\boldsymbol{
    x}, \boldsymbol{ y}, \boldsymbol{ z}) },$ we obtain a Real structure on $P$ given by  $\tau( (\boldsymbol{ x}, \boldsymbol{ y}), u) = ( (-\boldsymbol{ x}, -\boldsymbol{ y}), \bar u)$. 
One easily verifies that this bundle gerbe has complex Dixmier--Douady class generating $H^2(T^3, \cU(1)) = \ZZ$. Hence 
$$
H^2(T^3; \ZZ_2, \cU^1 ) = \ZZ_2^3 \oplus E \oplus \ZZ \ .
$$
We  calculate the sign of the Real bundle gerbe  following the
procedure of Section~\ref{sec:2torus}. 
We 
take $\boldsymbol{ x} \in Y$ so that $(\boldsymbol{ x}, -\boldsymbol{ x}) \in Y^{[2]}$ and pick $p = (\boldsymbol{ x}, -\boldsymbol{ x}, 1) \in P_{(\boldsymbol{ x}, -\boldsymbol{ x})}$.  Then
$\tau(p) = ( -\boldsymbol{ x}, \boldsymbol{ x}, 1)$ and 
$$
p\, \tau(p) = \big((\boldsymbol{ x}, \boldsymbol{ x}), g(\boldsymbol{
  x}, -\boldsymbol{ x}, \boldsymbol{ x}) \big) = \big( (\boldsymbol{
  x}, \boldsymbol{ x}), \exp(- 8 \pi \ii x^1 \, x^2 \, x^3 ) \big) \ . 
$$
Thus the sign choice is $1$ except at $(\frac12, \frac12, \frac12)$ where it equals $-1$.

 \begin{proposition}
The Real Brauer group of the three-torus $T^3 = S^{1, 1} \times S^{1, 1} \times S^{1, 1}$ is
  $$
 H^2(T^3; \ZZ_2, \cU^1 )  = \ZZ^7_2  \oplus \ZZ \ ,
 $$
 and  the sign choice map  $\sigma_2$  is the projection onto
 $H^0\big((T^3)^\tau,\ZZ_2\big) = \ZZ_2^8$
 induced by  reduction modulo 2 
 on the last factor.
 \end{proposition}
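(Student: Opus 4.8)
The plan is to assemble the ingredients already established in this section, of which the proposition is the synthesis. First I would record the Real Brauer group. The explicit Real bundle gerbe constructed above has complex Dixmier--Douady class generating $H^2(T^3,\cU(1))=\ZZ$, so the final map $c$, the complex Dixmier--Douady class, in the exact sequence
$$
H^1(T^3,\cU(1)) \xrightarrow{P\mapsto P\otimes\tau^{-1}(P)} H^1(T^3;\ZZ_2,\cU^0) \xrightarrow{\ \partial_2\ } H^2(T^3;\ZZ_2,\cU^1) \xrightarrow{\ c\ } H^2(T^3,\cU(1))=\ZZ
$$
is surjective. Its kernel is $\im(\partial_2)$, which is $H^1(T^3;\ZZ_2,\cU^0)=\ZZ^3\oplus E$ modulo the image of $P\mapsto P\otimes\tau^{-1}(P)$; since the latter is multiplication by $2$ on the $\ZZ^3$ summand, this kernel is $\ZZ_2^3\oplus E\cong\ZZ_2^7$. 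As $\ZZ$ is free, the resulting short exact sequence
$$
0\longrightarrow \ZZ_2^7 \longrightarrow H^2(T^3;\ZZ_2,\cU^1)\xrightarrow{\ c\ } \ZZ\longrightarrow 0
$$
splits, with the explicit bundle gerbe furnishing a generator of the $\ZZ$ summand, so $H^2(T^3;\ZZ_2,\cU^1)=\ZZ_2^7\oplus\ZZ$.

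Next I would determine $\sigma_2$ on the two summands separately. On the torsion summand $\ZZ_2^7=\im(\partial_2)$, Proposition~\ref{prop:sign-connecting} gives $\sigma_2\circ\partial_2=\sigma_1$; since $\sigma_1$ was shown to be the projection of $\ZZ^3\oplus E$ onto the subgroup of $H^0((T^3)^\tau,\ZZ_2)$ of sign choices whose product over the eight fixed points equals $1$, and $\ker\sigma_1=2\ZZ^3=\ker\partial_2$, the map $\sigma_2$ restricts to an isomorphism of the torsion summand onto this even-product subgroup. On the free summand, the sign computation above shows that the generator maps to the sign choice equal to $-1$ at $(\tfrac12,\tfrac12,\tfrac12)$ and $+1$ at the remaining seven fixed points, an element of odd product.

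Finally I would observe that the even-product subgroup together with this one odd element generate all of $H^0((T^3)^\tau,\ZZ_2)=\ZZ_2^8$, so $\sigma_2$ is surjective. Choosing a basis of $\ZZ_2^8$ whose first seven vectors span the even-product subgroup and whose eighth is the image of the gerbe generator, the map $\sigma_2$ becomes the identity on the $\ZZ_2^7$ summand and reduction modulo $2$ on the $\ZZ$ summand, exactly the stated projection. The step requiring the most care, which I would treat as the crux, is the bookkeeping identifying $\im(\sigma_1)$ precisely with the even-product subgroup and checking that the gerbe generator lands in its complement, so that the assembled homomorphism is genuinely this projection rather than merely a surjection.
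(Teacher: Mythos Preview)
Your proposal is correct and follows essentially the same route as the paper: the proposition is stated there as the synthesis of the calculations carried out throughout the subsection, and you have organised exactly those ingredients --- the exact sequence with $\partial_2$, the identification $H^1(T^3;\ZZ_2,\cU^0)\simeq\ZZ^3\oplus E$ with the map $P\mapsto P\otimes\tau^{-1}(P)$ being multiplication by~$2$, the explicit Real bundle gerbe realising the generator of $H^2(T^3,\cU(1))$, and its sign computation --- into a coherent argument. Your observation that $\ker\sigma_1=2\ZZ^3=\ker\partial_2$, so that $\sigma_2$ is injective on the torsion summand, makes explicit a step the paper leaves implicit.
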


%%%%%%%%%%%%%%%%%%%%%%%%%%%%%%%%%%%%%%%%%%%%%%%%%%%%%%%%%%%%%%%%%%

\section{$\KR$-theory with sign choices}
  \label{sec:KR-theory}

In this final section we make contact with the prescription of~\cite{GaoHor}
for computing the sign choices in orientifold backgrounds of type~II
string theory whose complex Dixmier--Douady
class is trivial, that is, when $H=\dd B$ globally. This enables a general $\KR$-theory
classification of D-brane charges in the cases when the sign choice
map is not constant over the orientifold planes. 
  
\subsection{The Gao--Hori construction}
\label{sec:GaoHori}

From explicit worldsheet considerations, Gao and Hori in~\cite{GaoHor}
gave the relevant geometric ingredients required of any type~II string
orientifold construction with topologically trivial $H$-flux, which we briefly
review and then translate into the language of the present paper. Let
$(M,\tau)$ be an orientifold. Then the data of~\cite{GaoHor} consists
of a quadruple $(B,L,\alpha,c)$ where:
\begin{itemize}
\item \ $B\in\Omega^2(M)$ is a globally defined $B$-field on $M$.
\item \ $L$ is a line bundle over $M$ called the twist bundle.
\item \ $\alpha$ is a connection on $L$ called the twist connection.
\item \ $c$ is a global section of $\tau^{-1}(L)\otimes L^*$ called the
  crosscap section.
\end{itemize}
The line bundle with connection $(L,\alpha)$ originates from requiring
invariance of the $B$-field amplitude (see
Section~\ref{sec:Bfieldampl}) under the parity transform
$\hat\phi\mapsto\tau\circ\hat\phi\circ\Omega$ for all string
worldsheets $\hat\phi:\widehat{\Sigma}\to M$, which for a globally defined $B$-field implies that the two-form $B+\tau^*(B)$ has integral periods, and so represents a class in $H^2(M,\ZZ)$ or equivalently the curvature of a connection on a line bundle over $M$. Invariance of D-brane Chan--Paton factors under the parity transform then requires that the twist bundle $L \to M$ be equivariant and that the twist connection $\alpha$
on $L$ is a $\ZZ_2$-invariant connection. The crosscap section $c$ is a $\ZZ_2$-structure
on the twist bundle, and altogether it follows that the quadruple $(B,L,\alpha,c)$ is subjected to the following constraints:
\begin{itemize}
\item[(C1)] \ $\dd\alpha = B+\tau^*(B)$.
\item[(C2)] \ The connection $\tau^*(\alpha)-\alpha$ of
  $\tau^{-1}(L)\otimes L^*$ is flat and has trivial holonomy.
\item[(C3)] \ $c$ is a parallel section with respect to the connection
  $\tau^*(\alpha)-\alpha$, that is, $\dd c+\big(\tau^*(\alpha)-\alpha\big)\,c=0$, such that $c\,\tau^*(c)=1$.
\end{itemize}
The condition (C3) implies that $\dd c=0$ and $c^2=1$ on the fixed point set $M^\tau$, and so the crosscap section determines a sign choice $[c]\in H^0(M^\tau,\ZZ_2)$, which agrees with our definition of sign choice from Section~\ref{sec:signbundle}. When the twist $(L,\alpha)$ is trivial, $c$ is constant on $M$ and all O-planes have the same sign choice. For non-trivial twists, the values of $c$ can differ from one O-plane to another.

We will now show that the twist connection $\alpha$
on $L$ gives rise to a Real
bundle gerbe connection on a trivial bundle gerbe $(Q, Y)
= \partial_2(L)$, and the $B$-field gives rise to a Real curving for this connection.
For this, we slightly unravel the coboundary map from the definition
in~\cite{HekMurSza}. Given the equivariant bundle $L \to M$, so that $\tau^{-1}(L) = L$, let $Y = M \times \ZZ_2$ with the involution $\tau \colon (m , x)  \mapsto (\tau(m), x + 1)$ covering $\tau$ on $M$. Let $\pi_M \colon Y \to M$ be the projection. Then $Y$ is two copies of $M$ labelled by $0$ and $1$ so that any bundle $Q \to Y$ is a pair of bundles $(Q_0, Q_1)$
on $M \times \{0\}$ and $M \times \{1\}$, respectively. For such a
bundle $Q$ we have $\tau^{-1}(Q) = (\tau^{-1}(Q_1), \tau^{-1}(Q_0))$, and if $L \to
M$ is a line bundle then 
$\pi_M^{-1}(L) = (L, L) \to Y$.  
Following \cite{HekMurSza} we define the Real bundle gerbe 
over $M$ which is the image of $L$ under the connecting homomorphism
as $Q \to Y^{[2]} = M \times \ZZ_2\times \ZZ_2$ with $Q
= \delta(U(1), L)$, where here $U(1)$ denotes the trivial bundle
over $M$.
Explicitly, $Q$ is the disjoint union of four bundles over $M$ labelled by 
elements of $\ZZ_2\times\ZZ_2$ given by 
$$
Q_{(0,0)} = U(1) \ , \quad Q_{(0, 1)} = L \ , \quad Q_{(1, 0)} = L^*
\qquad\text{and}\qquad Q_{(1, 1)} = U(1) \ .
$$
The bundle gerbe multiplication is made up of the obvious contractions and the Real structure 
follows from 
\begin{align*}
\tau^{-1}(Q^*)_{(0, 0)} &= \tau^{-1}(Q^*_{(1, 1)}) = \tau^{-1}(U(1)^*)
                          = U(1) = Q_{(0,0)} \ , \\[4pt]
\tau^{-1}(Q^*)_{(1, 0)} &= \tau^{-1}(Q^*_{(0, 1)}) =\tau^{-1}(  L^* )
                          = L^* = Q_{(1, 0)} \ , \\[4pt]
\tau^{-1}(Q^*)_{(0, 1)} &= \tau^{-1}(Q^*_{(1, 0)}) =\tau^{-1}(  L) = L
                          = Q_{(0, 1)} \ , \\[4pt]
\tau^{-1}(Q^*)_{(1, 1)} &= \tau^{-1}(Q^*_{(0, 0)}) = \tau^{-1}(U(1)^*)
                          = U(1) = Q_{(1,1)} \ .
\end{align*}

We can pick a connection $\alpha$ on $L$ which is invariant by picking any
connection and averaging it.  Being invariant it automatically
satisfies the conditions (C2) and (C3) above. 
Then the bundle $Q$ inherits a connection which is a bundle gerbe connection and also Real. 
If we call this bundle gerbe connection $A^Q$, then on the four components of $Y^{[2]}$ we have
$$
A^Q_{(0, 0)} = 0 \ , \quad A^Q_{(0,1)} = \alpha \ , \quad A^Q_{(1,0)}
= -\alpha \qquad \mbox{and} \qquad A^Q_{(1, 1)} = 0 \ ,
$$
so its curvature $F^Q=\dd A^Q$ has components 
$$
F^Q_{(0, 0)} = 0 \ , \quad F^Q_{(0,1)} = \dd\alpha \ , \quad
F^Q_{(1,0)} = -\dd\alpha \qquad \mbox{and} \qquad F^Q_{(1, 1)} = 0 \ .
$$
We can take the curving on $Y$ to have components $f^Q = (-\dd\alpha, \dd\alpha)$.  The $B$-field 
is a two-form on $M$ satisfying $B + \tau^{*}(B) = \dd\alpha $. If we define a two-form $f$ on $Y$
by $ f = (f_0 , f_1) = (B, -\tau^*(B))$, then to be a curving we would need $F^Q_{(i, j)} = f_j - f_i$,
which is indeed satisfied. 

\subsection{Twisted $K$-theory}

Gao and Hori  use the orientifold data $(B,L,\alpha,c)$ in \cite{GaoHor} to introduce a new variant of $\KR$-theory groups $\KR(M;c)$ accommodating non-constant sign choices, and conjecture a certain Fredholm module formulation for this $K$-theory.
We will show that the $K$-theory of vector bundles \emph{twisted
  by equivariant line bundles} introduced in \cite{GaoHor} is simply
$\KR$-theory twisted by a certain Real bundle gerbe. From our general twisted $\KR$-theory construction from~\cite{HekMurSza}, this implies in
particular that the twisted $K$-theory construction of \cite{GaoHor}
is a generalised cohomology theory, and the Fredholm module description follows immediately.

First we note that there is a natural notion of modules associated to each invariant $p$-gerbe for $p=-1,0,1$. These modules form a semi-group, whose Grothendieck group completion determines a $K$-theory group. The respective notions of module are defined as follows.
\begin{definition}\label{def:modules}
Let $(M,\tau)$ be an orientifold.
\begin{itemize}
\item[(a)]
If $c \colon M \to U(1)$ is a Real function, then a \emph{$c$-module} is a vector bundle $E \to M$ with a 
 conjugate linear map $\tau_E:E \to E$ covering $\tau$ and  satisfying $\tau_E^2 = c \id_E$.  In particular if $c = 1$ then $E$ is just a Real 
vector bundle in the sense of Atiyah \cite{Ati}.
\item[(b)]
If $L \to M$ is a $\ZZ_2$-equivariant line bundle, then an \emph{$L$-module}
is a vector bundle $E \to M$ with a linear map $E \to \tau^{-1}(E^*) \otimes L$ such that the composition
$$
E \longrightarrow \tau^{-1}(E^*) \otimes L \longrightarrow E  \otimes \tau^{-1}(L^*) \otimes L  
$$
is $\id_E \otimes c$, where $c \in {\mathit\Gamma}\big(\tau^{-1}(L^*) \otimes L \big) $ is the $\ZZ_2$-structure on $L$.
\item[(c)]
If $\cG=(P,Y)\To M$ is a Real bundle gerbe, then a \emph{$\cG$-module}
  is a vector bundle $E\to Y$ with a bundle map $P\otimes\pi_2^{-1}(E)\to\pi_1^{-1}(E)$ over $Y^{[2]}$ satisfying the natural associativity condition that the diagram
$$
\xymatrix{
 & P_{(y_1,y_2)}\otimes P_{(y_2,y_3)}\otimes E_{y_3} \ar[dl] \ar[dr] & \\
 P_{(y_1,y_2)}\otimes E_{y_2} \ar[dr] & & P_{(y_1,y_3)}\otimes E_{y_3} \ar[dl] \\
 & E_{y_1} &
}
$$
commutes on any triple $(y_1,y_2,y_3)\in Y^{[3]}$, together with a conjugate linear map $\tau_E \colon E \to E$ commuting with 
the Real structure on $Y$ such that the
 diagram
$$ 
\xymatrix{ 
P_{(y_1, y_2)} \otimes E_{y_2} \ \ar[d]_{\tau \otimes \tau_E} \ar[r] & \ E_{y_1} \ar[d]^-{\tau_E} \\ 
 P_{(\tau(y_1), \tau(y_2))}^* \otimes E^*_{\tau(y_2)} \ \ar[r] & \ E^*_{\tau(y_1)} } 
$$
commutes for every pair $(y_1, y_2) \in Y^{[2]}$.
\end{itemize}
\end{definition}

The case of equivariant line bundle modules is the construction of Gao and Hori 
from \cite{GaoHor}, where the orientifold isomorphism $E \to
\tau^{-1}(E^*) \otimes L$ is the action of the parity transform on the
Chan--Paton bundle $E$. The case of Real bundle gerbe modules leads to the geometric realisation $\KR_{\rm bg}(M,P)$ of twisted $\KR$-theory from~\cite{HekMurSza}.  \begin{proposition}
There is an isomorphism between the $K$-theory of  invariant $p$-gerbes $\mathcal G_p$ for $p=-1,0$, and the $K$-theory of their image 
$\partial_{p+2} \mathcal G_p$ under the connecting homomorphism.
\end{proposition}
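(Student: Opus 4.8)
The plan is to upgrade the connecting homomorphism from a map of cohomology classes to an additive equivalence of the associated module categories, and then to pass to Grothendieck completions. Since each $K$-group is by construction the group completion of the abelian monoid of isomorphism classes of modules under direct sum, it suffices, for each of $p=-1,0$, to produce an additive equivalence between the category of $\mathcal G_p$-modules and the category of $\partial_{p+2}(\mathcal G_p)$-modules: such an equivalence carries $\oplus$ to $\oplus$, hence induces a monoid isomorphism and therefore the asserted isomorphism of $K$-theory groups.

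For the principal case $p=0$ I would use the explicit geometric model of the coboundary from Section~\ref{sec:GaoHori}. Writing $L$ for the equivariant line bundle with $\ZZ_2$-structure (crosscap section) $c$, its image $\partial_2(L)=(Q,Y)$ has $Y=M\times\ZZ_2$ with $\tau(m,i)=(\tau(m),i+1)$ and $Q_{(0,0)}=Q_{(1,1)}=U(1)$, $Q_{(0,1)}=L$, $Q_{(1,0)}=L^*$. A bundle over $Y$ is a pair $(\tilde E_0,\tilde E_1)$ of bundles on $M$, and over $Y^{[2]}$ the bundle gerbe module action of Definition~\ref{def:modules}(c) reduces to invertible maps $Q_{(i,j)}\otimes\tilde E_j\to\tilde E_i$; since $Q_{(0,1)}=L$ and $Q_{(1,0)}=L^*$ are line bundles, the off-diagonal components force $\tilde E_1\cong E\otimes L^*$ for $E:=\tilde E_0$, with associativity automatic for the evident contractions. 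A $(Q,Y)$-module is thus determined by $E$ together with the conjugate-linear Real datum $\tau_{\tilde E}$ covering the sheet-swap $\tau$. Being conjugate-linear, $\tau_{\tilde E}$ corresponds, after fixing a Hermitian metric, to a linear map $E\to\tau^{-1}(\tilde E_1^*)$, and using $\tilde E_1\cong E\otimes L^*$ together with the equivariant identification $\tau^{-1}(L)\cong L$ this is exactly a map $\beta\colon E\to\tau^{-1}(E^*)\otimes L$. I would therefore define the functor $\Phi$ on an $L$-module $(E,\beta)$ by $\tilde E_0=E$, $\tilde E_1=E\otimes L^*$, with the gerbe action given by the evident contractions and $\tau_{\tilde E}$ reconstructed from $\beta$, $c$ and the metric, and the inverse functor $\Psi$ by reading off $(E,\beta)$ from a $(Q,Y)$-module as above. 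One then checks that $\Phi$ and $\Psi$ are mutually inverse up to natural isomorphism and additive.

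The case $p=-1$ proceeds by the same mechanism and is in fact simpler: the image $L=\partial_1(c)$ has trivial underlying complex line bundle, carrying only the $\ZZ_2$-structure determined by the Real function $c$. Fixing a Hermitian metric to identify $\bar E\cong E^*$ turns the conjugate-linear map $\tau_E$ of a $c$-module into a linear map $E\to\tau^{-1}(E^*)$, which, since $L$ is complex-trivial, is the same as an $L$-module map $E\to\tau^{-1}(E^*)\otimes L$, and the constraint $\tau_E^2=c\,\id_E$ becomes the composition condition $\id_E\otimes c$ of Definition~\ref{def:modules}(b). The main obstacle, and the step that demands genuine care rather than routine verification, is the compatibility of the \emph{conjugate-linear} structures on the two sides in the general $p=0$ case. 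Concretely, one must verify that the square of Definition~\ref{def:modules}(c) built from $\tau\otimes\tau_E$, together with the involutivity $\tau_{\tilde E}^2=\id$ over $Y$, descends \emph{precisely} to the composition condition of Definition~\ref{def:modules}(b), with the crosscap section $c$ appearing with the correct sign, and that the identification $\tau^{-1}(L)\cong L$ used throughout is the equivariant one supplied by $c$. Since $\partial_2(L)$ has trivial complex Dixmier--Douady class, no further gerbe-theoretic input is required; once this bookkeeping is settled the remaining associativity and naturality checks are straightforward, and the resulting additive equivalences give the claimed isomorphisms of $K$-theory.
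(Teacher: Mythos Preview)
Your proposal is correct and reaches the same conclusion as the paper, but the route you take through the $p=0$ case is organised differently. You use the \emph{gerbe action} first to pin down $\tilde E_1\cong E\otimes L^*$ and then extract the $L$-module map $\beta$ from the \emph{Real datum} $\tau_{\tilde E}$ (via a Hermitian metric), with the composition condition coming from the involutivity $\tau_{\tilde E}^2=\id$ together with the compatibility square. The paper reverses these roles: it uses the Real structure first, writing a $(Q,Y)$-module as $F=(E,\tau^{-1}(E^*))$ directly from $\tau^{-1}(F)=F^*$, and then the gerbe action $Q_{(0,1)}\otimes F_1\to F_0$ \emph{is} the map $L\otimes\tau^{-1}(E^*)\to E$; the composition condition in Definition~\ref{def:modules}(b) is then a consequence of the bundle gerbe module \emph{associativity} (the fact that $F_1\simeq Q_{(1,0)}\otimes Q_{(0,1)}\otimes F_1\simeq F_1$ is the identity) together with the $\ZZ_2$-action $c$ on $L$.

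Both orderings produce an additive equivalence of module categories, but the paper's choice has two practical advantages. First, it avoids the auxiliary Hermitian metric you introduce to convert conjugate-linear maps to linear ones; the Real structure already provides the identification $F_1=\tau^{-1}(E^*)$ without any extra data to be shown irrelevant. Second, the delicate point you flag, namely matching the conjugate-linear compatibility square of Definition~\ref{def:modules}(c) with the crosscap section $c$ appearing with the correct sign, essentially disappears in the paper's ordering because the $L$-module map is read off from a single gerbe action arrow rather than reconstructed from $\tau_{\tilde E}$ through several identifications. Your approach buys a more symmetric treatment of the two pieces of structure but at the cost of the extra bookkeeping you yourself identify as the main obstacle.
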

\begin{proof}
The proof is simply a matter of translating what the respective modules mean in each case and verifying that they 
agree.  

First we consider the relationship between (a) and (b). If $L$ is the
trivial bundle on $M$, then it carries a natural $\ZZ_2$-action 
and any other action is obtained by multiplication by a Real function $c$.  Indeed this construction is 
the coboundary map $\partial_1$ in \eqref{eq:connectinghoms}.  Conversely starting with such a bundle $L$, the construction in (b) reduces to (a). 

Next we look at the relationship between equivariant line bundles and Real bundle gerbes. 
Suppose that $L \to M$ is an equivariant  line bundle. The coboundary map
$(Q,Y)=\partial_2(L)$ was constructed explicitly in
Section~\ref{sec:GaoHori} above. 
Let $F$ be a Real bundle gerbe module for $Q$. Then $\tau^{-1}(F) =
F^*$, so it has the form $F = (F_0, F_1) = ( E, \tau^{-1}(E^*))$ for a
vector bundle $E \to M$.  The Real
bundle gerbe multiplication gives isomorphisms $Q_{(i, j)}\otimes F_j \to F_i$ which are associative in the sense that the map
$$
Q_{(k, i)} \otimes ( Q_{(i, j)}\otimes F_j )  \longrightarrow Q_{(k, i)} \otimes F_i \longrightarrow F_k 
$$
is equal to 
$$
(Q_{(k, i)} \otimes  Q_{(i, j)} ) \otimes F_j  \longrightarrow Q_{(k,
  j)} \otimes F_j \longrightarrow F_k  \ .
$$
Using the definitions of $Q$ and $F$, we require   the isomorphisms
$
L \otimes \tau^{-1}(E^*) \to  E
$
or $ E \to L \otimes \tau^{-1}(E^*) $. The condition on 
$$
E \longrightarrow \tau^{-1}(E^*) \otimes L \longrightarrow E  \otimes \tau^{-1}(L^*) \otimes L
$$
follows from the associativity conditions on the bundle gerbe module, which arise
from the fact that 
$
 F_1 \simeq  Q_{(1, 0)} \otimes Q_{(0, 1)} \otimes F_1 \simeq  F_1
 $
 is the identity map and the  $\ZZ_2$-action $\tau^{-1}(L^*) = L^*$.
Furthermore, it is straightforward to check that this argument can be reversed, so that a module $E$ for $L$ defines a module $( E, \tau^{-1}(E^*))$ for the bundle gerbe $Q$. 

In both of these cases it is clear that the bijection we have
established is an isomorphism of semi-groups, and thus establishes an
isomorphism of the corresponding Grothendieck groups. 
\end{proof}

\begin{remark}
The modules on $M$ for an equivariant line bundle $(L,c)$ are the same as the Real bundle gerbe modules introduced in~\cite{HekMurSza} for the complex
trivial bundle gerbe $Q$ with non-trivial Real structure given by $(L,c)$. In other words, the Real
bundle gerbe modules for the bundle gerbe $Q$ restrict to orthogonal and quaternionic bundles on the different
connected components of the fixed point set of $(M,\tau)$, and in particular $\KR_{\rm bg}(M,Q)\simeq\KR(M;c)$.
\end{remark}


\begin{thebibliography}{11}

\bibitem{AngSag}
C. Angelantonj and A. Sagnotti.
Open strings.
{\em Phys. Rept.}, 371:1--150, 2002.

\bibitem{Ati} 
M.~F. Atiyah.
\newblock {$K$}-theory and reality.
\newblock {\em Quart.  J. Math. Oxford Ser. (2)}, 17:367--386, 1966.

\bibitem{BGS}
O.~Bergman, E.~G.~Gimon, and S.~Sugimoto.
Orientifolds, RR torsion and $K$-theory.
{\em J. High Energy Phys.}, no. 05, Paper 047, 30 pp, 2001.

\bibitem{CarJohMurWan}
A.~L. Carey, S.~Johnson, M.~K. Murray, D.~Stevenson, and B.-L. Wang.
\newblock Bundle gerbes for {C}hern--{S}imons and {W}ess--{Z}umino--{W}itten
  theories.
\newblock {\em Commun. Math. Phys.}, 259(3):577--613, 2005.

\bibitem{deBDHKMMS}
  J.~de Boer, R.~Dijkgraaf, K.~Hori, A.~Keurentjes, J.~Morgan, D.~R.~Morrison, and S.~Sethi.
Triples, fluxes, and strings.
{\em Adv.\ Theor.\ Math.\ Phys.}, 4:995--1186, 2002.

\bibitem{DFM}
  J.~Distler, D.~S.~Freed, and G.~W.~Moore.
\newblock Orientifold pr\'ecis. 
\newblock {\em Proc. Symp. Pure Math.}, 83:159--171, 2011.

\bibitem{Ros2013}
  C.~Doran, S.~M\'endez-Diez, and J.~Rosenberg.
  T-duality for orientifolds and twisted $\KR$-theory.
  {\em Lett.\ Math.\ Phys.}, 104(11):1333--1364, 2014.

\bibitem{DMR2}
C.~Doran, S.~M\'endez-Diez, and J.~Rosenberg.
  String theory on elliptic curve orientifolds and $\KR$-theory.
  {\em Commun.\ Math.\ Phys.}, 335(2):955--1001, 2015.
  
\bibitem{GaoHor}
  D.~Gao and K.~Hori.
  On the structure of the Chan--Paton factors for D-branes in Type~II orientifolds.
  Preprint arXiv:1004.3972, 2010.

\bibitem{GSW}
K.~Gawedzki, R.~R.~Suszek, and K.~Waldorf.
\newblock Bundle gerbes for orientifold sigma-models.
\newblock {\em Adv. Theor. Math. Phys.}, 15(3):621--688, 2011.

\bibitem{Gro}
A. Grothendieck.
 Sur quelques points d'alg\`ebre homologique II. 
 {\em Tohoku Math. J. (2)}, 9(3):119--221, 1957. 

\bibitem{Gukov}
S.~Gukov.
$K$-theory, reality and orientifolds.
{\em Commun. Math. Phys.}, 210:621--639, 2000.

\bibitem{HekMurSza}
P. Hekmati, M. K. Murray, R. J. Szabo, and R. F. Vozzo.
Real bundle gerbes, orientifolds and twisted
  $\KR$-homology.
  {\em Adv. Theor. Math. Phys.}, 23(8), 2093--2159, 2019.

\bibitem{Hori}
K.~Hori.
D-branes, T-duality and index theory.
{\em Adv. Theor. Math. Phys.}, 3:281--342, 1999.

\bibitem{Keur}
  A.~Keurentjes.
  Classifying orientifolds by flat $n$-gerbes.
  {\em J. High Energy Phys.}, no. 07, Paper 10, 36 pp, 2001.

\bibitem{Mou}
 E. M. Moutuou.
 Twisted groupoid $\KR$-theory.
 Ph.D. Thesis, Universit\'e de Lorraine-Metz and Universit\"at Paderborn, 2012.

\bibitem{Mur}
M.~K. Murray.
\newblock Bundle gerbes.
\newblock {\em J. London Math. Soc. (2)}, 54(2):403--416, 1996.

\bibitem{MurSte2}
M.~K. Murray and D.~Stevenson.
\newblock Bundle gerbes: stable isomorphism and local theory.
\newblock {\em J. London Math. Soc. (2)}, 62(3):925--937, 2000.

\bibitem{OlsSza}
K.~Olsen and R.~J.~Szabo.
Constructing D-branes from $K$-theory.
{\em Adv. Theor. Math. Phys.}, 3:889--1025, 1999.

\bibitem{SSW}
U.~Schreiber, C.~Schweigert and K.~Waldorf.
Unoriented WZW models and holonomy of bundle gerbes.
{\em Commun. Math. Phys.}, 274(1):31--64, 2007.

\bibitem{Ste}
D.~Stevenson.
\newblock Bundle 2-gerbes.
\newblock {\em Proc. London Math. Soc. (3)}, 88(2):405--435, 2004.

\bibitem{Witten}
E.~Witten.
  D-branes and $K$-theory.
  {\em J. High Energy Phys.}, no. 12, Paper 019, 40 pp, 1998.

\end{thebibliography}
\end{document}